\DeclareMathOperator*{\argmin}{arg\,min}
\newcommand{\R}{\mathbb{R}}
\newcommand{\knote}[1]{{\bf{\color{blue}[\tiny Karthik: #1]}}}
\newcommand{\cnote}[1]{{\bf{\color{red}[\tiny Chengyue: #1]}}}
\newcommand{\ynote}[1]{{\bf{\color{orange}[\tiny Yuri: #1]}}}
\newtheorem{theorem}{Theorem}
\newtheorem{claim}[theorem]{Claim}
\newtheorem{lemma}[theorem]{Lemma}
\newtheorem{corollary}[theorem]{Corollary}
\newtheorem{definition}[theorem]{Definition}
\newtheorem{example}[theorem]{Example}
\title{Scarf's Algorithm on Arborescence Hypergraphs}
\author{Karthekeyan Chandrasekaran\thanks{University of Illinois, Urbana-Champaign, karthe@illinois.edu}}
\author{Yuri Faenza\thanks{Columbia University, yuri.faenza@columbia.edu}}
\author{Chengyue He\thanks{Columbia University, ch3480@columbia.edu}}
\author{Jay Sethuraman\thanks{Columbia University, jay@ieor.columbia.edu}}
\affil{} 
\date{}
\begin{document}

\maketitle


\begin{abstract}
Scarf’s algorithm---a pivoting procedure that finds a dominating extreme point in a down-monotone polytope---can be used to show the existence of a fractional stable matching in hypergraphs. The problem of finding a fractional stable matching in a hypergraph, however, is PPAD-complete. In this work, we study the behavior of Scarf's algorithm on arborescence hypergraphs, the family of hypergraphs in which hyperedges correspond to the paths of an arborescence. For arborescence hypergraphs, we prove that Scarf's algorithm can be implemented to find an integral stable matching in polynomial time. En route to our result, we uncover novel structural properties of bases and pivots for the more general family of network hypergraphs. Our work provides the first proof of polynomial-time convergence of Scarf's algorithm on hypergraphic stable matching problems, giving hope to the possibility of polynomial-time convergence of Scarf's algorithm for other families of polytopes.
\end{abstract}

\section{Introduction}
Scarf~\cite{scarf1967core} proved the existence of a core allocation in a large class of cooperative games with non-transferable utility.
Key ingredients in this proof include a lemma---{\em Scarf's lemma}---that asserts the existence of a dominating extreme point in certain polytopes, and a pivoting procedure---{\em Scarf's algorithm}---to find one.
Scarf's results have profoundly influenced subsequent research in combinatorics, theoretical computer science, and game theory: Scarf's lemma has been used to show the existence of fair allocations such as cores and fractional cores~\cite{biro2016fractional,scarf1967core}, strong fractional kernels~\cite{aharoni1995fractional}, fractional stable solutions in hypergraphs~\cite{aharoni2003lemma}, and stable paths~\cite{haxell2008fractional}. Yet, it is not known how to efficiently construct these desirable allocations/solutions: It is widely believed that Scarf's algorithm is unlikely to be efficient for arbitrary applications.  
In this work, we investigate the possibility of efficient convergence of Scarf's algorithm for finding a stable matching in hypergraphs.

The stable matching problem in bipartite graphs (also known as the stable marriage problem) is a classic and well-studied problem. It is well-known that a stable matching in a bipartite graph with preferences always exists~\cite{gale1962college} and can be found in polynomial time via multiple algorithms, including Gale and Shapley's deferred acceptance algorithm~\cite{gale1962college}, linear programming~\cite{baiou2000stable,roth1993stable,rothblum1992characterization,teo1998geometry} and other combinatorial algorithms~\cite{baiou2002stable,dworczak2016deferred}. Recently, it was shown that Scarf's algorithm can be implemented to converge in polynomial time for the stable marriage problem \cite{faenza2023scarf}. This result motivated us to address the possibility of efficient convergence of Scarf's algorithm for hypergraph stable matching problems. 

In contrast to graphs, the problem of finding a stable matching in hypergraphs is significantly more challenging.  Even in the special case of tripartite $3$-regular hypergraphs, it is NP-complete to find a stable matching~\cite{ng1991three} (this is also known as the \emph{stable family problem} proposed by Knuth~\cite{knuth1976marriages}). Despite the hardness results of finding (integral) stable matchings in hypergraphs, Aharoni and Fleiner~\cite{aharoni2003lemma} used Scarf's lemma to show that a fractional stable matching exists in every hypergraph. 
However, this general implication from Scarf's result comes at a computational price: the problem of 
computing a fractional stable matching on hypergraphs is PPAD-complete~\cite{kintali2013reducibility}. The latter problem remains PPAD-complete even in extremely restrictive cases, for example, even if the hypergraph has node degree~\cite{ishizuka2018complexity} or hyperedge size~\cite{csaji2022complexity} upper bounded by $3$.

In this paper, we define a new class of hypergraphs called {\em arborescence} hypergraphs (in which the hyperedges are paths in an arborescence), and design a polynomial-time algorithm to find an integer stable matching. We accomplish this through a pivoting rule that implements Scarf's algorithm and terminates in a polynomial number of iterations. The node-hyperedge incidence matrix of an arborescence hypergraph is totally unimodular and consequently, every extreme point of the associated fractional matching polytope is integral. Thus, every dominating extreme point is also integral. This observation and Aharoni and Fleiner's results imply that the extreme point found by Scarf's algorithm for arborescence hypergraphs corresponds to an \emph{integral} stable matching. Our pivoting rule can be repurposed to apply to the more general family of
{\em network hypergraphs}\footnote{The node-hyperedge incidence matrix of such a hypergraph is totally unimodular and consequently there is always an integer stable matching.} but we are not able to prove polynomial convergence for this more general family.

\paragraph{Our Contributions.}
A hypergraph  $H=(V,E)$ is specified by a vertex set $V$ and a hyperedge set $E$, where every hyperedge $e\in E$ is a subset of $V$.
A \emph{hypergraphic preference system} is given by a pair $(H, \succ)$, where $H=(V,E)$ is a hypergraph and $\succ:=\{\succ_i:i\in V\}$ is the preference profile, $\succ_i$ being a strict order over $\delta(i)=\{e\in E:i\in e \}$ for each $i\in V$. For $e, e' \in \delta(i)$, we write $e\succeq_i e'$ if either $e\succ_i e'$ or $e=e'$. In addition, we assume that for every $i\in V$, the \emph{singleton hyperedge} $e_i=\{i\}\in\delta(i)$ and for every $e'\in\delta(i)$, $e'\succeq_i e_i$\footnote{This assumption corresponds, in the bipartite setting, to the usual hypothesis that an agent prefers to be matched rather than being unmatched.}. 



A \emph{stable matching} for a hypergraphic preference system is a vector $x\in\{0,1\}^E$ so that for every $e\in E$, there exists a vertex $i\in e$ such that
\begin{equation}\label{eq:sm}
    \sum_{e'\in\delta(i),e'\succeq_i e}x_{e'}=1.
\end{equation}
Equation~\eqref{eq:sm} with $e=e_i$ imposes that $x$ is the characteristic vector of a matching. Moreover, for every hyperedge $e$, there exists a vertex $i \in e$ and a matching edge $e'$ so that $e'\succeq_i e$. Correspondingly, if a fractional vector $x\in[0,1]^E$ still satisfies~\eqref{eq:sm}, then we call such a fractional vector $x$ a \emph{fractional stable matching}.

An \emph{arborescence} is a directed graph that contains a vertex $r$ (called the \emph{root}) such that every vertex $v\neq r$ has a unique directed path from $r$. A hypergraph $H=(V,E)$ is an \emph{arborescence hypergraph} if there exists an arborescence $\mathcal{T}=(U, \mathcal{A}_0)$ such that $V=\mathcal{A}_0$ and each hyperedge $e\in E$ is a subset of arcs in $\mathcal{A}_0$ that forms a directed path. 
We note that there exists a polynomial-time algorithm to verify whether a given hypergraph is an arborescence hypergraph and moreover, find an associated arborescence $\mathcal{T}$~\cite{schrijver1998theory}. 

As our main result, we show that Scarf's algorithm can be implemented to run in polynomial time for every arborescence hypergraphic preference system.

\begin{restatable}{theorem}{main}
\label{thm:main-arb-poly}
Let $(H=(V,E),\succ)$ be a hypergraphic preference system where $H$ is an arborescence hypergraph. There exists a pivoting rule such that Scarf's algorithm terminates in at most $|V|$ iterations and outputs a stable matching on $(H=(V,E),\succ)$ in time $O(|V||E|)$.
\end{restatable}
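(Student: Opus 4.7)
The plan is to design a specific pivoting rule for Scarf's algorithm that exploits the arborescence structure of $H$. Combined with Aharoni and Fleiner's result that Scarf's algorithm outputs a dominating extreme point of the fractional stable matching polytope, and with the total unimodularity of the node-hyperedge incidence matrix of an arborescence hypergraph (so every extreme point is integral), the output is automatically an integral stable matching. What remains is to exhibit a pivoting rule that terminates in at most $|V|$ iterations and can be executed with per-iteration cost $O(|E|)$.

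I would proceed in three stages. First, I initialize Scarf's algorithm with the basis $B_0 = \{e_i : i \in V\}$ corresponding to the all-singletons matching: this is the standard starting basis, being both feasible for the matching constraints and ordinally sound since each $e_i$ is the worst hyperedge at vertex $i$. Second, I develop a structural understanding of the bases that can appear during the algorithm. For an arborescence $\mathcal{T} = (U, \mathcal{A}_0)$ with $V = \mathcal{A}_0$, the hyperedges (directed paths in $\mathcal{T}$) are in bijection with pairs of arcs $(v,v')$ lying on a common root-to-leaf path. I would prove that any Scarf basis decomposes along the tree into families of paths indexed by the nodes of $\mathcal{T}$, and that each Scarf pivot modifies this decomposition only locally, around a single node of $\mathcal{T}$.

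Third, I would prescribe a pivoting rule that traverses the arcs of $\mathcal{T}$ in a bottom-up order, say a DFS post-order $v_1,\ldots,v_{|V|}$. The key invariant is that after the $k$-th iteration, the portion of the basis covering $v_1,\ldots,v_k$ is finalized: no subsequent pivot touches it. If this invariant holds, the algorithm terminates within $|V|$ iterations, since each iteration resolves one new arc and there are only $|V|$ arcs to resolve.

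The main obstacle will be verifying this invariant. It requires showing that, under our pivoting policy, both the cardinal and the ordinal Scarf pivots always select incoming and outgoing columns whose corresponding paths in $\mathcal{T}$ lie strictly above the current frontier $v_k$. I expect this to follow from the structural decomposition of bases described above, combined with a careful case analysis of how total unimodularity of the incidence matrix constrains the unique Scarf pivot column at each step: any pivot that would touch an already-finalized arc would either violate feasibility through the tree-path constraints or contradict the ordinal domination at one of the endpoints below $v_k$. The per-iteration cost of $O(|E|)$ then follows from maintaining the basis as a collection of paths with standard tree data structures and scanning candidate pivots by walking through $\mathcal{T}$, yielding the claimed $O(|V||E|)$ total running time.
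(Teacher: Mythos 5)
Your high-level plan points in the same direction as the paper (fix a depth-first order on the arborescence, show Scarf's algorithm makes monotone progress along that order, hence $|V|$ iterations suffice), but it is missing the specific devices that make the argument go through, and one of its load-bearing claims is false.

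The first gap is the potential function. You frame the invariant as ``after the $k$-th iteration, the portion of the basis covering $v_1,\ldots,v_k$ is finalized: no subsequent pivot touches it.'' That is not what holds. Cardinal pivots in a network matrix can rewire the basis-tree along an entire tree-path between the endpoints of the entering arc (the set of arcs that change is exactly the fundamental cycle of the entering arc), so ``the portion covering $v_1,\ldots,v_k$'' is not frozen. What the paper does instead is define a \emph{separator} living in the \emph{ordinal} basis: after adding a controlling $0$-th row/column, the block index of the $0$-disliked column of $O$ is the separator, and Lemma~\ref{lem:Separator-Change} shows that one iteration strictly increases it. The monotone quantity is thus a feature of the ordinal basis (which block you are ``at'' in the preference order), not a frozen prefix of the cardinal basis. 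Your claim that ``each Scarf pivot modifies this decomposition only locally, around a single node of $\mathcal{T}$'' is the point at which your plan breaks.

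The second gap is the pivoting rule itself. You say you would ``prescribe a pivoting rule that traverses the arcs of $\mathcal{T}$ in a bottom-up order'' but do not say how the leaving column is chosen when there are ties, which is exactly where the design freedom lies (and where a wrong choice yields a non-polynomial run). The paper commits to the \emph{first forward arc leaving} (FFL) rule: after depth-first ordering $\mathcal{A}_0$ so that each $f_i$ points into $v_i$, the leaving arc is the earliest forward arc on the $\mathcal{T}_B$-path from the tail to the head of the entering arc. Proving that this rule pairs correctly with the ordinal pivot requires a structural invariant on the cardinal basis — the paper's ``$i$-nice basis'' (the arcs $f_i,\dots,f_n$ are present, root-to-vertex paths are $x$-augmenting, and removing $f_q$ from $\mathcal{T}_B$ partitions $U$ the same way it does in $\mathcal{T}$). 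This invariant is what forces the leaving arc to be $f_i$ and the separator to advance (Lemma~\ref{lem:Out-Deg-1}). Nothing in your plan identifies an invariant of this strength, and ``I expect this to follow from total unimodularity and a case analysis'' is where the actual content of the theorem lies: total unimodularity alone does not constrain the entering column of the ordinal pivot, and the paper needs the block-partitioned construction of $C$, the $0$-row/column reduction, and the disliked-column bijection to show the ordinal pivot cooperates. Finally, your $O(|E|)$ per-iteration claim does not obviously hold for ordinal pivots in general; the paper instead amortizes the ordinal column search across iterations (each column is examined at most once overall), which is a different accounting.
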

A few remarks on this result are in order. To the best of our knowledge, this is the first result showing polynomiality of Scarf's algorithm beyond the case of stable marriage~\cite{faenza2023scarf}. We note that the pivoting rule used to show polynomiality of Scarf's algorithm in stable marriage~\cite{faenza2023scarf} was heavily inspired by Gale and Shapley's classical deferred acceptance mechanism, which is a purely combinatorial algorithm. In contrast, the pivoting rule developed in this work does not seem to have a combinatorial counterpart and is substantively different from the one from~\cite{faenza2023scarf}. In Section~\ref{sec:negative}, we provide evidence suggesting that classical approaches to stable marriage problems, such as a natural linear program with an exact description of the convex hull of all stable matchings~\cite{teo1998geometry}, do not extend to even a subclass of arborescence hypergraphic preference systems.

Secondly, while most of the known results related to polynomiality of stable matching on hypergraphic preference system usually restricts either the degree of nodes~\cite{ishizuka2018complexity} or the size of hyperedges~\cite{csaji2022complexity}, we do not assume any condition on them or on the preference lists of agents. Thirdly, previous work by~\cite{csaji2022complexity} shows that there is a polynomial time algorithm for finding a stable matching on arborescence hypergraphic systems. They reduce the problem of finding a stable matching on such instances to finding a \emph{kernel} in the clique-acyclic superorientation of a chordal graph, which can be solved in polynomial time \cite{PIM20}. 
We remark that the algorithm from~\cite{PIM20} is different from Scarf's algorithm and does not seem to generalize to network hypergraphs, while, in our investigation, we uncover novel properties of bases and pivots in network hypergraphs, which may be of independent interest and could lead to polynomial-time convergence proofs for this more general class of hypergraphs.

We recall standard concepts on Scarf's algorithm, implication of Scarf's lemma to the stable matching problem in hypergraphs, and formally define arborescence hypergraphs in Section \ref{sec:prelims}. We prove Theorem \ref{thm:main-arb-poly} in Section \ref{sec:Poly-Scarf-Cases}. In Section~\ref{sec:negative}, we show that a natural relaxation of the convex hull of stable matchings in an arborescence hypergraph is fractional.

\section{Preliminaries}\label{sec:prelims}
We recall some standard concepts in this section as preliminaries. In Section~\ref{sec:pre-Scarf}, we present Scarf's lemma. In Section~\ref{sec:pre-Scarf-SM}, we explain its implications to the stable matching problem in hypergraphs. In Section \ref{sec:pre-Scarf-alg}, we give details about Scarf's algorithm---in particular we describe cardinal pivot and ordinal pivot operations and how they are combined to show the existence of a dominating basis. In Section~\ref{sec:pre-network}, we define network matrices and network hypergraphs. In Section~\ref{sec:pre-arborescence}, we define arborescence hypergraphs, the special case of interest to this work. 

Throughout the paper, we use the following standard notations: For $n\in\mathbb{Z}_+$, we let $[n]:=\{1,2,\dots,n\}$. For sets $S,S'$ with $|S\setminus S'|=1$, we abuse notation and let $S\setminus S'$ be the unique element $e\in S\setminus S'$. For a matrix $A\in\mathbb{R}^{n\times m}$, $i\in[n], j\in[m]$, we let $a_{i,j}$ be the entry in the $i$-th row and $j$-th column of $A$. We denote the vector corresponding to $j$-th column by $A_j$. For $B\subseteq[m]$, we denote the submatrix of $A$ corresponding to the columns in $B$ as $A_B$.

\subsection{Scarf's Lemma}\label{sec:pre-Scarf}
Scarf's lemma deals with the standard form of down-monotone polytopes, as defined below.

\begin{definition}[Standard form, cardinal basis, extreme point]\label{def:dm-polytope}
    Let $n\le m$ be positive integers. Let $A\in\mathbb{R}^{n\times m}_{\ge 0}$ be a nonnegative matrix. Matrix $A$ is in \emph{standard form} if $A$ has the form $A=(I_n|A')$ where $I_n$ is the $n\times n$ identity matrix. Let $b\in\mathbb{R}^n_+$ be strictly positive. Consider the polytope

\begin{equation}\label{eq:Ax-leq-b}
P=\{x \in \mathbb{R}^{m}_{\ge 0} : Ax = b\}.
\end{equation} 
A set $B\subseteq[m]$ is a \emph{cardinal basis} for $(A,b)$ if $|B|=n$ and the submatrix $A_B$ of $A$ indexed by the columns in $B$ has full row rank. In addition, let $B$ be a cardinal basis and 
$x=(x_B,0)\in\mathbb{R}^m$ be such that $\bar{x}=x_B$ is the unique solution to the linear system $A_B\bar{x}=b$. Then, if $x\in P$, we call $B$ a \emph{feasible cardinal basis}, and the solution $x$ is an \emph{extreme point} of $P$ corresponding to $B$.
\end{definition}

In addition to the polytope description, Scarf's lemma also takes as input a matrix $C$. Since no entry appears twice in any row of $C$, this matrix can be interpreted as a row-wise ordering of columns.

\begin{definition}[Ordinal matrix, ordinal basis, utility vector]\label{def:ordinal-matrix}


Let $n\le m$ be positive integers. Let $C=(c_{i,j}) \in \R^{n\times m}$ be a matrix. Matrix $C$ is an \emph{ordinal matrix} if (i)~for every $i\in[n]$ and $j\neq k\in[m]$, $c_{i,j}\neq c_{i,k}$, and (ii)~for every distinct $i,j \in [n]$, $k \in [m]\setminus [n]$, $c_{i,i} < c_{i,k} < c_{i,j}$. Let $O$ be a set of columns of $C$. For every row $i \in [n]$, define the utility of the row (w.r.t. $O$) as
\begin{equation}\label{eq:Utility-Vector}
    u_i^O:=\min_{j\in O}c_{i,j}.
\end{equation}
The set $O$ is called an \emph{ordinal basis} of $C$ if $|O|=n$ and for every column $j\in[m]$, there is at least one row $i \in [n]$ such that $u^O_i\ge c_{i,j}$. The associated vector $u^O\in\mathbb{R}^n$ is called the \emph{utility vector} of the ordinal basis $O$.
\end{definition}

For two real vectors $x,y$ in the same dimension, we succinctly write $x>y$ if $x_i>y_i$ for every entry $i$, and write $x\ngtr y$ if $x>y$ fails. We remark that, for every set $O$ of $n$ columns, we have that $C_j\ge u^O$ for every $j\in O$ by definition. Hence, a set $O$ of $n$ columns is an ordinal basis iff for every column $j\in[m]$, the column vector $C_j$ satisfies $C_j\ngtr u^O$.

\begin{definition}[Dominating basis]
    A feasible cardinal basis for $(A,b)$ that is also an ordinal basis of $C$ is called a \emph{dominating basis} for $(A,b,C)$. Given a dominating basis $B$ for $(A,b,C)$, 
    the extreme point $x$ of $P=\{x \in \mathbb{R}^{m}_{\ge 0} : Ax = b\}$ corresponding to $B$ is called a \emph{dominating extreme point} for $(A,b,C)$.
\end{definition}

Scarf's lemma, presented next, shows the existence of a dominating extreme point under mild conditions on $(A, b, C)$.

\begin{theorem}[Scarf's Lemma]\label{thm:scarflemma}
Let $A\in \R^{n \times m}_{\ge 0}$ be a matrix in standard form and $b\in\R^n_{+}$ such that the 
$P=\{x\in \R^m_{\ge 0}: Ax=b\}$ is bounded. 
Let $C \in \R^{n \times m}$ be an ordinal matrix. Then, there exists a dominating extreme point for $(A,b,C)$. 
\end{theorem}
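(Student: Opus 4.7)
The plan is to prove Scarf's lemma constructively, by exhibiting a pivoting procedure --- Scarf's algorithm --- that terminates at a dominating basis, via a Lemke--Howson-style parity argument. A natural starting point is the feasible cardinal basis $B_0 = [n]$, with corresponding extreme point $(b,0)$; the diagonal-minimum property $c_{i,i} < c_{i,k} < c_{i,j}$ built into the definition of an ordinal matrix forces $u_i^{B_0} = c_{i,i}$ and $C_k > u^{B_0}$ for every $k \in [m] \setminus [n]$, so that $B_0$ is a feasible cardinal basis but is maximally far from being an ordinal basis. The algorithm evolves by alternating two local pivot operations on a ``state'' that records a feasible cardinal basis together with a single ``defect'' column violating the dominating condition, and terminates when the defect can be eliminated --- equivalently, when the current basis is simultaneously cardinal-feasible and ordinal.

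Concretely, a \emph{cardinal pivot}, given a feasible cardinal basis $B$ and a column $k \in B$ to be removed, produces --- under nondegeneracy --- a unique new feasible cardinal basis $B' = (B \setminus \{k\}) \cup \{j\}$ obtained by moving from the extreme point associated with $B$ along the edge of $P$ corresponding to relaxing $k$; boundedness of $P$ ensures that this edge leads to another vertex rather than to infinity. An \emph{ordinal pivot}, given a set $O$ of $n-1$ columns and the utility vector it induces, uses the order structure of $C$ to identify the (generically) two columns whose addition to $O$ yields a set satisfying the dominating condition in all but at most one row. I would then form a graph whose vertices are states and whose edges are pivot moves; combining the two pivot lemmas, under a general-position hypothesis this graph has maximum degree two and hence decomposes into disjoint paths and cycles. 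The initial state anchored at $B_0$ is an endpoint of some path --- it admits only one valid move, because of the extremal role of the identity block of $A$ and the asymmetry in $C$ --- so following this path to its other endpoint must yield a state whose defect has vanished, i.e., a feasible cardinal basis that is simultaneously an ordinal basis and is therefore a dominating basis for $(A, b, C)$. Finiteness is immediate from the finiteness of the state space and the uniqueness of predecessor/successor on a path.

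The main obstacle I anticipate is the careful handling of degeneracy. Both pivot operations presuppose nondegenerate configurations in order to guarantee uniqueness of outgoing moves, and in degenerate cases a state may admit more than two pivots, which would destroy the parity argument. The standard remedy, which I would adopt, is to perturb $A$ and $C$ infinitesimally via a lexicographic scheme that preserves the standard form of $A$, the ordinal structure of $C$, and the boundedness of $P$; prove the statement for the perturbed instance; then recover a dominating basis for the original instance by passing to the limit, using the finiteness of the collection of candidate bases. A secondary subtlety is verifying that the Scarf path cannot loop back to its initial endpoint, so that the dominating basis produced is genuinely distinct from the starting state; this is ensured precisely by the strict inequalities $c_{i,i} < c_{i,k}$, which give $B_0$ a distinguished role incompatible with the terminal ``no-defect'' condition.
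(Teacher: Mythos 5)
Your proposal follows essentially the same route the paper takes: the paper states Scarf's algorithm in Section~2.3, asserts its termination at a dominating basis as Theorem~\ref{thm:constructive-scarf} (citing Scarf's original paper rather than re-deriving it), and observes that this implies Theorem~\ref{thm:scarflemma}. Your sketch — initialize at the identity cardinal basis, alternate cardinal and ordinal pivots, set up a Lemke--Howson-style degree-two graph on states so the initial state anchors a path whose other endpoint is a dominating basis, and handle degeneracy by lexicographic perturbation — is a faithful outline of that classical argument; the only small imprecision is in your description of the ordinal pivot (it is more standard to phrase it as: given an ordinal basis $O$ of size $n$ and a designated leaving column, there is a unique replacement column producing another ordinal basis, so that a given $(n-1)$-subset extends to an ordinal basis in exactly two ways generically), but this does not affect the correctness of the approach.
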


\subsection{Implication of Scarf's Lemma for Fractional Hypergraph Stable Matching}\label{sec:pre-Scarf-SM}

Aharoni and Fleiner~\cite{aharoni2003lemma} used Scarf's Lemma to establish the existence of a fractional stable matching in hypergraphs. In this section, we briefly explain their approach via a special case. 

\begin{definition}\label{def:block}
    Let $(H=(V,E),\succ)$ be a hypergraphic preference system with $e_i:=\{i\}$ for $i\in [n]$ being the singleton hyperedges. We say that matrices $A,C$ are \emph{block-partitioned} if they are constructed as follows (see Example \ref{example:hypergraph-matrices} for an illustration): 
    \begin{enumerate}
        \item For each $i\in V$, we define the $i$-th block $S_i:=\{e\in E\setminus\{e_i\}:i=\max_{j\in e} j\}$. 
        
        \item Let $A$ be the $V\times E$ incidence matrix of $H$ whose columns are ordered as follows: The columns corresponding to the singleton hyperedges are the first $n$ columns of $A$. Next, we group the remaining hyperedges into $n$ blocks $S_1, \ldots, S_n$. The $i$-th block consists of $S_i$ and is ordered in decreasing order of preference from left to right according to $\succ_{i}$. The $i$-th block appears before the $i+1$-th block for every $i\in [n-1]$. 
        
        \item The ordinal matrix $C$ has the same number of rows and columns as $A$, and each column index $j\in [m]$ corresponds to the same hyperedge $e_j$ in both $A$ and $C$. We now describe the entries in $C$. For each $i\in[n],j\in[m]$ with $a_{i,j}=1$, we set $c_{i,j}=|\delta(i)|-\ell$ if $e_j$ is the $\ell$-th best hyperedge with respect to $\succ_i$. The remaining entries in row $i$ are assigned integers that are no less than $|\delta(i)|$ and in decreasing order from left to right. 
    \end{enumerate}
\end{definition}
Let $(H=(V,E),\succ)$ be a hypergraphic preference system. Let $S_1, \ldots, S_n$ be the blocks and $A, C$ be the block-partitioned matrices associated with this hypergraphic preference system. We observe that $S_1$ is empty since the only hyperedge $e$ such that $\max_{j\in e} j=1$ is $e=\{1\}$, which is a singleton. Aharoni and Fleiner \cite{aharoni2003lemma} showed that every dominating extreme point for $(A,b,C)$ is a fractional stable matching of the hypergraphic preference system $(H=(V,E),\succ)$. 

\begin{theorem}[\cite{aharoni2003lemma}]\label{thm:Domin-SM}
Let $(H=(V,E),\succ)$ be a hypergraphic preference system. 
Let $A$ and $C$ be the block-partitioned matrices associated with this hypergraphic preference system and let $b\in {1}^V$ be the all ones vector. Then, the matrix $A$ is in standard form, the vector $b$ is a positive vector, the polytope $P=\{x\in \R^E_{\ge 0}: Ax=b\}$ is bounded, and the matrix $C$ is ordinal. Moreover, every dominating extreme point for $(A,b,C)$ is a fractional stable matching of $(H,\succ)$.

\end{theorem}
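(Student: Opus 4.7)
The plan is to verify each of the four structural properties listed in the theorem (standard form of $A$, positivity of $b$, boundedness of $P$, and ordinality of $C$) directly from the block-partitioned construction in Definition~\ref{def:block}, and then to derive the stable matching condition from the ordinal basis property at the dominating extreme point.

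First I would observe that the first $n$ columns of $A$, corresponding to the singletons $\{1\},\ldots,\{n\}$, form $I_n$, so $A$ is in standard form; $b=\mathbf{1}$ is positive; and since $A\geq 0$ and $Ax=\mathbf{1}$, the polytope $P$ lies in $[0,1]^E$ and is bounded. For ordinality I would check Definition~\ref{def:ordinal-matrix} row by row: within row $i$, the incident entries are the distinct values $\{0,1,\ldots,|\delta(i)|-1\}$ via $c_{i,j}=|\delta(i)|-\mathrm{rank}_i(e_j)$, while the non-incident entries are distinct integers $\geq |\delta(i)|$ placed in strictly decreasing order from left to right, so row-wise distinctness is immediate. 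Condition (ii) then follows because (a) $c_{i,i}=0$ since $\{i\}$ is the worst hyperedge at $i$; (b) for $j\in[n]\setminus\{i\}$, column $j$ is non-incident at row $i$ and is the leftmost non-incident column, so $c_{i,j}\geq|\delta(i)|$ and $c_{i,j}>c_{i,k}$ for any $k\in[m]\setminus[n]$; and (c) any $c_{i,k}$ with $k\in[m]\setminus[n]$ is at least $1$ (either non-incident and $\geq|\delta(i)|\geq 1$, or incident but different from $e_i$ and hence $\geq 1$), and thus strictly exceeds $c_{i,i}=0$.

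For the main implication, let $B$ be a dominating basis and $x$ its extreme point, so $x\geq 0$, $Ax=\mathbf{1}$, and $x$ is supported on $B$. Fix a hyperedge $e\in E$ (viewed as a column). The ordinal basis condition supplies a vertex $i\in V$ with $c_{i,k}\geq c_{i,e}$ for every $k\in B$. The key step is that this $i$ must lie in $e$: otherwise $c_{i,e}\geq |\delta(i)|$, which would force $c_{i,k}\geq|\delta(i)|$ and hence $a_{i,k}=0$ for every $k\in B$, contradicting the row-$i$ equation $\sum_{k\in B}a_{i,k}x_k=1$. Given $i\in e$, the inequality $c_{i,k}\geq c_{i,e}$ for $k\in B\cap\delta(i)$ translates back into $e_k\succeq_i e$, and so
\[
\sum_{e'\in\delta(i),\, e'\succeq_i e} x_{e'} \;=\; \sum_{k\in B\cap\delta(i)} x_k \;=\; \sum_{k\in B} a_{i,k}x_k \;=\; 1,
\]
which is precisely \eqref{eq:sm} at $e$. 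Since $e$ was arbitrary, $x$ is a fractional stable matching. I expect no serious obstacle: the structural checks are routine unpacking of the block construction, and the main implication reduces to the ``vertex-in-hyperedge'' argument driven by how the non-incident entries of $C$ are padded to sit above $|\delta(i)|$.
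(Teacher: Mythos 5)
The paper does not prove Theorem~\ref{thm:Domin-SM}; it is cited to Aharoni and Fleiner and stated without an in-text argument, so there is no in-paper proof to compare against. That said, your reconstruction is correct and is essentially the standard derivation: the structural checks (standard form from the leading $I_n$ of singleton columns, positivity of $b$, boundedness from $A\ge 0$ and $Ax=\mathbf 1$, and ordinality from the padding scheme with incident values in $\{0,\dots,|\delta(i)|-1\}$ and non-incident values $\ge|\delta(i)|$ decreasing left-to-right) all go through. The heart of the argument is exactly the ``vertex-in-hyperedge'' step you isolate: if the dominating row $i$ for column $e$ had $i\notin e$, then $c_{i,e}\ge|\delta(i)|$ would force every basic column to be non-incident at $i$, contradicting $(Ax)_i=1$; once $i\in e$ is established, the ordinal inequality $c_{i,k}\ge c_{i,e}$ over $k\in B\cap\delta(i)$ translates to $e_k\succeq_i e$, and the row-$i$ feasibility equation then gives $\sum_{e'\in\delta(i),\,e'\succeq_i e}x_{e'}=1$. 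One small presentational point: when asserting the first equality in your displayed chain, it is worth stating explicitly that $x$ vanishes off $B$ and that you have already shown every $k\in B\cap\delta(i)$ satisfies $e_k\succeq_i e$, so the restricted sum over $\{e'\in\delta(i):e'\succeq_i e\}$ indeed picks up exactly the basic incident columns; you do implicitly use both facts, but the equality reads as an unexplained jump without them.
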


We note that not every fractional stable matching is a dominating extreme point for $(A,b,C)$ (see~\cite{faenza2023scarf}). By Theorem~\ref{thm:scarflemma} and Theorem~\ref{thm:Domin-SM}, every hypergraphic preference system admits a fractional stable matching.

\begin{example}\label{example:hypergraph-matrices}
    For $V=\{1,2,3,4\}$, consider the hypergraph in Figure~\ref{fig:Hypergraph_Ex} and the following preference list:
    \begin{align*}
        1 &: \{1,3\}\succ_1 \{1,3,4\}\succ_1 \{1\}, \\
        2 &: \{2,3\}\succ_2 \{2\}, \\
        3 &: \{2,3\}\succ_3 \{1,3\}\ \succ_3 \{3,4\}\succ_3 \{1,3,4\} \succ_3 \{3\}, \\
        4 &: \{1,3,4\}\succ_4 \{3,4\}\succ_4 \{4\}.
    \end{align*}
    \begin{figure}[h!]
    \centering
    \includegraphics[scale=1]{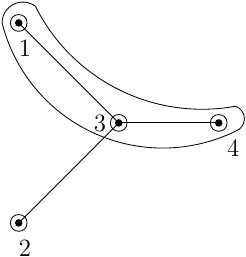}
    \caption{An example of a hypergraph. The circles around the nodes are singletons. Line segments represent edges. Other hyperedges (only $\{1,3,4\}$ in this example) are indicated by splinegons. }
    \label{fig:Hypergraph_Ex}
\end{figure}

    The blocks are $S_1=\emptyset, S_2=\emptyset, S_3=\{\{2,3\},\{1,3\}\},S_4=\{\{1,3,4\},\{3,4\}\}$. The block-partitioned incidence matrix and ordinal matrix are as follows:
    \begin{displaymath}
    A=\begin{pNiceMatrix}[first-row,first-col]
     & \{1\} & \{2\} & \{3\} & \{4\} & \{2,3\} & \{1,3\} & \{1,3,4\} & \{3,4\} \\
    1 & 1 & 0 & 0 & 0 & 0 & 1 & 1 & 0 \\
    2 & 0 & 1 & 0 & 0 & 1 & 0 & 0 & 0 \\
    3 & 0 & 0 & 1 & 0 & 1 & 1 & 1 & 1 \\
    4 & 0 & 0 & 0 & 1 & 0 & 0 & 1 & 1  
    \end{pNiceMatrix},
    \end{displaymath}

\begin{displaymath}
        C=\begin{pNiceMatrix}[first-row,first-col]
     & \{1\} & \{2\} & \{3\} & \{4\} & \{2,3\} & \{1,3\} & \{1,3,4\} & \{3,4\} \\
    1 & 0 & 7 & 6 & 5 & 4 & 2 & 1 & 3 \\
    2 & 7 & 0 & 6 & 5 & 1 & 4 & 3 & 2 \\
    3 & 7 & 6 & 0 & 5 & 4 & 3 & 1 & 2 \\
    4 & 7 & 6 & 5 & 0 & 4 & 3 & 2 & 1 
\end{pNiceMatrix}.
    \end{displaymath}

\end{example}

\subsection{Scarf's Algorithm}\label{sec:pre-Scarf-alg}

We review Scarf's algorithm in this section. It involves the iterative repetion of two operations, namely cardinal pivot and ordinal pivot. We describe these two operations in Sections \ref{sec:pre-cp} and \ref{sec:pre-op} and then show how they are combined in Scarf's algorithm in Section \ref{sec:iteration}. 

\subsubsection{Cardinal Pivot}\label{sec:pre-cp}
We define the cardinal pivot operation in this section. 


\begin{definition}[Cardinal pivot]\label{def:cpdef}
Let $A\in \R^{n\times m}$ be a matrix in standard form and $b\in \R^n$ be a vector. 
Let $B=\{j_1,\dots,j_n\}$ be a feasible cardinal basis and let $j_t\in[m]\setminus B$. A \emph{cardinal pivot} from $B$ with \emph{entering column} $j_t$ returns a feasible cardinal basis $B'$ that is defined as follows:
Let $x$ be the extreme point of $P$ corresponding to $B$. Let $A_B,A_{j_t}$ be the submatrices of $A$ indicated by $B,j_t$, respectively. Let $y=A_B^{-1}A_{j_t}\in\mathbb{R}^B$. A column $j\in B$ is a \emph{leaving candidate} if $y_j>0$ and $j\in\argmin_{j\in B:y_j>0}\{x_j/y_j\}$. We define $B':=B\cup\{j_t\}\setminus\{j_\ell\}$, where $j_\ell$ is a leaving candidate. The column $j_\ell=B-B'$ is known as the leaving column.
\end{definition}
The following is a standard result in linear programming, which implies that the cardinal pivot is well-defined. 
\begin{lemma}[\cite{scarf1967core}]\label{lem:scarf-cp}
    If $P=\{x\in\mathbb{R}^m_{\ge 0}:Ax=b\}$ is bounded, then for every feasible cardinal basis $B$ and every $j_t\in [m]\setminus B$, there exists at least one leaving candidate. Also, if $j_\ell$ is a leaving candidate, then $B'=B\cup\{j_t\}\setminus\{j_\ell\}$ is a feasible cardinal basis.
\end{lemma}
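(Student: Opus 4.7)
The plan is to prove the two claims separately, since both are standard simplex-pivot facts but it is worth laying them out carefully in this setting. Throughout, let $x=(x_B,0)$ be the extreme point corresponding to $B$ and let $y=A_B^{-1}A_{j_t}$.

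\textbf{Existence of a leaving candidate.} I would argue by contradiction: suppose $y_j\le 0$ for every $j\in B$. Then, for any $\lambda\ge 0$, define $x(\lambda)\in\mathbb{R}^m$ by $x(\lambda)_{j_t}=\lambda$, $x(\lambda)_j=x_j-\lambda y_j$ for $j\in B$, and $x(\lambda)_k=0$ otherwise. Since
\[
A x(\lambda)=A_B(x_B-\lambda y)+\lambda A_{j_t}=b-\lambda A_B y+\lambda A_{j_t}=b,
\]
and $x(\lambda)_j\ge x_j\ge 0$ because $-\lambda y_j\ge 0$, the entire ray $\{x(\lambda):\lambda\ge 0\}$ lies in $P$, contradicting boundedness. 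Hence $\{j\in B:y_j>0\}\neq\emptyset$, so the argmin in Definition~\ref{def:cpdef} is nonempty and a leaving candidate exists.

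\textbf{The new set $B'$ is a feasible cardinal basis.} Fix a leaving candidate $j_\ell$ and let $\lambda^\ast=x_{j_\ell}/y_{j_\ell}\ge 0$. I would verify two things. First, $A_{B'}$ has full row rank: since $y_{j_\ell}\neq 0$, the identity $A_{j_t}=\sum_{j\in B}y_j A_j$ can be solved for $A_{j_\ell}$, showing that $A_{j_\ell}$ lies in the column span of $A_{B'}=A_{B\setminus\{j_\ell\}\cup\{j_t\}}$; therefore $\mathrm{span}(A_{B'})\supseteq\mathrm{span}(A_B)=\mathbb{R}^n$, and since $|B'|=n$, $A_{B'}$ is invertible. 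Second, the induced point $x'$ with $x'_{j_t}=\lambda^\ast$, $x'_j=x_j-\lambda^\ast y_j$ for $j\in B\setminus\{j_\ell\}$, and zero elsewhere is feasible: a direct computation as above gives $Ax'=b-\lambda^\ast A_B y+\lambda^\ast A_{j_t}+\lambda^\ast y_{j_\ell}A_{j_\ell}-x_{j_\ell}A_{j_\ell}=b$ after using $\lambda^\ast y_{j_\ell}=x_{j_\ell}$. Non-negativity is immediate from the choice of $j_\ell$: $x'_{j_t}=\lambda^\ast\ge 0$; for $j\in B\setminus\{j_\ell\}$ with $y_j\le 0$ we have $x'_j\ge x_j\ge 0$, and for $j\in B\setminus\{j_\ell\}$ with $y_j>0$ the minimality of $\lambda^\ast$ gives $x_j/y_j\ge\lambda^\ast$, so $x'_j\ge 0$.

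\textbf{Expected obstacle.} There is no real obstacle, as this is textbook simplex theory adapted to the equality-form polytope considered here; the only point requiring care is to ensure that the boundedness hypothesis is used explicitly and only where needed (namely, to rule out an unbounded feasible ray when all entries of $y$ are non-positive). Once that is dispatched, the rank argument and the non-negativity check are routine.
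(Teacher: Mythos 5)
The paper does not include its own proof of Lemma~\ref{lem:scarf-cp}; it simply cites Scarf's original work, treating the statement as a known fact about simplex-type pivots in a bounded equality-form polytope. Your proof is correct and is precisely the standard argument the citation points to: the unbounded-ray contradiction establishes existence of a leaving candidate, the span/rank argument using $y_{j_\ell}\neq 0$ shows $A_{B'}$ is a basis, and the ratio-test bound gives non-negativity of the new basic solution. Nothing is missing and nothing needs to change.
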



We note that a cardinal pivot could have multiple leaving candidates. A pivoting rule determines a unique leaving candidate. 

\begin{definition}[Pivoting rule, Degeneracy]\label{def:degenerate}
Let $A\in \R^{n\times m}$ be a matrix in standard form and $b\in \R^n$ be a vector. 
A \emph{pivoting rule} is a criterion that determines a unique leaving candidate for every tuple $(B, j_t)$, where $B$ is a feasible cardinal basis and $j_t\in [m]\setminus B$.
A cardinal pivot $B\to B'$ is \emph{degenerate} if the corresponding extreme points $x$ and $x'$ are such that $x=x'$. Otherwise, it is \emph{non-degenerate}.
\end{definition}

\subsubsection{Ordinal Pivot}\label{sec:pre-op}
We define the ordinal pivot operation in this section. In order to define the ordinal pivot operation, we need the following definition. 

\begin{definition}[Disliked relation]
    Let $C$ be an ordinal matrix, $O$ be an ordinal basis, and let $i\in[n]$. A column $j\in O$ is \emph{$i$-disliked} w.r.t.~$O$ if $u_i^O=c_{ij}$. If $O$ is clear from context, then we say that $j$ is $i$-disliked (and omit ``w.r.t.~$O$'').
\end{definition}

The ``disliked'' relation gives a bijection between the row set $[n]$ and the column set $O$ as shown in the following lemma.
\begin{lemma}\label{lem:disliked-oneone}
    Let $C$ be an ordinal matrix and $O$ be an ordinal basis. Then, for every $j\in O$, there is a unique $i\in[n]$ such that column $j$ is $i$-disliked. Also, for every distinct $j,j'\in O$, if $j$ is $i$-disliked and $j'$ is $i'$-disliked, then $i\neq i'$.
\end{lemma}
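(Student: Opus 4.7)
The plan is to show that the ``disliked'' relation is in fact the graph of a bijection $\phi:[n] \to O$, and that both assertions in the lemma are immediate consequences. The construction of $\phi$ is forced: for each row $i \in [n]$, condition (i) of Definition~\ref{def:ordinal-matrix} says that the entries $c_{i,1},\ldots,c_{i,m}$ are pairwise distinct, so the minimum $u_i^O = \min_{j\in O} c_{i,j}$ is attained at a unique column of $O$. Define $\phi(i)$ to be that column. By the definition of $i$-disliked, ``column $j$ is $i$-disliked'' is equivalent to $\phi(i) = j$. So existence and uniqueness of the disliker $i$ for a given $j \in O$ is equivalent to the statement that $\phi$ is a bijection from $[n]$ onto $O$, and the second part of the lemma becomes the statement that $\phi$ is a function (which it is by construction).

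Since $|[n]| = |O| = n$, it suffices to prove that $\phi$ is surjective. Here I would invoke the ordinal basis property of $O$, but applied to the columns of $O$ themselves rather than to columns outside $O$: for each $j \in O \subseteq [m]$, Definition~\ref{def:ordinal-matrix} guarantees a row $i \in [n]$ with $u_i^O \geq c_{i,j}$. On the other hand, since $j \in O$, the very definition of $u_i^O$ as a minimum over $O$ forces $u_i^O \leq c_{i,j}$. Combining these two inequalities gives $u_i^O = c_{i,j}$, i.e.\ $\phi(i) = j$. Hence every $j \in O$ lies in the image of $\phi$, and surjectivity is established.

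From surjectivity plus the cardinality equality, $\phi$ is bijective. Unwinding, this yields both conclusions of the lemma simultaneously: for each $j \in O$ there is exactly one $i$ with $\phi(i) = j$, which is the ``unique $i$'' of the first part; and for distinct $j, j' \in O$, the unique preimages $i = \phi^{-1}(j)$ and $i' = \phi^{-1}(j')$ are themselves distinct, which is the second part.

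There is no serious obstacle here — the proof is essentially three lines once one notices that the ordinal basis inequality $u_i^O \geq c_{i,k}$ collapses to an equality whenever $k$ already belongs to $O$. The only mildly subtle point is remembering to apply the ordinal basis condition to columns \emph{inside} $O$ rather than only to the columns one is trying to certify as dominated; otherwise one could be tempted to try to prove injectivity of $\phi$ directly from the block structure of $C$ guaranteed by condition (ii) of Definition~\ref{def:ordinal-matrix}, which is unnecessarily involved.
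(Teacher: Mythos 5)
Your proof is correct and takes essentially the same approach as the paper: both derive the bijectivity of the disliked relation from the ordinal basis property applied to a column $j\in O$ (which, combined with $u^O_i \le c_{i,j}$ for $j\in O$, forces equality), from the uniqueness of the per-row argmin (distinct row entries), and from the cardinality match $|O|=n$. The paper phrases the surjectivity step as a proof by contradiction rather than directly packaging the relation as a bijection $\phi$, but the content is identical.
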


\begin{proof}
    By definition, since $O$ is an ordinal basis, we have $C_j\ngtr u^O$ for every $j\in[m]$. Suppose that there exists $j\in O$ such that no row dislikes $j$. Then for every row $i\in[n]$, $u^O_i\neq c_{i,j}$ implies $u^O_i<c_{i,j}$, which results in $u^O<C_j$, a contradiction. Therefore, every $j\in O$ is disliked by at least one row. Since $|O|=n$, we deduce that every $j\in O$ is disliked by exactly one row $i\in [n]$. 

    Moreover, by definition of $u^O$ and the fact that every row vector in $C$ has distinct entries, no row can simultaneously dislike two columns. Therefore, the second part of the lemma holds.
\end{proof}

We now have the ingredients needed to define the ordinal pivot. Pseudocode is provided in Algorithm~\ref{alg:op}.

\begin{definition}[Ordinal pivot]\label{def:appopdef}
Let $C$ be an ordinal matrix, $O$ be an ordinal basis, and $j_\ell\in O$. An \emph{ordinal pivot} from $O$ with leaving column $j_\ell$ returns $O'=(O\setminus \{j_\ell\}) \cup \{j^*\}$ where $j^*$ is defined as follows: 
Let $i_{\ell}$ be the unique row such that $j_{\ell}$ is $i_{\ell}$-disliked w.r.t.~$O$ (exists and unique by Lemma \ref{lem:disliked-oneone}). 
Let $j_r$ be the column such that $u^{O-j_\ell}_{i_\ell}=c_{i_\ell,j_r}$ (recall that $u^{O-j_\ell}\in\mathbb{R}^n$ is the utility vector w.r.t. $O\setminus\{j_\ell\}$). 
Let $i_r$ be the unique row in $[n]$ such that $j_r$ is $i_r$-disliked w.r.t.~$O$ (exists and unique by Lemma \ref{lem:disliked-oneone}). Define
\begin{align}\label{eq:appchoiceoford}
    K&:=\{k\in[m]\setminus O:c_{i,k}>u^{O-j_\ell}_i, \textrm{ for all } i\neq i_r\},\\
    j^*&:=\mathop{\arg\max}_{k\in K} c_{i_r, k}. \notag
\end{align}
We call $i_r$ as the \emph{reference row} and $j_r$ as the \emph{reference column}. 
\end{definition}

\begin{algorithm}
\caption{Ordinal Pivot}\label{alg:op}
\begin{algorithmic}
\State{Let $O$ be the current ordinal basis, associated with utility vector $u^O$. Suppose that $j_\ell\in O$ is the leaving column.}
\State{$i_\ell\gets \textrm{the unique row $i\in[n]$ that makes $u_i^O\le c_{i,j_\ell}$ equal.}$}
\State{$j_r\gets  \mathop{\arg\min}_{j\in O-j_\ell}\{c_{i_\ell,j}\}$.}\Comment{Reference column.}
\State{$i_r\gets\textrm{the unique row $i\in[n]$ that makes $u_i^O\le c_{i,j_r}$ equal.}$}
\State{$K\gets\{k\in[m]\setminus O:c_{i,k}>u^{O-j_\ell}_i, \textrm{ for all } i\neq i_r\}$.}
\State{$j^*\gets\mathop{\arg\max}_{k\in K} c_{i_r, k}$.}\Comment{Entering column.}
\State{$O\gets O\cup\{j^*\}\setminus\{j_\ell\}$.}
\end{algorithmic}
\end{algorithm}

In contrast to the cardinal pivot, the ordinal pivot reverses the order of entering and leaving. Moreover, the ordinal pivot does not need a pivoting rule since the operation is unique. Scarf showed the following result which implies that the ordinal pivot is well-defined. 
\begin{lemma}[\cite{scarf1967core}]\label{lem:op-def}
    Let $C$ be an ordinal matrix, $O$ be an ordinal basis, and $j_\ell\in O$. Let $i_{\ell}$, $i_r$, $j_r$, and $j^*$ be as defined in Definition \ref{def:appopdef}. Then, we have the following: 
\begin{enumerate}
    \item Column $j=i_r\in K$, and thus $K\neq\emptyset$.
    \item The set $O'=(O\setminus\{j_\ell\})\cup\{j^*\}$ is an ordinal basis.
    \item Let $j\in[m]\setminus O$. If $O''=(O\setminus\{j_\ell\})\cup\{j\}$ is an ordinal basis, then $j=j^*$. In other words, with a fixed leaving column $j_\ell$, there exists a unique ordinal basis $O''$ with $|O\cap O''|=n-1$ and $j_{\ell}\not\in O'$. 
\end{enumerate}
\end{lemma}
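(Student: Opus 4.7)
The plan is to handle each of the three assertions in order, making critical use of property~(ii) in Definition~\ref{def:ordinal-matrix} (the small/medium/large hierarchy of entries within each row) together with the bijection provided by Lemma~\ref{lem:disliked-oneone}.

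For Part~1, I would first note that since $j_\ell \neq j_r$ are distinct members of $O$ disliked by $i_\ell$ and $i_r$ respectively, the bijectivity from Lemma~\ref{lem:disliked-oneone} forces $i_\ell \neq i_r$. Then I would verify the two requirements for the column indexed $i_r$ to lie in $K$: membership in $[m]\setminus O$ (argued via property~(ii) together with Lemma~\ref{lem:disliked-oneone}), and the strict inequality $c_{i,i_r} > u^{O-j_\ell}_i$ for every $i \neq i_r$. For the latter I would case-split on $i$: when $i \neq i_\ell$, we have $u^{O-j_\ell}_i = u^O_i$, and since $O$ is an ordinal basis, some row must ``block'' column $i_r$; property~(ii) (which makes $c_{i',i_r}$ a ``large'' entry for any $i' \neq i_r$, strictly greater than any medium-valued entry in row $i'$) forces the blocking row to be $i_r$ itself, yielding the desired strict inequality for all other rows. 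When $i = i_\ell$, I would use $u^{O-j_\ell}_{i_\ell} = c_{i_\ell,j_r}$ and compare this value to $c_{i_\ell,i_r}$ through property~(ii).

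For Part~2, I would first compute $u^{O'}$: the defining condition of $K$ immediately gives $u^{O'}_i = u^{O-j_\ell}_i$ for every $i \neq i_r$. For $i = i_r$, applying the ordinal-basis property of $O$ to the column $j^*$ yields a row $i^*$ with $c_{i^*,j^*}\le u^O_{i^*}$; since $j^* \in K$ ensures $c_{i,j^*} > u^{O-j_\ell}_i \ge u^O_i$ for every $i \neq i_r$, such an $i^*$ must equal $i_r$, so $c_{i_r,j^*} \le u^O_{i_r} = u^{O-j_\ell}_{i_r}$, and property~(i) sharpens this to a strict inequality, giving $u^{O'}_{i_r} = c_{i_r,j^*}$. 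To check $C_k \not> u^{O'}$ for every $k\in[m]$: for $k \in O'$ the bound is immediate from the definition of $u^{O'}$ as a minimum; for $k = j_\ell$ I would use that $j_\ell$ was $i_\ell$-disliked in $O$ together with $u^{O'}_{i_\ell}=u^{O-j_\ell}_{i_\ell}> u^O_{i_\ell}$, so $c_{i_\ell,j_\ell} \le u^{O'}_{i_\ell}$; for the remaining $k \in [m]\setminus O'$, I would split into $k\in K$ (using the maximality of $c_{i_r,j^*}$ over $K$ so that $i_r$ remains a blocker) and $k\notin K$ (using the definition of $K$ to exhibit a blocking row $i' \neq i_r$).

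For Part~3, assume $O''=(O\setminus\{j_\ell\})\cup\{j\}$ is an ordinal basis with $j \in [m]\setminus O$. I would prove $j \in K$ by mirroring the Part~2 computation: if $c_{i,j} \le u^{O-j_\ell}_i$ held for some $i \neq i_r$, then a suitable ``witness'' column would dominate $u^{O''}$ entry-wise, contradicting that $O''$ is an ordinal basis. Next, applying the ordinal-basis condition of $O''$ to the specific column $j^*$ rules out $c_{i_r,j^*} > c_{i_r,j}$; combined with the maximality of $c_{i_r,j^*}$ over $K$ and $j \in K$, plus property~(i), this forces $j = j^*$. The main obstacle is the strict-inequality verification in Part~1 for $i = i_\ell$: one must rule out the pathological case where $j_r$ itself lies in $[n]$, since otherwise $c_{i_\ell,j_r}$ would be a ``large'' value incomparable to $c_{i_\ell,i_r}$ without additional reasoning. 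The resolution combines property~(ii) with the observation that $[n]$ itself cannot be an ordinal basis when $m>n$, so $O$ must contain a column from $[m]\setminus[n]$, which pins down the relative positions of $j_r$ and $i_r$ needed for the comparison.
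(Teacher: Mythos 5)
The paper does not prove Lemma~\ref{lem:op-def}; it cites it from Scarf~\cite{scarf1967core}, so there is no in-paper argument to compare against. Assessing your proposal on its own: the overall architecture is the right one. Verifying $i_r\in K$ by splitting on $i$ versus $i_\ell$, computing $u^{O'}$ via the defining condition of $K$ together with distinctness of entries in a row, and checking $C_k\not> u^{O'}$ separately for $k\in O'$, $k=j_\ell$, $k\in K\setminus\{j^*\}$ and the remaining columns is precisely Scarf's argument for Parts~1--2, and your Part~3 mirroring plan is consistent with it.

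However, the resolution you sketch for the ``main obstacle'' has a real gap. Note that $j_r\in[n]$ is equivalent (by Lemma~\ref{lem:disliked-oneone} and ordinal-matrix property (ii)) to $j_r=i_r$, hence to $i_r\in O$, in which case $i_r\notin K$ and Part~1 fails outright. Your fix---$O$ must contain some column $k_0\in[m]\setminus[n]$---only yields a medium-valued entry $c_{i_\ell,k_0}\ge u^{O-j_\ell}_{i_\ell}$ when $k_0\in O\setminus\{j_\ell\}$; it says nothing if $O\cap([m]\setminus[n])=\{j_\ell\}$. And that case genuinely occurs: take $O=([n]\setminus\{i_\ell\})\cup\{j_\ell\}$ with $j_\ell=\arg\max_{k\in[m]\setminus[n]}c_{i_\ell,k}$, which one can check is an ordinal basis. (Scarf's initial basis $O_0=\{2,\dots,n+1\}$ with $j_\ell=n+1$ is exactly of this form.) Here $O\setminus\{j_\ell\}=[n]\setminus\{i_\ell\}$, so $j_r\in[n]$, $i_r=j_r\in O$, and a direct check gives $K=\emptyset$, so Parts~1--2 fail and no ordinal basis $O''\supseteq O\setminus\{j_\ell\}$ with $j_\ell\notin O''$ exists. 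This is as much a defect of the lemma as transcribed in the paper as of your argument: Scarf's original statement explicitly excepts this boundary pair, and his algorithm never selects this leaving column (it is forced by the preceding cardinal pivot). To repair your proof you must add the hypothesis $O\setminus\{j_\ell\}\not\subseteq[n]$ (equivalently $O\cap([m]\setminus[n])\neq\{j_\ell\}$); under that hypothesis the medium-valued column you need in $O\setminus\{j_\ell\}$ exists and the rest of your plan goes through.
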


We recall that every ordinal basis is associated with an utility vector. Scarf showed certain helpful properties about the utility vectors: the row $i_\ell$ that dislikes the leaving column $j_\ell$ in the ordinal basis $O$ will increase its utility, while the reference row $i_r$ that dislikes the reference column $j_r$ will decrease its utility. The utilities of the other rows stay the same. We state his result below since it will be useful in designing a potential function for poly-time convergence of Scarf's algorithm in certain settings. 

\begin{lemma}[\cite{scarf1967core}]\label{lem:utility}
    Let $C$ be an ordinal matrix, $O$ be an ordinal basis, and $j_\ell\in O$. Let $i_{\ell}$, $i_r$, $j_r$, $j^*$, and $O'$ be as defined in Definition \ref{def:appopdef}. 
    Then, 
    \begin{enumerate}
    \item $u^{O'}_{i_\ell}=c_{i_\ell,j_r}>c_{i_\ell,j_\ell}=u^O_{i_\ell}$, 
    \item $u^{O'}_{i_r}=c_{i_r,j^*}<c_{i_r,j_r}=u^O_{i_r}$, and 
    \item $u^{O'}_{i}=u^O_{i}$ for every $i \in [n]\setminus\{i_{\ell}, i_r\}$. 
    \end{enumerate}
\end{lemma}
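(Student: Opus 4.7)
The plan is to track how the utility vector changes in two stages: first from $O$ to $O \setminus \{j_\ell\}$, and then from $O \setminus \{j_\ell\}$ to $O' = (O \setminus \{j_\ell\}) \cup \{j^*\}$. Two preliminary facts drive everything. First, $i_\ell \neq i_r$: since $j_r \in O \setminus \{j_\ell\}$ implies $j_r \neq j_\ell$, Lemma~\ref{lem:disliked-oneone} forces their disliking rows to differ. Second, for every $i \neq i_\ell$, $u_i^{O - j_\ell} = u_i^O$: by Lemma~\ref{lem:disliked-oneone}, $j_\ell$ is disliked only by $i_\ell$ in $O$, so $c_{i, j_\ell} > u_i^O$ for $i \neq i_\ell$, and deleting $j_\ell$ cannot change the row-$i$ minimum. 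For $i = i_\ell$, the defining equation of $j_r$ gives $u_{i_\ell}^{O - j_\ell} = c_{i_\ell, j_r}$.

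Next, I would invoke the defining property of $K$ (which contains $j^*$): $c_{i, j^*} > u_i^{O - j_\ell}$ for all $i \neq i_r$. Hence, for any row $i \neq i_r$, $u_i^{O'} = \min(u_i^{O - j_\ell}, c_{i, j^*}) = u_i^{O - j_\ell}$. Combined with the preliminaries, this already yields Claim~3 (for $i \notin \{i_\ell, i_r\}$, $u_i^{O'} = u_i^{O - j_\ell} = u_i^O$) and the equality part of Claim~1 ($u_{i_\ell}^{O'} = u_{i_\ell}^{O - j_\ell} = c_{i_\ell, j_r}$; also $u_{i_\ell}^O = c_{i_\ell, j_\ell}$ since $j_\ell$ is $i_\ell$-disliked). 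The strict inequality $c_{i_\ell, j_r} > c_{i_\ell, j_\ell}$ in Claim~1 then follows from $c_{i_\ell, j_r} \geq u_{i_\ell}^O = c_{i_\ell, j_\ell}$ (since $j_r \in O$) together with the distinctness of row-$i_\ell$ entries of the ordinal matrix $C$.

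The delicate step is Claim~2, since the definition of $j^* = \arg\max_{k \in K} c_{i_r, k}$ places no explicit upper bound on $c_{i_r, j^*}$. Here I would exploit the fact that $O$ itself is an ordinal basis: applied to the column $j^* \in [m]$, this yields some row $i^*$ with $c_{i^*, j^*} \leq u_{i^*}^O$. But the previous paragraph showed $c_{i, j^*} > u_i^{O - j_\ell} \geq u_i^O$ for every $i \neq i_r$, forcing $i^* = i_r$ and hence $c_{i_r, j^*} \leq u_{i_r}^O = c_{i_r, j_r}$. Since $j^* \notin O$ while $j_r \in O$, distinctness strengthens this to a strict inequality. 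Because $u_{i_r}^{O - j_\ell} = u_{i_r}^O = c_{i_r, j_r}$ by the preliminaries (using $i_r \neq i_\ell$), taking the minimum over $O'$ gives $u_{i_r}^{O'} = c_{i_r, j^*}$, completing Claim~2. The anticipated obstacle is precisely this upper bound on $c_{i_r, j^*}$; everything else is careful bookkeeping of how the row minima evolve through $O \to O - j_\ell \to O'$.
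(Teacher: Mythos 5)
Your proof is correct. The paper cites this lemma to Scarf~\cite{scarf1967core} without reproducing a proof, so there is no paper argument to compare against; your two-stage bookkeeping through $O \to O \setminus \{j_\ell\} \to O'$, together with the appeal to the ordinal-basis property of $O$ at column $j^*$ to get the upper bound $c_{i_r, j^*} \le u^O_{i_r}$, is exactly the standard derivation from the definitions.
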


\subsubsection{Initialization, Iteration, and Termination}\label{sec:iteration}
Let $A\in \R^{n\times m}$ be a matrix in standard form, $b\in \R^m_{+}$ be a positive vector such that $P=\{x\in \R^m_{\ge 0}: Ax = b\}$ is bounded, and $C$ be an ordinal matrix. Fix a cardinal pivoting rule. Scarf's algorithm initializes with a basis pair $(B_0,O_0)$, where $B_0=\{1,2,\dots,n\}$ and $O_0=\{2,3,\dots,n,n+1\}$. 
The algorithm is iterative and maintains a pair $(B,O)$ where $B$ is a feasible cardinal basis, $O$ is an ordinal basis and $|B\cap O|\ge n-1$ (it can be verified that the initialized pair $(B_0, O_0)$ satisfies these conditions). It proceeds alternatively with a cardinal pivot using the cardinal pivoting rule and an ordinal pivot. In particular, it goes through a sequence
\begin{equation}\label{eq:iterations}
    (B_0,O_0)\to (B_1,O_0)\to (B_1,O_1)\to \cdots,
\end{equation}
where for every $i\ge 0$, we have that $(B_i,O_i)\to (B_{i+1},O_i)$ is a cardinal pivot from the feasible cardinal basis $B_i$ with entering column $j_t=O_i-B_i$ using the cardinal pivoting rule, and $(B_{i+1},O_i)\to(B_{i+1},O_{i+1})$ is an ordinal pivot from the ordinal basis $O_i$ with leaving column $j_\ell=B_{i+1}-O_i$.

\begin{definition}[Scarf pair, iteration]\label{def:Scarf-pair}
    A pair $(B_i,O_i)$ in the sequence~\eqref{eq:iterations} with $B_i\neq O_i$ is called a \emph{Scarf pair}. We denote a consecutive pair of cardinal pivot and ordinal pivot $(B_i,O_i)\to(B_{i+1},O_i)\to(B_{i+1},O_{i+1})$ as an \emph{iteration} of Scarf's algorithm.
\end{definition}

We notice that $|B_0\cap O_0|=n-1$. The algorithm will maintain the invariant $|B_i\cap O_i|\ge n-1$ and $|B_{i+1}\cap O_i|\ge n-1$. The sequence terminates when 
$|B\cap O|=n$. 
If the sequence terminates, then the algorithm returns the extreme point $x$ corresponding to the feasible cardinal basis $B$. 
Therefore, an iteration of Scarf's algorithm begins with a Scarf pair and returns either another Scarf pair or a dominating basis. Scarf showed the following result:

\begin{theorem}\cite{scarf1967core}\label{thm:constructive-scarf}
Let $A\in \R^{n\times m}$ be a matrix in standard form, $b\in \R^m_{+}$ be a positive vector such that $P=\{x\in \R^m_{\ge 0}: Ax = b\}$ is bounded, and $C$ be an ordinal matrix. There exists a cardinal pivoting rule such that Scarf's algorithm executed with that cardinal pivoting rule for $(A, b, C)$ terminates within finite number of iterations. Moreover, the extreme point $x$ associated with the terminating feasible cardinal basis $B$ is a dominating extreme point. 
\end{theorem}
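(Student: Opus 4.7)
The plan is to establish termination by a finite path-following argument in the spirit of Lemke--Howson. First I would remove cardinal-pivot degeneracy by perturbing $b$ lexicographically to $\tilde b = b + (\varepsilon, \varepsilon^2, \ldots, \varepsilon^n)^\top$ for a formal indeterminate $\varepsilon > 0$ (equivalently, by adopting a lexicographic tie-breaking rule on the rows of $A_B^{-1}$). Under this rule every cardinal pivot has a unique leaving candidate, and the feasible cardinal bases of the perturbed system agree with those of the original for all sufficiently small $\varepsilon > 0$. This supplies the cardinal pivoting rule promised by the theorem.

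Next I would define a state graph $\mathcal{G}$ whose vertices are the Scarf pairs $(B, O)$ with $|B \cap O| = n-1$ together with the dominating pairs $(B, B)$. Edges come in two types: a cardinal edge $(B, O) \sim (B', O)$ produced by a cardinal pivot with entering column $O \setminus B$, and an ordinal edge $(B, O) \sim (B, O')$ produced by an ordinal pivot from $O$ with leaving column $O \setminus B$. Lemma~\ref{lem:scarf-cp} together with non-degeneracy gives well-definedness of cardinal edges, while Lemma~\ref{lem:op-def} gives well-definedness of ordinal edges at every Scarf pair where the set $K$ of~\eqref{eq:appchoiceoford} is nonempty. The key structural claim is then: every Scarf pair other than the initial pair $(B_0, O_0) = (\{1, \ldots, n\}, \{2, \ldots, n+1\})$ has exactly two neighbors in $\mathcal{G}$, every dominating pair has exactly one (obtained by reversing the pivot that created it), and $(B_0, O_0)$ also has exactly one. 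The last point uses the ordinal-matrix inequalities $c_{i,i} < c_{i,k} < c_{i,j}$ for distinct $i, j \in [n]$ and $k \in [m] \setminus [n]$: they force the reference column $j_r$ at the would-be ordinal pivot at $(B_0, O_0)$ with leaving column $n+1$ to lie in $O_0 \cap [n]$, hence force the reference row $i_r = j_r$ to lie in $O_0$, which makes $K = \emptyset$ and precludes that ordinal pivot.

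Once this degree structure is established, $\mathcal{G}$ decomposes into vertex-disjoint simple paths and cycles, and $(B_0, O_0)$ sits at one end of a unique path. The iteration sequence~\eqref{eq:iterations} executed from $(B_0, O_0)$ traces exactly this path, visiting each vertex at most once, so the algorithm terminates after at most $|\mathcal{G}|$ iterations at the other endpoint. That endpoint cannot be $(B_0, O_0)$ itself and so must be a dominating pair $(B, B)$; by Definition~\ref{def:dm-polytope} and Theorem~\ref{thm:scarflemma}, the associated extreme point of $P$ is a dominating extreme point for $(A, b, C)$.

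The main obstacle I anticipate is verifying that the ordinal pivot remains well-defined at every Scarf pair visited after the first cardinal pivot, i.e.\ that the reference row $i_r$ always indexes a column in $[m] \setminus O$ away from the initial pair, so that the claimed degree-two property really holds. This relies on inductively tracking how $O_i$ evolves under successive ordinal pivots using Lemma~\ref{lem:utility}, together with the asymmetry between ``singleton'' columns in $[n]$ and ``non-singleton'' columns in $[m] \setminus [n]$ encoded in the ordinal matrix. With that invariant in hand, the parity argument on $\mathcal{G}$ simultaneously delivers finite termination and correctness.
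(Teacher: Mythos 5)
Your proposal reproduces the structure of Scarf's original 1967 proof (the result is cited, not reproved, in the paper), and the high-level framework is sound: lexicographic perturbation to make the cardinal pivot deterministic, a state graph on Scarf pairs plus dominating pairs with cardinal and ordinal edges, reversibility of both pivot types giving degree at most two, and the parity conclusion that the component of $(B_0,O_0)$ is a simple path ending at a dominating pair. Your observation that the would-be ordinal pivot at $(B_0,O_0)$ with leaving column $n+1$ fails is correct and is exactly the reason $(B_0,O_0)$ has degree one.

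However, you yourself identify the real gap, and it is genuine: showing that the ordinal pivot is well-defined (the set $K$ in~\eqref{eq:appchoiceoford} is nonempty) at \emph{every} Scarf pair on the path except $(B_0,O_0)$, so that the degree-two property actually holds. This is not a routine detail --- it is the heart of Scarf's argument. Note in particular the tension with Lemma~\ref{lem:op-def} as stated in the paper, which asserts $K\neq\emptyset$ unconditionally; your own (correct) computation that $K=\emptyset$ at $(B_0,O_0)$ with leaving column $n+1$ shows that this statement is an overclaim. The argument ``$i_r \in K$'' in Lemma~\ref{lem:op-def} only goes through when $i_r \ne j_r$, equivalently when the reference column $j_r$ is not a singleton column sitting in $O$; at $(B_0,O_0)$ the reference column \emph{is} a singleton in $O_0\cap[n]$, and then one must further argue (as you do only implicitly) that the constraint $c_{1,k}>c_{1,j_r}$ rules out every $k\notin O_0$. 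What you need, and do not supply, is the invariant guaranteeing this pathology occurs \emph{only} at $(B_0,O_0)$ along the algorithm's trajectory --- roughly that the controlling column stays outside $O$ and the reference column is never a singleton in $O$ at any other visited pair. Relatedly, your one-line claim that a dominating pair has degree exactly one (``obtained by reversing the pivot that created it'') also needs the same machinery to rule out a second, cardinal-type in-edge. Without these verifications the parity count could fail: a non-initial Scarf pair with $K=\emptyset$ would be a degree-one dead end of the path, so the iteration would halt without producing a dominating basis.
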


Theorem \ref{thm:constructive-scarf} implies Scarf's lemma (Theorem \ref{thm:scarflemma}). 
The algorithm also provides a procedure to find a fractional stable matching for a given hypergraphic preference stystem (via  Theorem~\ref{thm:Domin-SM}). However, (similar to the simplex algorithm) the convergence may not be efficient for arbitrary $(A, b, C)$ satisfying the hypothesis of Theorem \ref{thm:constructive-scarf}. 

\subsection{Network Matrix and Network Hypergraph}\label{sec:pre-network}

Network hypergraphs are a special family of hypergraphs for which a hypergraphic preference system admits a stable matching (not just a  fractional stable matching). In this section, we introduce network hypergraphs and certain subfamilies of network hypergraphs that are of interest to this work. We first recall certain terminology. A directed graph $\mathcal{D}=(U,\mathcal{A})$, is specified by a finite set $U$ of \emph{vertices} and a set $\mathcal{A}=\{(u,v):u,v\in U\}$ of \emph{arcs}.
\begin{definition}[$\mathcal{D}$-Path, forward/backward arcs, and $\mathcal{D}$-directed path]\label{def:Mono-Path}
Let $\mathcal{D}=(U,\mathcal{A})$ be a directed graph. Let $F\subseteq \mathcal{A}$, and $P=(a_1,a_2,\dots,a_p)$ be an ordered set such that $F=\{a_1,a_2,\dots,a_p\}$.
\begin{enumerate}
    \item We say $F$ (or $P$) is a \emph{$\mathcal{D}$-path} if $P$ forms an undirected path. When the underlying digraph is clear from context, we omit $\mathcal{D}$ and say $F$ (or $P$) is a path.
    \item An arc $a_i\in F$ is \emph{forward} on $P$ if following the order $P$ we visit the tail of $a_i$ before visiting its head, otherwise $a_i$ is \emph{backward} on $P$.
    \item We say $F$ (or $P$) is a \emph{$\mathcal{D}$-directed path} if every arc in $P$ is a forward arc on $P$.
\end{enumerate}
\end{definition}

\begin{definition}[Path concatenation]\label{def:walk}
Let $P=(a_1,\dots,a_p)$ and $P'=(a_1',\dots,a_{p'}')$ be two paths. The starting vertex (resp.~ending vertex) of $P$ is the unique vertex of $a_1\setminus a_2$ (resp.~$a_p\setminus a_{p-1}$). If the ending vertex of $P$ is the same as the starting vertex of $P'$, we define $P\oplus P'=(a_1,\dots,a_p,a'_1,\dots,a'_{p'})$ as the \emph{concatenation} of the two paths.
\end{definition}

The concatenation of $P$ and $P'$ may not be a path since $P$ and $P'$ may share common vertices.

\begin{definition}[Directed tree]\label{def:direct-tree}
    A directed graph $\mathcal{T}$ is a \emph{directed tree} if the underlying undirected graph (obtained by dropping the orientation of all arcs) is connected and acyclic. 
\end{definition}



\begin{definition}[Network Matrix]\label{def:network-matrix}
Let $\mathcal{D}=(U,\mathcal{A})$ be a directed graph and $\mathcal{T}=(U,\mathcal{A}_0)$ be a directed tree on vertex set $U$. The network matrix corresponding to $(\mathcal{D}, \mathcal{T})$ is the matrix $M\in\{0,\pm 1\}^{\mathcal{A}_0\times \mathcal{A}}$ where for every $a=(u,v)\in\mathcal{A}$ and $a'\in \mathcal{A}_0$, we have 
\begin{equation}\label{eq:Network-Matrix}
    M_{a',a}=\left\{
    \begin{array}{cc}
        1 & \textrm{if $a'$ is a forward arc on the unique $\mathcal{T}$-path from $u$ to $v$;}\\
        -1 & \textrm{if $a'$ is a backward arc on the unique $\mathcal{T}$-path from $u$ to $v$;} \\
        0 & \textrm{if $a'$ does not belong to the unique $\mathcal{T}$-path from $u$ to $v$.}
    \end{array}
    \right.
\end{equation}

A matrix $M$ is a \emph{network matrix} if there exists a directed graph $\mathcal{D}=(U,\mathcal{A})$ and a directed tree $\mathcal{T}=(U,\mathcal{A}_0)$ such that the network matrix corresponding to $(\mathcal{D}, \mathcal{T})$ is $M$. We say $\mathcal{T}$ is the \emph{principal tree} corresponding to $M$.
\end{definition}
We note that a network matrix $M$ corresponding to $(\mathcal{D}=(U,\mathcal{A}),\mathcal{T}=(U,\mathcal{A}_0))$ is in standard form if and only if $\mathcal{A}_0\subseteq \mathcal{A}$ and the first $n$ columns of $M$ correspond to $\mathcal{A}_0$. 
Network matrices are totally unimodular (see,~e.g.~\cite{schrijver1998theory}). A consequence of this fact is the following: 

\begin{theorem}\label{thm:Network-Matrix-TU}
    Let $M\in\{0,\pm 1\}^{n\times m}$ be a network matrix. Then, all extreme points of the polyhedron $\{x\in\mathbb{R}^m_{\ge 0}:Mx=b\}$ with $b\in\mathbb{Z}^n$ are integer vectors.
\end{theorem}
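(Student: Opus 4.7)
The plan is to reduce the claim to two classical facts about network matrices and totally unimodular (TU) matrices, both of which are standard and can be cited from Schrijver's book \cite{schrijver1998theory} already referenced in the paper. Specifically, I would first establish that every network matrix $M$ is totally unimodular, and then invoke the standard linear programming result that for a TU constraint matrix with integer right-hand side, every extreme point is integral.

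For the first step, recall that by Ghouila-Houri's characterization it suffices to show that for every subset $R$ of the rows of $M$ (equivalently, every subset of arcs of the principal tree $\mathcal{T}$), there exists a bipartition $R=R_1\cup R_2$ such that the vector $\sum_{a'\in R_1}M_{a',\cdot}-\sum_{a'\in R_2}M_{a',\cdot}$ has entries in $\{-1,0,+1\}$. For a network matrix this partition can be constructed by exploiting the tree structure of $\mathcal{T}$: each column of $M$ corresponds to a $\mathcal{T}$-path whose forward and backward arcs alternate contributions of $+1$ and $-1$, so one can greedily two-color the arcs in $R$ along $\mathcal{T}$ so that consecutive arcs on any column's path receive opposite colors when they have the same orientation and the same color when they have opposite orientations. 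A telescoping argument along each $\mathcal{T}$-path then bounds the signed sum in each column by $1$ in absolute value. This is exactly the argument carried out in \cite{schrijver1998theory}, so in the write-up I would simply cite it.

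For the second step, let $x$ be an extreme point of $\{x\in\mathbb{R}^m_{\ge 0}:Mx=b\}$. Then the columns of $M$ indexed by the support of $x$ are linearly independent, and $x$ restricted to this support satisfies a square nonsingular subsystem $M_B \bar{x}=b$. Since $M$ is TU, the submatrix $M_B$ has $\det(M_B)\in\{-1,+1\}$, and by Cramer's rule each coordinate of $\bar{x}$ equals the ratio of an integer determinant (obtained by replacing a column of $M_B$ by $b$) to $\pm 1$. Thus $\bar{x}\in\mathbb{Z}^{|B|}$ and so $x\in\mathbb{Z}^m$.

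The main ``obstacle'' here is essentially nonexistent since both ingredients are classical; the only mild subtlety is making sure the sign conventions in Definition~\ref{def:network-matrix} (forward vs.\ backward arcs on the unique $\mathcal{T}$-path) line up with the sign conventions in the version of total unimodularity cited from \cite{schrijver1998theory}. I would therefore keep the proof to a single short paragraph that states total unimodularity of $M$ with a citation, and then one sentence deriving integrality of extreme points via Cramer's rule.
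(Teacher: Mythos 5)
Your proposal is correct and follows essentially the same route as the paper: the paper simply notes that network matrices are totally unimodular (citing \cite{schrijver1998theory}) and states the theorem as an immediate consequence, which is exactly the TU-plus-Cramer's-rule argument you lay out. The informal Ghouila-Houri sketch in your middle paragraph is somewhat garbled (the coloring rule as stated does not quite produce a telescoping bound), but since you ultimately defer to the citation for that step, the overall proof is sound and matches the paper's.
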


We will focus on $\{0,1\}$-valued network matrices. A $\{0,1\}$-valued matrix $M$ can naturally be represented by a hypergraph whose node-hyperedge incidence matrix is $M$. If $M$ is additionally a network matrix, then we call the associated hypergraph as a network hypergraph. We define them formally below. See Figure \ref{fig:Network_HG} for an illustration.
\begin{definition}[Network Hypergraph]\label{def:Network_HG}
    A hypergraph $H=(V,E)$ is a \emph{network hypergraph} if the node-hyperedge incidence matrix $A$ of $H$ is a network matrix. We say $(\mathcal{D},\mathcal{T})$ is the \emph{underlying network} of $H$ if the node-hyperedge incidence matrix $A$ of $H$ is the network matrix corresponding to $(\mathcal{D},\mathcal{T})$.
\end{definition}

\begin{figure}[h!]
    \centering
    \includegraphics[scale=1.05]{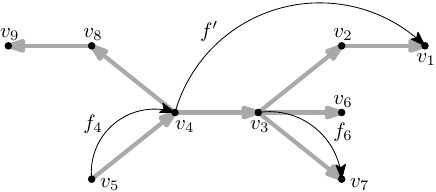}
    \hspace{1cm}
    \includegraphics[scale=1.05]{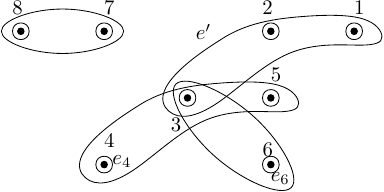}
    \caption{An example of a network hypergraph. On the left we present the principal tree using grey arcs. On the right we have a hypergraph with $V=[8]$ and 12 hyperedges (8 of them are singletons). We illustrate the hyperedges $e_4=\{4\}$, $e_6={6}$ and $e'=\{1,2,3\}$ on the right by the paths $f_4$, $f_6$ and $f'$ using black arcs on the left, respectively. In the principal tree we present, the unique source is $v_5$. The sinks are $v_1,v_6,v_7,v_9$. There are two branching vertices, namely $v_3$ and $v_4$.
    }
    \label{fig:Network_HG}
\end{figure}

We note that there exists a polynomial time algorithm to verify whether a given matrix is a network matrix and if so, then find the principal tree~\cite{schrijver1998theory}. 
Therefore, we can also use the same algorithm to verify whether a given hypergraph is a network hypergraph (equivalent to verifying whether a given $\{0,1\}$-valued matrix is a network matrix). By Theorems~\ref{thm:scarflemma},~\ref{thm:Domin-SM} and~\ref{thm:Network-Matrix-TU}, we have the following corollary:

\begin{corollary}\label{cor:Exist-SM}
    Let $H=(V, E)$ be a network hypergraph. For every preference set $\succ$, the hypergraphic preference system $(H=(V,E),\succ)$ admits a stable matching.
\end{corollary}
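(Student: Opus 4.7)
The plan is to chain together the three results cited immediately before the corollary. Given the hypergraphic preference system $(H, \succ)$, I would first construct the block-partitioned matrices $A$ and $C$ as in Definition \ref{def:block}, and set $b = \mathbf{1}^V$. By Theorem \ref{thm:Domin-SM}, the tuple $(A, b, C)$ already satisfies the hypotheses of Scarf's Lemma: $A$ is in standard form, $b$ is strictly positive, $P = \{x \in \R^E_{\ge 0} : Ax = b\}$ is bounded, and $C$ is an ordinal matrix. So I can invoke Theorem \ref{thm:scarflemma} to obtain a dominating extreme point $x^*$ of $(A,b,C)$, and then apply the second assertion of Theorem \ref{thm:Domin-SM} to conclude that $x^*$ is a fractional stable matching of $(H, \succ)$.

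The remaining task is to upgrade $x^*$ from fractional to integral. Here I would use that $H$ is a network hypergraph: its node-hyperedge incidence matrix is, by Definition \ref{def:Network_HG}, a network matrix. The block-partitioning in Definition \ref{def:block} merely permutes the columns of this incidence matrix, and permuting columns preserves the network matrix property (it corresponds to relabeling arcs in the directed graph $\mathcal{D}$). Hence the block-partitioned $A$ is still a $\{0,1\}$-valued network matrix, and Theorem \ref{thm:Network-Matrix-TU} applied with $b = \mathbf{1}^V \in \Z^V$ guarantees that every extreme point of $P$ is integer-valued. In particular, $x^*$ is integer-valued.

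Finally I would argue that $x^* \in \{0,1\}^E$, not merely in $\Z_{\ge 0}^E$. Pick any hyperedge $e$ with $x^*_e > 0$ and any $i \in e$; the row-$i$ equation of $Ax^* = \mathbf{1}$ reads $\sum_{e' \in \delta(i)} x^*_{e'} = 1$ with nonnegative integer summands, so each individual $x^*_{e'}$, including $x^*_e$, is either $0$ or $1$. Hence $x^*$ is the characteristic vector of a matching satisfying the stability condition \eqref{eq:sm}, i.e., an integral stable matching.

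The proof is essentially bookkeeping, so there is no substantive obstacle. The only step that requires a moment of care is the observation that column permutation preserves both the network-matrix property and the totally unimodular status of $A$, which is standard and implicit in the assertion of Theorem \ref{thm:Network-Matrix-TU}.
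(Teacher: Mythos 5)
Your proof is correct and follows exactly the route the paper indicates: it chains Theorem~\ref{thm:scarflemma} (existence of a dominating extreme point), Theorem~\ref{thm:Domin-SM} (dominating extreme point is a fractional stable matching), and Theorem~\ref{thm:Network-Matrix-TU} (integrality of extreme points for network matrices). The paper itself presents this corollary without proof beyond citing these three theorems, and your write-up fills in the same chain, including the correct (if standard) observation that column permutation preserves the network-matrix property and the final remark that integrality plus the row equations $\sum_{e'\in\delta(i)} x^*_{e'} = 1$ force $x^* \in \{0,1\}^E$.
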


\subsection{Arborescence Hypergraph, Interval Hypergraph}\label{sec:pre-arborescence}

We focus primarily on the network hypergraphic preference system where the principal tree $\mathcal{T}$ of the underlying network $(\mathcal{D},\mathcal{T})$ is an \emph{arborescence}.

\begin{definition}[Arborescence]\label{def:Arb}
    A directed tree $\mathcal{T}=(U,\mathcal{A}_0)$ is an \emph{arborescence} if there is a distinguished vertex $r\in U$ called \emph{root} such that for every $v\in U$, the unique $\mathcal{T}$-path from $r$ to $v$ is a $\mathcal{T}$-directed path.
\end{definition}

\begin{definition}[Arborescence Hypergraph]\label{def:Arb-HG}
    A network hypergraph $H=(V,E)$ with underlying network $(\mathcal{D}=(U,\mathcal{A}),\mathcal{T}=(U,\mathcal{A}_0))$ is an \emph{arborescence hypergraph} if the principal tree $\mathcal{T}=(U,\mathcal{A}_0)$ is an arborescence. 
\end{definition}


\begin{figure}[h!]
    \centering
    \includegraphics[scale=1.05]{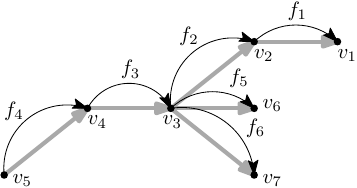}
    \hspace{1.5cm}
    \includegraphics[scale=1.05]{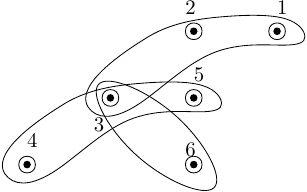}
    \caption{An example of an arborescence and an arborescence hypergraph. On the left we present an arborescence $\mathcal{T}$ with the root $v_5$. On the right we have a hypergraph $H$ with principal tree $\mathcal{T}$ with $V=[6]$ and 9 hyperedges (6 of them are singletons). We present the arcs in the arborescence which are corresponding to the singletons in the hypergraph.}
    \label{fig:Arborescence_HG}
\end{figure}


\begin{definition}[Interval Hypergraph]
    A hypergraph $H=(V,E)$ is an \emph{interval hypergraph} if $V=[n]$ and every $e\in E$ is such that $e=\{i, i+1, \ldots,j\}$ for some $i,j\in[n]$ with $i\le j$.
\end{definition}

An interval hypergraph is an arborescence hypergraph: Indeed, given an interval hypergraph $H=(V,E)$ with $V=[n]$, let $\mathcal{T}=(U,\mathcal{A}_0)$ be a directed graph where $U=[n+1]$ and $\mathcal{A}_0=\{(i,i+1):i\in[n]\}$. Let $\mathcal{D}=(U,\mathcal{A})$ be such that $\mathcal{A}=\{(i,j+1):i\le j,[i,j]\in E\}$. Then $H$ is a network hypergraph with underlying network $(\mathcal{D},\mathcal{T})$ where $\mathcal{T}$ is an arborescence.


\begin{figure}[h!]
    \centering
    \includegraphics[scale=1.1]{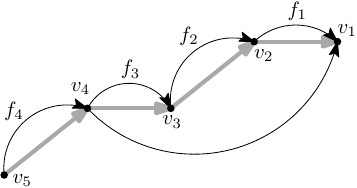}
    \hspace{1.5cm}
    \includegraphics[scale=1.1]{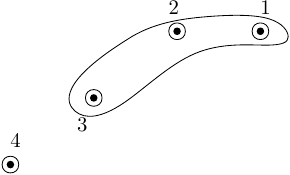}
    \caption{An example of an interval hypergraph. On the left we present the principal tree. On the right we have an interval hypergraph with $V=[4]$. The hyperedge $\{1,2,3\}$ can be seen as an interval $[1,3]$, and correspond to the black arc/grey path from $v_4$ to $v_1$ on the principal tree.}
    \label{fig:Interval_HG}
\end{figure}


\begin{definition}[source, sink, branching vertex, leaf]\label{def:Source-Sink-Branching}
    Let $\mathcal{T}=(U,\mathcal{A}_0)$ be a directed tree on $U$ and let $v\in U$. We denote $\delta^+_{\mathcal{T}}(v)=\{(v,u):u\in U,(u,v)\in\mathcal{A}_0\}$ and $\delta^-_{\mathcal{T}}(v)=\{(w,v):w\in U,(w,v)\in\mathcal{A}_0\}$. Let $d^+_{\mathcal{T}}(v):=|\delta^+_{\mathcal{T}}(v)|$ and $d^-_{\mathcal{T}}(v):=|\delta^-_{\mathcal{T}}(v)|$, and we call them out-degree, in-degree of $v$ in $\mathcal{T}$, respectively. If $\delta^+_{\mathcal{T}}(v)=0$ (resp.,~$\delta^-_{\mathcal{T}}(v)=0$), then we call $v$ a \emph{sink} (resp.,~\emph{source}). If $\delta^+_{\mathcal{T}}(v)=2$, then we say that $v$ is a \emph{branching vertex}. An arc $e\in\mathcal{A}_0$ is called a \emph{leaf} if its tail vertex is a sink.
\end{definition}

See Figure~\ref{fig:Network_HG} for an illustration of source, sink, branching vertex, and leaf. 
It can be verified that an arborescence is a directed tree with a unique source. 
A natural related question is the following: is it possible to verify whether a given network hypergraph is an arborescence hypergraph in polynomial-time? We leave this as an open question. 
For the purposes of this paper, we assume that the input arborescence hypergraph is given by its underlying network where the principal tree is an arborescence.

\section{Polynomiality of Scarf's Algorithm on Arborescence Hypergraphic Preference System}\label{sec:Poly-Scarf-Cases}

According to Corollary~\ref{cor:Exist-SM}, every hypergraphic preference system over an arborescence hypergraph admits a stable matching. The key contribution of this paper is showing that a stable matching in an arborescence hypergraphic preference system can be found in polynomial time via Scarf's algorithm.




\main*

To show Theorem~\ref{thm:main-arb-poly}, we establish certain properties of Scarf's algorithm applied to a hypergraphic preference system. From now on, let $(H=(V,E),\succ)$ be the hypergraphic preference system of interest. Let $A$ and $C$ be the block-partitioned matrices associated with this hypergraphic preference system and let $b\in \{1\}^V$ be the all-ones vector. 
In Section~\ref{sec:scarf-deep}, we discuss some general properties of Scarf pairs arising in the execution of Scarf's algorithm on hypergraphic preference system; in particular, we define the notion of a \emph{separator} associated with each Scarf pair which will act as a potential function to bound the number of iterations. In Section~\ref{sec:Pivot-NM}, we focus on the case where $H$ is a network hypergraph. For this family, we give a combinatorial interpretation of the cardinal pivot. In Section~\ref{sec:DFO} and Section~\ref{sec:FFL}, we focus on the case where $H$ is an arborescence hypergraph. We design a cardinal pivoting rule. In Section~\ref{sec:run-time}, we prove Theorem~\ref{thm:main-arb-poly} based on the properties established in Sections \ref{sec:scarf-deep}--\ref{sec:FFL}.


\subsection{Ordinal Pivots for Hypergraphic Preference System}\label{sec:scarf-deep}

Let $(H=(V,E),\succ)$ be the hypergraphic preference system of interest. Let $A$ and $C$ be the block-partitioned matrices associated with this hypergraphic preference system and let $b\in \{1\}^V$ be the all-ones vector. 

\subsubsection{Controlling node} 
We first make an assumption about $(A,b,C)$. When implementing Scarf's algorithm, the index $i=1$ can be seen as a controlling row/column in Scarf's algorithm (see~\cite{scarf1967core}). Without loss of generality, we can let the first row and column be artificial, which eliminates the asymmetry among original rows/agents. Formally, we have the following:

\begin{lemma}\label{lem:0-Row-Column}
Let $(A,b,C)$ be as above. Construct a new tuple $(A',b',C')$ where 
\begin{equation}\label{eq:0-Row-Column}
    A'=\left(\begin{matrix}
        1 & 0^T \\
        0 & A
    \end{matrix}\right),
    b'=\left(\begin{matrix}
        1  \\
        b
    \end{matrix}\right),
    C'=\left(\begin{matrix}
        0 & \xi^T \\
        \chi & C
    \end{matrix}\right),
\end{equation}
where $\chi=(M,\dots,M)^T$ such that $M>\max_{i,j}c_{ij}$, and $\xi=(m,m-1,\dots,2,1)^T$. We index the new row/column in $A',C'$ by $0$-th row/column.
Then $B$ is a dominating basis for $(A,b,C)$ if and only if $B'=B\cup\{0\}$ is a dominating basis for $(A',b',C')$. 

\end{lemma}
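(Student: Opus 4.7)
The plan is to verify both directions of the equivalence by analyzing how the utility vector transforms and how cardinal feasibility is preserved when the artificial row/column indexed by $0$ is adjoined. The key preliminary observation is a clean formula for $u^{B'}$: since $c'_{0,0}=0$ and the remaining entries of row $0$ are the positive integers $\xi_j$, we get $u^{B'}_0 = 0$ witnessed by column $0$; since $c'_{i,0} = M$ exceeds every original entry of $C$ while $c'_{i,j} = c_{i,j}$ for all $j\in[m]$ and $i\in[n]$, we get $u^{B'}_i = u^B_i$ for every $i\in[n]$. This reduces every ordinality check on $C'$ to a corresponding check on $C$.

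For the forward direction, I would assume that $B$ is a dominating basis for $(A,b,C)$ and set $B'=B\cup\{0\}$. Cardinal feasibility of $B'$ follows from the block structure of $A'$: column $0$ is the standard unit vector and every other column of $A'$ has $0$ in the first row, so $A'_{B'}$ has full row rank iff $A_B$ does, and the unique solution of $A'_{B'}\bar{x}=b'$ is $\bar{x}_0=1$ together with the unique nonnegative solution of $A_B\bar{x}_B=b$. For ordinality of $B'$ with respect to $C'$, I would handle two cases: the column $j=0$ is witnessed by row $0$ since $c'_{0,0}=0=u^{B'}_0$; for every column $j\in[m]$, ordinality of $B$ gives some row $i\in[n]$ with $c_{i,j}\le u^B_i = u^{B'}_i$, which also witnesses $C'_j\ngtr u^{B'}$.

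For the backward direction, I would assume $B'=B\cup\{0\}$ is a dominating basis for $(A',b',C')$. Cardinal feasibility of $B$ drops out by the same block argument in reverse: the first equation of $A'_{B'}\bar{x}=b'$ forces $\bar{x}_0=1$, after which the remaining equations are exactly $A_B\bar{x}_B=b$ with $\bar{x}_B\ge 0$. For ordinality of $B$ with respect to $C$, I would fix any column $j\in[m]$ and use ordinality of $B'$ to produce some row $i\in\{0\}\cup[n]$ with $c'_{i,j}\le u^{B'}_i$. The key step is to rule out $i=0$: this fails because $c'_{0,j}=\xi_j\ge 1 > 0 = u^{B'}_0$. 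Hence $i\in[n]$ and $c_{i,j}\le u^B_i$, so $B$ is ordinal for $C$.

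The argument is essentially bookkeeping and poses no real obstacle. The only delicate point is making sure that the choices $M > \max_{i,j} c_{ij}$ and $\xi > 0$ in the construction are invoked in exactly the right places, namely to ensure that column $0$ of $C'$ is never preferred on rows $[n]$ (so that $u^{B'}_i=u^B_i$ there) and that row $0$ of $C'$ uniquely picks out column $0$ as its minimizer (so that row $0$ cannot serve as a witness for any original column in the backward direction).
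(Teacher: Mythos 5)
Your proof is correct and takes essentially the same approach as the paper: both split the claim into a cardinal part (handled via the block structure of $A'$, which forces $\bar{x}_0=1$ and reduces to $A_B\bar{x}_B=b$) and an ordinal part (handled via the observation that the utility vector satisfies $u^{B'}_0=0$ and $u^{B'}_i=u^B_i$ for $i\in[n]$, then ruling out row $0$ as a witness for columns $j\in[m]$ because $\xi_j>0$). Your write-up is slightly more explicit than the paper's in spelling out the case $j=0$ in the forward direction and in naming the contradiction that rules out $i=0$ in the backward direction, but the substance is identical.
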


\begin{proof}
For an $n \times n$ submatrix $A_B$ of $A$, we have that 
$$
A_{B'}=\begin{pNiceArray}{ccccc}[first-row,first-col]
     & 0 & &\cdots & \cdots & \cdots \\
     0 & 1 &\Vdots & 0 & \cdots & 0\\ \hdottedline
     \vdots & 0 & & \Block{3-3}{A_B} \\
     \vdots & \vdots & & \\
     \vdots & 0 & & \\
\end{pNiceArray}
$$
is an $(n+1)\times (n+1)$ submatrix of $A'$, with $\det{A_{B'}}\neq 0$ if and only if $\det{A_B}\neq 0$. Also, for $x\in\R^n_{\ge 0}$, $x'=(1,x^T)^T$ yields that $A_Bx=b$ if and only if $A_{B'}x'=b'$. Thus equivalently, $B'$ is an $(A',b')$-feasible basis if and only if $B$ is an $(A,b)$-feasible basis.\par
Similarly, for an $n\times n$ submatrix ${C_O}$ of $C$ we define the following $(n+1)\times (n+1)$ submatrix of $C'$ as
$$
C_O'=\begin{pNiceArray}{ccccc}[first-row,first-col]
     & 0 & &\cdots & \cdots & \cdots \\
     0 & 0 &\Vdots & \times & \cdots & \times\\ \hdottedline
     \vdots & M & & \Block{3-3}{C_O} \\
     \vdots & \vdots & & \\
     \vdots & M & & \\
\end{pNiceArray}.
$$

By definition, if $O$ is an ordinal basis of $C$, then the utility vector $u^O$ cannot have $u^O<C_j$ for any column $j\in [n+m]$. Thus, the utility vector $u^{O'}=(0,(u^O)^T)^T$ generated by $O'$ cannot have $u^{O'}<C'_j$ for any column $j\in\{0,1,\dots,n+m\}$. Thus, $O'$ is an ordinal basis for $C'$. Conversely, if $O'$ is an ordinal basis of $C'$, then again by definition, utility vector $u^{O'}$ has the form $u^{O'}=(0,(u^O)^T)^T$, where $u^O_i=\min_{\ell\in O}c_{i,\ell}$ for $i\in [n]$, since by assumption we have $c_{i,\ell}<M$. By the dominating property $u^{O'}<C_j$ is impossible for every column $j\in\{0,1,\dots,n+m\}$. Since $u^{O'}_0=0<c_{0j}$, we must have $u^O_i=u^{O'}_i\ge c_{i,j}$ for some $i\in[n]$, which implies the dominating property for $O$. Hence, $B$ is a dominating basis for $(A,b,C)$ if and only if $B'$ is a dominating basis for $(A',b',C')$.
\end{proof}

Another nice property of this modified instance is that the terminating pivot of Scarf's algorithm is always an ordinal pivot.

\begin{lemma}\label{lem:last-ordinal}
    Let $(A',b',C')$ be as in Lemma~\ref{lem:0-Row-Column}, let $(B_0,O_0)$ be the initial Scarf pair and consider the sequence of pivot~\eqref{eq:iterations}. Then, for every $i\ge 0$, $B_{i+1}\neq O_i$. Therefore, Scarf's algorithm terminates when $B_i=O_i$ for some $i\in\mathbb{Z}_+$, which happens only after an ordinal pivot.
\end{lemma}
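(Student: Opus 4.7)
The plan is to prove the lemma by induction on $i$, maintaining the invariant that at every Scarf pair $(B_i, O_i)$ produced during the execution we have $0 \in B_i$ and $0 \notin O_i$. Under this invariant, the conclusion $B_{i+1} \neq O_i$ follows immediately once we establish that the cardinal pivot preserves $0 \in B$, since then $0 \in B_{i+1} \setminus O_i$. The base case is immediate from $B_0 = \{0, 1, \ldots, n\}$ and $O_0 = \{1, 2, \ldots, n+1\}$.

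For the cardinal pivot, the key observation is that the $0$-th row of $A'$ is $(1, 0, 0, \ldots, 0)$, so within the submatrix $A'_{B_i}$ (which contains column $0$ by induction) only the column indexed by $0$ has a nonzero $0$-th entry. The entering column $j_t = O_i \setminus B_i$ is nonzero since $0 \notin O_i$, and thus $A'_{0, j_t} = 0$. Reading off the $0$-th coordinate of the system $A'_{B_i} y = A'_{j_t}$ gives $y_0 = 0$, so column $0$ fails the requirement $y_j > 0$ for being a leaving candidate. Hence $0 \in B_{i+1}$, which combined with $0 \notin O_i$ yields $B_{i+1} \neq O_i$.

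For the ordinal pivot, the leaving column $j_\ell$ is the unique element of $O_i \setminus B_{i+1}$, and since $0 \notin O_i$ we have $j_\ell \neq 0$. A short accounting using $|B_{i+1}| = |O_i| = n+1$, $|B_{i+1} \cap O_i| = n$, and $B_{i+1} \setminus O_i = \{0\}$ yields $B_{i+1} = (O_i \setminus \{j_\ell\}) \cup \{0\}$. Consequently $O_{i+1} = (O_i \setminus \{j_\ell\}) \cup \{j^*\}$ equals $B_{i+1}$ precisely when the entering column $j^* = 0$. If $j^* = 0$, the algorithm terminates via this ordinal pivot, which is exactly the behavior claimed; otherwise $j^* \neq 0$, so $0 \notin O_{i+1}$ and the invariant propagates.

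The main conceptual subtlety---and what might first look like the main obstacle---is the possibility that $0$ enters $O$ through an ordinal pivot. But the lemma does not actually require ruling this out in every ordinal pivot; it only asserts that termination cannot arise from a cardinal pivot. The case $j^* = 0$ corresponds precisely to the terminating ordinal pivot $B_{i+1} = O_{i+1}$, so the induction remains consistent: either the invariant propagates ($j^* \neq 0$), or the algorithm stops and no further Scarf pair needs to be analyzed ($j^* = 0$). The proof therefore reduces to the two elementary observations above, plus the bookkeeping between invariant maintenance and possible termination.
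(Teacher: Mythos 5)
Your proof is correct and follows essentially the same route as the paper: both rest on maintaining the invariant that $0 \in B$ and $0 \notin O$ at every Scarf pair and showing that the cardinal pivot cannot expel column $0$ because the $0$-th row of $A'$ is zero outside of column $0$. The only cosmetic difference is that you compute $y_0 = 0$ directly from the $0$-th coordinate of $A'_{B_i} y = A'_{j_t}$ to rule out $0$ as a leaving candidate, whereas the paper argues by contradiction that $0 \notin B'$ would make the $0$-th row of $A'_{B'}$ identically zero and hence $A'_{B'}$ rank-deficient; these are two phrasings of the same observation.
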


\begin{proof}
    Let $(B,O)$ be a Scarf pair. Then we have $|B\cap O|=n$ (notice that $|B|=|O|=n+1$ since we have $0$-th row) and $0=B-O$. Consider the cardinal pivot $(B,O)\to (B',O)$, we claim that $0\in B'$ and thus $B'\neq O$. In fact, if $0\notin B'$, then the $0$-th row of the submatrix $A_{B'}$ has all $0$'s, which implies $A_{B'}$ is not full-rank and thus $B'$ is not a cardinal basis, a contradiction. Therefore, since $0\in B'$ and $0\notin O$, we have $B'\neq O$. Thus, the algorithm does not terminate and perform the next step, which is an ordinal pivot $(B',O)\to (B',O')$. Since a cardinal pivot can not make the algorithm halt, the last pivot is an ordinal pivot and thus the last ordinal pivot gives $B_i=O_i$ for some $i\in\mathbb{Z}_+$. 
\end{proof}

In light of Lemma~\ref{lem:last-ordinal}, it follows that Scarf's algorithm with $(A',b',C')$ will terminate at the end of an iteration (i.e., the algorithm cannot terminate after just the cardinal pivot of an iteration). We remark that the property that the algorithm terminates after an ordinal pivot may not be true in general.

We will henceforth assume that our matrices $A,C$ have such a $0$-th row and column as shown in~\eqref{eq:0-Row-Column}. 

\subsubsection{Separator}

When applied to the hypergraphic preference system, Lemma~\ref{lem:0-Row-Column} implies that without loss of generality, we can add a node $0$ to the instance, with the only hyperedge that contains $0$ being $e_0=\{0\}$. After a stable matching for the new hypergraphic preference system is obtained, removing $\{0\}$ gives a stable matching in the original instance. This modification to the instance, however, changes the node that ``controls'' Scarf's algorithm. In the modified instance, we have $V=\{0,1,\dots,n\}$ and $E=\{e_0,e_1,\dots,e_m\}$ where the hyperedge indices are ordered according to the block-partition of $A,C$, that is, $e_j$ corresponds to the $j$-th column of $A$ (as well as $C$). In particular, for $i\in V$, we use $e_i=\{i\}$ to denote a singleton hyperedge. 





Let $(B,O)$ be a Scarf pair. Suppose the $0$-disliked column w.r.t.~$O$ is $j^\rightarrow$. We recall that by Lemma~\ref{lem:0-Row-Column}, the $0$-th row of $C$ is $(0,m,m-1,\dots,2,1)^T$, which is decreasing except for the $0$-th column. Since $(B,O)$ is a Scarf pair, $0\notin O$. Thus, the $0$-disliked column $j^\rightarrow$, by definition, is the rightmost column in $O$ (i.e.~the column with the largest index). This is the reason for denoting it as $j^{\rightarrow}$.

\begin{definition}\label{def:Separator}
    Let $(B,O)$ be a Scarf pair. Suppose the $0$-disliked column w.r.t.~$O$ is $j^\rightarrow$. If $j^\rightarrow$ belongs to the $i$-th block $S_i$, then we say that $i$ is the \emph{separator} of $(B,O)$.
\end{definition}

Since the blocks form a partition of the non-singleton hyperedges, it follows that the separator is unique. We show that the separator satisfies the following properties:

\begin{lemma}\label{lem:Separator-singleton}
    Let $(B,O)$ be a Scarf pair and suppose $i$ is the separator of $(B,O)$. Let $e_i$ be the column which corresponds to the singleton hyperedge $\{i\}$. Then, $e_i$ belongs to $O$ and $e_i$ is $i$-disliked.
\end{lemma}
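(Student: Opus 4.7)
My plan is to exploit the bijection of Lemma~\ref{lem:disliked-oneone}: since each row dislikes a unique column in $O$, I will identify the column $j^*\in O$ disliked by row $i$ and show $j^*=e_i$, which immediately gives both conclusions. As preparation, I would catalog the row-$i$ values of $C$ using the block-partitioned construction of Definition~\ref{def:block}. Row $i$ has $c_{i,e_i}=0$ as its unique minimum; entries $c_{i,e_k}$ for singletons $k\neq i$ occupy the largest range, strictly above every non-singleton entry; entries $c_{i,j}$ for non-singletons $j$ with $i\in e_j$ lie in $\{1,\dots,|\delta(i)|-1\}$, with the important property that \emph{within block $S_i$, more $\succ_i$-preferred (hence further-left) columns have larger $c_{i,\cdot}$ values}; and entries $c_{i,j}$ for non-singletons $j$ with $i\notin e_j$ are at least $|\delta(i)|$. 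Note that $j^\rightarrow\in S_i$ contains $i$, so $c_{i,j^\rightarrow}\leq|\delta(i)|-1$.

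I would then rule out each possibility for $j^*\neq e_i$. If $j^*=e_k$ is a singleton with $k\neq i$, then $u^O_i=c_{i,e_k}$ lies in the ``other-singleton'' range and therefore strictly exceeds $c_{i,j^\rightarrow}$, contradicting the minimality of $u^O_i$ over $O\ni j^\rightarrow$. If $j^*$ is a non-singleton with $i\notin e_{j^*}$, then $u^O_i\geq|\delta(i)|>|\delta(i)|-1\geq c_{i,j^\rightarrow}$, again contradicting minimality.

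The subtlest case is $j^*$ non-singleton with $i\in e_{j^*}$. Then $j^*\in S_{i_t}$ with $i_t=\max_{k\in e_{j^*}}k\geq i$; but since blocks are arranged in index order and $j^\rightarrow$ is the rightmost column of $O$, no column of $O$ can sit in a block beyond $S_i$, so $i_t=i$ and $j^*\in S_i$. Row $0$ already uniquely dislikes $j^\rightarrow$ by Lemma~\ref{lem:disliked-oneone}, so $j^*\neq j^\rightarrow$, and $j^*$ lies strictly to the left of $j^\rightarrow$ within $S_i$; by the internal ordering of $S_i$, $i$ prefers $j^*$ to $j^\rightarrow$, giving $c_{i,j^*}>c_{i,j^\rightarrow}$---contradicting $u^O_i=c_{i,j^*}$ as a minimum over $O$. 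The main conceptual hurdle is precisely this last step: one must notice that inside a block the $\succ_i$-preference order and the numerical order of $c_{i,\cdot}$ are \emph{inverted}, so that $j^\rightarrow$ (the rightmost column of $O\cap S_i$) is automatically the $i$-minimizer on $O\cap S_i$; once this inversion is recognized, the rightmost status of $j^\rightarrow$ forces $j^*=e_i$.
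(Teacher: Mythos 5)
Your proof is correct and follows essentially the same route as the paper: both reduce to showing that the $i$-disliked column in $O$ must be the singleton $e_i$, using the block-partitioned structure of row $i$ of $C$ together with the fact that $j^\rightarrow$ is the rightmost column of $O$. Your case analysis is somewhat more explicit than the paper's---in particular, you separately rule out the possibility that the $i$-disliked column lies in a block $S_{i'}$ with $i'>i$ (a non-singleton containing $i$ whose maximum element exceeds $i$), whereas the paper folds this case in without comment.
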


\begin{proof}
Suppose $i$ is the separator. Consider the $0$-disliked column $j^\rightarrow$ in $O$. We have $u^O_0=c_{0,j^\rightarrow}$ by definition. By Lemma~\ref{lem:disliked-oneone}, the $i$-disliked column cannot be $j^\rightarrow$, which gives $u^O_i<c_{i,j^\rightarrow}$. Therefore, there exists some $j\in O$ such that $c_{i,j}<c_{i,j^\rightarrow}$.

Let $j\in O$ be such that $c_{i,j}<c_{i,j^\rightarrow}$. Recall the permutation of entries in $C$: $j$ is either in block $i$ or $j=i$ (in which case $c_{i,j}=c_{i,i}=0$). If the former happens, since the entries are decreasing inside the $i$-th block, we must have $j>j^\rightarrow$, which is a contradiction. Therefore, $j=i$ and thus $i\in O$. Since $c_{i,i}=0$, we have $u^O_i=0$ and column $i$ (singleton $e_i$) is $i$-disliked.
\end{proof}

\begin{lemma}\label{lem:Separator-Change}

    Let $(B,O)\to(B',O)\to(B',O')$ be an iteration in Scarf's algorithm. Suppose $j_\ell$ is the leaving column of the ordinal pivot $O\to O'$. Let $j_r$ and $j^*$ be the reference column and the entering column of the ordinal pivot $O\to O'$ respectively. If $j_\ell$ is $i$-disliked w.r.t. $O$ and $i$ is the separator of $(B, O)$, then we have the following:
    \begin{enumerate}
        \item[(i)] $j_r=j^\rightarrow$ is the rightmost column in $O$.
        \item[(ii)] $j_r$ is $0$-disliked w.r.t. $O$ and $j^*$ is $0$-disliked w.r.t. $O'$.
        \item[(iii)] The separator of $(B',O')$ is $i'$ for some $i'>i$.
    \end{enumerate}
    
\end{lemma}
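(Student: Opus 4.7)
The plan is to first pin down $j_\ell$ and $i_\ell$ concretely via Lemma~\ref{lem:Separator-singleton}, then treat (i), (ii), (iii) in order, with the block structure doing the work for (i) and Lemma~\ref{lem:utility} for (ii) and (iii). By Lemma~\ref{lem:Separator-singleton}, the singleton column $e_i$ is $i$-disliked w.r.t.~$O$; by hypothesis $j_\ell$ is also $i$-disliked, so by uniqueness of disliked columns (Lemma~\ref{lem:disliked-oneone}) we have $j_\ell=e_i$ and $i_\ell=i$. The reference column is therefore $j_r=\argmin_{j\in O\setminus\{e_i\}}c_{i,j}$, and $i_r$ is the unique row disliking $j_r$ w.r.t.~$O$.

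For part (i), I would case-analyze each $j\in O\setminus\{e_i,j^\rightarrow\}$, noting that any such $j$ lies strictly to the left of $j^\rightarrow$ and that $j^\rightarrow\in S_i$ implies $a_{i,j^\rightarrow}=1$ and hence $c_{i,j^\rightarrow}\le|\delta(i)|-1$. Case (a): $j=e_k$ is a singleton with $k\neq i$ (we need not consider $k=0$ since $0\in B\setminus O$); then $a_{i,j}=0$, so $c_{i,j}\ge|\delta(i)|>c_{i,j^\rightarrow}$. Case (b): $j\in S_k$ with $k<i$; then $\max_{l\in e_j}l=k<i$ forces $i\notin e_j$, so $a_{i,j}=0$ and again $c_{i,j}\ge|\delta(i)|>c_{i,j^\rightarrow}$. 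Case (c): $j\in S_i$ with $j\neq j^\rightarrow$; then $j$ sits strictly to the left of $j^\rightarrow$ inside $S_i$, and since $c_{i,\cdot}$-values decrease left-to-right inside $S_i$, $c_{i,j}>c_{i,j^\rightarrow}$. No $j\in O$ lies to the right of $j^\rightarrow$ by definition of $j^\rightarrow$. Hence $j_r=j^\rightarrow$.

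Part (ii) now follows cheaply. The first claim is immediate from the definition of $j^\rightarrow$ as the $0$-disliked column w.r.t.~$O$. For the second, Lemma~\ref{lem:disliked-oneone} together with the first claim forces $i_r=0$. Applying Lemma~\ref{lem:utility} with $i_r=0$ gives $u^{O'}_0=c_{0,j^*}$, making $j^*$ the $0$-disliked column w.r.t.~$O'$.

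For part (iii), by part (ii) the separator $i'$ of $(B',O')$ is the block index containing $j^*$. Lemma~\ref{lem:utility} also gives $u^{O'}_0=c_{0,j^*}<c_{0,j^\rightarrow}=u^O_0$, and since row $0$ of $C$ is strictly decreasing on the columns $1,\dots,m$, this forces $j^*>j^\rightarrow$ as column indices. Since $j^\rightarrow\in S_i$ sits past every singleton, $j^*$ lies in some block $S_{i'}$ with $i'\ge i$. To rule out $i'=i$: the defining condition $j^*\in K$ requires $c_{i,j^*}>u^{O\setminus\{j_\ell\}}_i=c_{i,j_r}=c_{i,j^\rightarrow}$; but if $j^*\in S_i$, then $j^*$ sits right of $j^\rightarrow$ within $S_i$, and the left-to-right decreasing entries in row $i$ inside $S_i$ would give $c_{i,j^*}<c_{i,j^\rightarrow}$, a contradiction. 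Hence $i'>i$. The main obstacle is the careful bookkeeping in part (i), where one has to keep straight how the position of a column in the block-partitioned order constrains both its $a_{i,\cdot}$ entry and its $c_{i,\cdot}$ value; (ii) and (iii) are then easy given Lemma~\ref{lem:utility}.
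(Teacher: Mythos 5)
Your argument tracks the paper's proof closely: pin down $j_\ell=e_i$ and $i_\ell=i$ via Lemma~\ref{lem:Separator-singleton}, deduce $j_r=j^\rightarrow$ and $i_r=0$, and use Lemma~\ref{lem:utility} plus the block structure for (ii) and (iii). Your case analysis in (i) is a more explicit version of the paper's appeal to the fact (established inside the proof of Lemma~\ref{lem:Separator-singleton}) that any $j\in O$ with $c_{i,j}<c_{i,j^\rightarrow}$ must equal $i$; both routes work.

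There is one genuine gap, in part (iii). From $u^{O'}_0 = c_{0,j^*} < c_{0,j^\rightarrow} = u^O_0$ you conclude that ``this forces $j^*>j^\rightarrow$ as column indices'' because row $0$ is strictly decreasing on columns $1,\dots,m$. But $j^*=0$ is also a priori consistent with $c_{0,j^*}<c_{0,j^\rightarrow}$ (indeed $c_{0,0}=0$ is the global minimum of that row), and monotonicity on $\{1,\dots,m\}$ says nothing about column $0$. If $j^*=0$ then $B'=O'$ and the algorithm terminates, so $(B',O')$ is not a Scarf pair and there is no separator of $(B',O')$ to speak of. The paper's proof explicitly notes this: it splits on $j^*<j^\rightarrow$ versus $j^*>j^\rightarrow$, rules out $0<j^*<j^\rightarrow$ by the utility argument exactly as you do, and observes that $j^*=0$ means termination, so the conclusion (iii) is read conditionally on the iteration producing a new Scarf pair. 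You need to add this case (or make the conditional reading explicit) before $j^*>j^\rightarrow$ is justified; with that addition, your proof matches the paper's.
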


\begin{proof}
Given the hypothesis that $i$ is the separator of $(B,O)$, by Lemma~\ref{lem:Separator-singleton} we know $j_\ell=i$. Also, we have argued in Lemma~\ref{lem:Separator-singleton} that if $j\in O$ such that $c_{i,j}<c_{i,j^\rightarrow}$, then $j=i$. Therefore, $c_{i,j^\rightarrow}$ is the second least entry of row $i$ among $O$. By definition of the reference column, we have $j_r=j^\rightarrow$. This shows (i).

By Definition~\ref{def:Separator}, $j^{\rightarrow}$ is $0$-disliked w.r.t.~$O$, so is $j_r$. By Lemma~\ref{lem:utility} with $i_r=0$, the ordinal pivot $O\to O'$ will introduce a new $0$-disliked column $j^*$ w.r.t. $O'$. This shows (ii).

By Lemma~\ref{lem:op-def}, when choosing the new entering column $j^*$, we need to satisfy that $c_{i',j^*}>u_{i'}^{O-j_\ell}$ for $i'\neq 0$. Thus in particular $c_{i,j^*}>u_i^{O-j_\ell}=u_i^{O-i}=c_{i,j_r}$ (the last equality holds since $c_{i,j_r}=c_{i,j^\rightarrow}$ is the second least entry of row $i$ among $O$). Thus, by the permutation rule of row $i$, either $j^*<j^\rightarrow$, or $j^*>j^\rightarrow$ and $j^*$ is not in block $i$. Suppose $j^*<j^\rightarrow$ happens. If $j^*=0$, then we have $B'=O'$ and the algorithm terminates. If $0<j^*<j^\rightarrow$, then we have $c_{0,j^*}>c_{0,j^\rightarrow}=c_{0,j_r}$, which violates Lemma~\ref{lem:utility}, a contradiction. Thus, if $(B',O')$ is a new Scarf pair, we can claim that $j^*>j^\rightarrow$ and $j^*$ is not in block $i$. Suppose that $j^*$ is in block $i'$, then by the fact $j^*>j^\rightarrow$, we have $i'>i$. By definition~\ref{def:Separator}, since $j^*$ is $0$-disliked w.r.t. $O'$, $i'$ is the new separator. This shows (iii). \end{proof}

\subsection{Feasible Cardinal Bases and Cardinal Pivots for Network Matrices}\label{sec:Pivot-NM}
In this section, we present properties about bases and cardinal pivots of network matrices. 
We start with the following property about network matrices of directed trees. 
\begin{lemma}\label{lem:Base-Inverse}
    Suppose that $\mathcal{T}_1=(U,\mathcal{A}_1)$ and $\mathcal{T}_2=(U,\mathcal{A}_2)$ are two directed trees. Let $A_1$ be the $\mathcal{A}_1\times\mathcal{A}_2$ network matrix corresponding to $(\mathcal{D}_1=(U,\mathcal{A}_2),\mathcal{T}_1)$, and $A_2$ be the $\mathcal{A}_2\times\mathcal{A}_1$ network matrix corresponding to $(\mathcal{D}_2=(U,\mathcal{A}_1),\mathcal{T}_2)$. Then,
    \begin{equation}\label{eq:Base-Inverse}
        A_1\cdot A_2=I_n.
    \end{equation}
\end{lemma}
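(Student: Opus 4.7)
The plan is to prove the matrix identity column by column: for each $a' \in \mathcal{A}_1$ I will show that the $a'$-th column of $A_1 A_2$ equals the standard basis vector $e_{a'} \in \mathbb{R}^{\mathcal{A}_1}$. Fix $a' = (x, y) \in \mathcal{A}_1$ and expand
\[
(A_1 A_2)_{\cdot, a'} \;=\; \sum_{b \in \mathcal{A}_2} (A_2)_{b, a'} \, (A_1)_{\cdot, b},
\]
noting that by Definition~\ref{def:network-matrix} only the arcs $b \in \mathcal{A}_2$ lying on the unique $\mathcal{T}_2$-path from $x$ to $y$ contribute, with $(A_2)_{b, a'} = \pm 1$ according as $b$ is forward or backward on that path. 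So the sum reduces to a signed sum of ``column-paths'' of $A_1$, one for each arc on the $\mathcal{T}_2$-path from $x$ to $y$.

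The central observation I would establish next is a signed path-composition identity. Writing the $\mathcal{T}_2$-path from $x$ to $y$ as a vertex sequence $x = z_0, z_1, \ldots, z_k = y$ whose $j$-th step is realized by an arc $b_j \in \mathcal{A}_2$, I would verify via a short case analysis (depending on whether $b_j$ is oriented from $z_{j-1}$ to $z_j$ or from $z_j$ to $z_{j-1}$) that the product $(A_2)_{b_j, a'} \cdot (A_1)_{\cdot, b_j}$ is exactly the signed indicator vector, in $\mathbb{R}^{\mathcal{A}_1}$, of the unique $\mathcal{T}_1$-path traversed from $z_{j-1}$ to $z_j$. In both cases the two sign-flips align: reversing the orientation of $b_j$ negates both $(A_2)_{b_j, a'}$ and $(A_1)_{\cdot, b_j}$, so the product is invariant and depends only on the ordered pair $(z_{j-1}, z_j)$.

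The final step is a telescoping argument, whose cleanest formulation is to fix a reference vertex $r \in U$ and define a ``potential'' $\phi(u) \in \mathbb{R}^{\mathcal{A}_1}$ as the signed indicator of the unique $\mathcal{T}_1$-path from $r$ to $u$ (with $\phi(r) := 0$). It follows directly from the definition of $A_1$ that the signed indicator of the $\mathcal{T}_1$-path traversed from any $u$ to any $w$ equals $\phi(w) - \phi(u)$. Plugging this into the previous step gives
\[
(A_1 A_2)_{\cdot, a'} \;=\; \sum_{j=1}^{k} \bigl[\phi(z_j) - \phi(z_{j-1})\bigr] \;=\; \phi(y) - \phi(x),
\]
which is the signed indicator of the $\mathcal{T}_1$-path from $x$ to $y$. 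Since $a' = (x, y) \in \mathcal{A}_1$, this $\mathcal{T}_1$-path consists of the single forward arc $a'$, so the column equals $e_{a'}$ as required. I expect the main obstacle to be nothing deep but rather the careful sign accounting in the path-composition step; once that identity is in place, the telescoping collapse and the final specialization to $a' \in \mathcal{A}_1$ are immediate.
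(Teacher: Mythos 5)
Your proof is correct, and it takes a genuinely different route from the paper's. The paper introduces the vertex-arc incidence matrices $X_1,X_2$ of the two directed trees, proves the intertwining identities $X_1 A_1 = X_2$ and $X_2 A_2 = X_1$ by interpreting each column of $A_i$ as the unique unit flow through the tree $\mathcal{T}_i$, and then concludes by dropping one row of each $X_i$ to get invertible matrices $X_1',X_2'$ and computing $A_1 A_2 = (X_1')^{-1}X_2'(X_2')^{-1}X_1' = I_n$. You instead compute $A_1A_2$ directly, column by column: you expand the $a'$-th column as a signed sum of columns of $A_1$ indexed by the arcs on the $\mathcal{T}_2$-path from tail to head of $a'$, check (via the sign case analysis) that the $j$-th summand is the signed indicator of the $\mathcal{T}_1$-path from $z_{j-1}$ to $z_j$, and then telescope using the potential $\phi(u)$ equal to the signed indicator of the $\mathcal{T}_1$-path from a reference vertex $r$ to $u$. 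Both are sound; yours is arguably more elementary and self-contained (no appeal to ranks of incidence matrices or reduced incidence matrices), while the paper's buys the structurally informative identity $X_1 A_1 = X_2$, which exhibits $A_1$ as the change-of-basis matrix between the two trees' incidence matrices. One minor point: the identity ``signed indicator of the $\mathcal{T}_1$-path from $u$ to $w$ equals $\phi(w)-\phi(u)$'' is a property of tree paths (decompose via the meet vertex of the paths $P(r,u)$ and $P(r,w)$, or invoke that concatenating $P(u,v)$ with $P(v,w)$ and cancelling forward/backward repeats leaves the sum of signed indicators unchanged) rather than something that ``follows directly from the definition of $A_1$''; it deserves a one-line justification, but it is a standard fact and the gap is cosmetic.
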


\begin{proof}
    Without loss of generality, we assume $\mathcal{A}_1\cap\mathcal{A}_2=\emptyset$. Otherwise, we can remove the common arcs from each set, split them into connected components, and then apply the result for each connected component, which corresponds to the same block of matrix $A_1$, $A_2$. Putting the submatrices together gives the desired result.

Let $\mathcal{D}_{1,2}=(U,\mathcal{A}_1\cup\mathcal{A}_2)$ be the directed graph whose arcs are all and only the arcs from $\mathcal{T}_1$ and $\mathcal{T}_2$, denote by $A$ the \emph{incidence matrix} of $\mathcal{D}_{1,2}$,~i.e.,~$A\in\{0,\pm 1\}^{U\times (\mathcal{A}_1\cup\mathcal{A}_2)}$ such that

\begin{equation}\label{eq:Digraph-Matrix}
    A(v,a)=\left\{
    \begin{array}{cc}
        1 & \textrm{if $v$ is the head of $a$;}\\
        -1 & \textrm{if $v$ is the tail of $a$;} \\
        0 & \textrm{otherwise.}
    \end{array}
    \right.
\end{equation}
Denote by $A=(X_1|X_2)$, where the columns of $X_1,X_2$ correspond to the arcs in $\mathcal{A}_1,\mathcal{A}_2$, respectively. One can see that $X_1,X_2$ is the incidence matrix of $\mathcal{T}_1,\mathcal{T}_2$, respectively. We claim the following:

\begin{claim}\label{cl:invert}
    $X_1\cdot A_1=X_2$ and $X_2\cdot A_2=X_1$.
\end{claim}

\noindent \emph{\underline{Proof of Claim~\ref{cl:invert}.}} Indeed, to show the first equation, consider a column $y^a$ in matrix $X_2$ and suppose it corresponds to the arc $a=(u,v)\in \mathcal{A}_2$. Then, $y^a$ has only two nonzeros,~i.e.~$y^a_u=1$ and $y^a_v=-1$. Consider the linear system $X_1\cdot z=y^a$. Here, the solution $z$ represents a feasible unit flow sent from $u$ (as a supply) to $v$ (as a demand) in the directed graph $\mathcal{T}_1=(U,\mathcal{A}_1)$ since $X_1$ is the incidence matrix of $\mathcal{T}_1$. Since $\mathcal{T}_1$ is a tree on $U$, the only way to send a flow from $u$ to $v$ is to find the unique $\mathcal{T}_1$-path from $u$ to $v$, and assign $1$ unit flow on the forward arcs and $-1$ unit flow on the backward arcs. Thus, there is a unique solution $z$ to $X_1\cdot z=y^a$ such that $z^{a'}\neq 0$ if and only if $a'$ appears in the $\mathcal{T}_1$-path from $u$ to $v$, and is $1$ (resp.~$-1$) if $a'$ is forward (resp.~backward). Compare with the definition of~\eqref{eq:Network-Matrix}, we have that the feasible solution $z$ is exactly the column vector of $A_1$ that corresponds to $a$. This argument applies to every $a\in\mathcal{A}_2$, hence $X_1\cdot A_1=X_2$. Similarly it holds that $X_2\cdot A_2=X_1$. \hfill $  \blacksquare$ 

\smallskip

We notice that both $X_1$ and $X_2$ have rank $|U|-1$, as they are the incidence matrix of a directed tree. Let $X_1'$ and $X_2'$ be obtained from $X_1$ and $X_2$ by deleting the same row from both, respectively. Then $X_1'$ and $X_2'$ are both invertible matrices with
$$X_1'\cdot A_1=X'_2\textrm{ and }X_2'\cdot A_2=X_1'.$$
Thus, we have $A_1\cdot A_2=(X_1')^{-1} X_2'(X_2')^{-1}X_1'=I_n$.
\end{proof}

From Lemma~\ref{lem:Base-Inverse}, we deduce that every cardinal basis for a network matrix corresponds to a directed tree on $U$.

\begin{theorem}[\cite{schrijver1998theory}]\label{lem:Base-SpanningTree}
Let $\mathcal{D}=(U,\mathcal{A})$ be a directed graph with $|U|=n$, $|\mathcal{A}|=m$ and $\mathcal{T}=(U,\mathcal{A}_0)$ be a directed tree. Let $A$ be the $\mathcal{A}\times \mathcal{A}_0$ network matrix corresponding to $(\mathcal{D},\mathcal{T})$. Consider a subset $B\subset [m]$ of $n$ columns. Denote the corresponding arc set by $\mathcal{A}_B$. Then the following statements are equivalent:
    \begin{enumerate}
        \item[(i)] $B$ is a cardinal basis. 
        \item[(ii)] $\mathcal{T}_B=(U,\mathcal{A}_B)$ is a directed tree.
    \end{enumerate}
\end{theorem}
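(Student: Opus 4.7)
The plan is to prove the two implications separately, using Lemma~\ref{lem:Base-Inverse} as the main tool for the direction (ii) $\Rightarrow$ (i), and using a cycle-gives-linear-dependence argument (in the spirit of the flow interpretation developed in the proof of Lemma~\ref{lem:Base-Inverse}) for the converse (i) $\Rightarrow$ (ii).

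For (ii) $\Rightarrow$ (i), I first observe that the submatrix $A_B$ of $A$ is itself a network matrix: concretely, it is the $\mathcal{A}_0 \times \mathcal{A}_B$ network matrix corresponding to $(\mathcal{D}_B=(U,\mathcal{A}_B),\mathcal{T})$, since the entries of $A$ in Definition~\ref{def:network-matrix} depend only on how each column arc is routed along the principal tree $\mathcal{T}$. Symmetrically, since the hypothesis guarantees that $\mathcal{T}_B=(U,\mathcal{A}_B)$ is a directed tree, I may form the $\mathcal{A}_B \times \mathcal{A}_0$ network matrix $A''$ corresponding to $(\mathcal{D}''=(U,\mathcal{A}_0),\mathcal{T}_B)$, using $\mathcal{T}_B$ as the principal tree. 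Lemma~\ref{lem:Base-Inverse} applied to the pair of directed trees $(\mathcal{T},\mathcal{T}_B)$ on the common vertex set $U$ then gives $A_B \cdot A'' = I$, so $A_B$ is square and invertible, which is exactly the condition that $B$ be a cardinal basis.

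For (i) $\Rightarrow$ (ii), I argue by contrapositive: suppose $\mathcal{T}_B$ is not a directed tree. By counting (the size of $B$ is exactly the number of arcs in a spanning tree of $U$, matching the row count of $A$), the underlying undirected graph of $\mathcal{T}_B$ must contain an undirected cycle $a_1,\dots,a_k$. Choose signs $\epsilon_1,\dots,\epsilon_k \in \{+1,-1\}$ according to whether each $a_i$ is forward or backward along a fixed traversal of the cycle; then $\sum_{i=1}^k \epsilon_i y^{a_i} = 0$, where $y^{a_i}$ denotes the column of the vertex-arc incidence matrix of $(U,\mathcal{A}_B)$ as in~\eqref{eq:Digraph-Matrix} (the standard statement that a cycle is a conservative flow). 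By the flow characterization isolated inside Claim~\ref{cl:invert}, each column $(A_B)_{a_i}$ is the unique vector $z$ satisfying $X_1 z = y^{a_i}$, where $X_1$ is the incidence matrix of the principal tree $\mathcal{T}$. Linearity forces
\[
X_1\Bigl(\sum_{i=1}^k \epsilon_i (A_B)_{a_i}\Bigr) = \sum_{i=1}^k \epsilon_i y^{a_i} = 0,
\]
and since the incidence matrix of a tree has linearly independent columns, we conclude $\sum_{i=1}^k \epsilon_i (A_B)_{a_i} = 0$. This is a nontrivial linear dependence among the columns of $A_B$, so $B$ is not a cardinal basis.

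The main subtlety I foresee is the dimension bookkeeping — making sure that the identity matrix produced by Lemma~\ref{lem:Base-Inverse} matches the size of $A_B$, and that ``$n$ arcs'' is indeed the correct count for a spanning directed tree of $U$. Beyond that, the proof is essentially the classical correspondence between bases of a graphic matroid and spanning trees, transported into the language of network matrices via the flow interpretation. No new technical ideas beyond Lemma~\ref{lem:Base-Inverse} are required.
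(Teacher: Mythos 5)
Your proof is correct and follows essentially the same two-step route the paper takes: for (ii) $\Rightarrow$ (i) you identify $A_B$ with a network matrix and invoke Lemma~\ref{lem:Base-Inverse} to get invertibility, exactly as the paper does; for (i) $\Rightarrow$ (ii) you observe that a cycle in $(U,\mathcal{A}_B)$ produces a signed linear dependence among the corresponding columns of $A$, which is the paper's argument as well. The only difference is cosmetic — where the paper asserts the cycle dependence $\sum_i \lambda_i A_{j_i}=0$ ``by definition of network matrix,'' you derive it more carefully through the flow interpretation $X_1 z = y^a$ extracted from Claim~\ref{cl:invert}, which makes that step fully explicit but is not a different method.
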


\begin{proof}
    If $B$ is a basis, then $|B|=n$. If $(U,\mathcal{A}_B)$ is not a tree, then there exists a cycle in $\mathcal{A}_B$. Suppose that the arcs $a_{j_1}\to\dots\to a_{j_k}$ corresponding to the columns $j_1,\dots,j_k\in B$ create a cycle. Then, let $A_{j_1},\dots,A_{j_k}$ be the corresponding column vectors of $A$. Begin with the tail of $a_{j_1}$, walk along the cycle through the arcs $a_{j_1}\to\dots\to a_{j_k}$, define $\lambda_i=1$ if $a_{j_i}$ goes forward and $\lambda_i=-1$ if $a_{j_i}$ goes backward. By definition of network matrix, $\lambda_1 A_{j_1}+\dots+\lambda_k A_{j_k}=0$. Therefore, $A_{j_1},\dots,A_{j_k}$ are linearly dependent, a contradiction. Thus $(U,\mathcal{A}_B)$ is a directed tree.

    Conversely, if $\mathcal{T}_B=(U,\mathcal{A}_B)$ is a directed tree, the submatrix $A_B$ is a network matrix corresponding to $(\mathcal{T},\mathcal{T}_B)$. By Lemma~\ref{lem:Base-Inverse}, $A_B$ is invertible, thus $B$ is a basis.
\end{proof}

\medskip


    

Next, we will give a combinatorial interpretation of the cardinal pivot for a network matrix. We need the following lemma. 

\begin{lemma}\label{lem:Forward-Backward-Formula}
Let $\mathcal{D}=(U,\mathcal{A})$ be a directed graph with $|U|=n$, $|\mathcal{A}|=m$ and $\mathcal{T}=(U,\mathcal{A}_0)$ be a directed tree. Let $A$ be the $\mathcal{A}\times \mathcal{A}_0$ network matrix corresponding to $(\mathcal{D},\mathcal{T})$, $b=1^V$, and let $P:=\{x\in \R^{\mathcal{A}_0}_{\ge 0}: Ax=b \}$. 
    Let $B$ be a cardinal basis of $A$ and suppose $j_t\notin B$. 
    Let $a_{j_t}=(v,v')$ be the arc corresponding to the column $j_t$. Let $P_B(v,v')$ be the $\mathcal{T}_B$-path from $v$ to $v'$ where $\mathcal{T}_B=(U,\mathcal{A}_B)$. Define $y:=A_B^{-1}\cdot A_{j_t}$. Then, we have

    \begin{equation}\label{eq:Forward-Backward-Formula}
    y_a=\left\{
    \begin{array}{cc}
        1 & \textrm{if $a\in\mathcal{A}_B$ is a forward arc on path $P_B(v,v')$,}\\
        -1 & \textrm{if $a\in\mathcal{A}_B$ is a backward arc on path $P_B(v,v')$, and} \\
        0 & \textrm{if $a$ does not belong to the path $P_B(v,v')$.}
    \end{array}
    \right.
\end{equation}
\end{lemma}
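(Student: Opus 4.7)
The plan is to verify that the vector $y^\ast\in\R^{\mathcal{A}_B}$ defined by the right-hand side of~\eqref{eq:Forward-Backward-Formula} satisfies $A_B\,y^\ast=A_{j_t}$. Since $B$ is a cardinal basis, Theorem~\ref{lem:Base-SpanningTree} ensures that $\mathcal{T}_B=(U,\mathcal{A}_B)$ is a directed tree, and Lemma~\ref{lem:Base-Inverse} (applied with $\mathcal{T}_1=\mathcal{T}$ and $\mathcal{T}_2=\mathcal{T}_B$) shows that $A_B$ is invertible. Consequently, the linear system $A_B y = A_{j_t}$ has a unique solution, and establishing $A_B\,y^\ast=A_{j_t}$ will give $y = y^\ast$, which is exactly the lemma.

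To establish $A_B\,y^\ast=A_{j_t}$, I will reuse the flow interpretation from the proof of Lemma~\ref{lem:Base-Inverse}. Let $X_\mathcal{T}$ and $X_B$ denote the $U\times\mathcal{A}_0$ and $U\times\mathcal{A}_B$ vertex-arc incidence matrices of $\mathcal{T}$ and $\mathcal{T}_B$ respectively (entry $+1$ at an arc's head, $-1$ at its tail, and $0$ elsewhere), and let $Y_{j_t}\in\R^U$ be the signed indicator of $a_{j_t}=(v,v')$, with $-1$ at $v$ and $+1$ at $v'$. Exactly as in the key identity of Claim~\ref{cl:invert}, one obtains
\[
X_\mathcal{T}\,A_B \;=\; X_B
\qquad\text{and}\qquad
X_\mathcal{T}\,A_{j_t}\;=\;Y_{j_t},
\]
the first because each column of $A_B$ encodes the routing through $\mathcal{T}$ of a unit of flow between the endpoints of the corresponding arc in $\mathcal{A}_B$, and the second by the same reasoning applied to $a_{j_t}\in\mathcal{A}$. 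A direct flow-conservation check along $P_B(v,v')$—at every internal vertex the two incident path arcs contribute cancelling $\pm 1$'s once multiplied by the signs prescribed by $y^\ast$, while $v$ and $v'$ each receive a single contribution of $-1$ and $+1$ respectively, and every vertex off the path receives none—yields
\[
X_B\,y^\ast\;=\;Y_{j_t}.
\]

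Combining these three equalities gives $X_\mathcal{T}\,(A_B y^\ast - A_{j_t}) = X_B y^\ast - Y_{j_t} = 0$. Since $\mathcal{T}$ is a tree, $X_\mathcal{T}$ has rank $|U|-1$ and, as observed inside the proof of Lemma~\ref{lem:Base-Inverse}, deleting any single row yields an invertible $n\times n$ submatrix $X'_\mathcal{T}$; hence $X'_\mathcal{T}\,(A_B y^\ast - A_{j_t}) = 0$ forces $A_B y^\ast = A_{j_t}$, as desired. The only step requiring any care is the flow-conservation identity $X_B y^\ast = Y_{j_t}$, but this is essentially the same telescoping already carried out in Claim~\ref{cl:invert}, so no new machinery is needed; alternatively one could argue $A_B y^\ast = A_{j_t}$ directly, row by row, by fixing $a'\in \mathcal{A}_0$, viewing it as a cut of $\mathcal{T}$, and telescoping the signed contributions of path arcs across that cut.
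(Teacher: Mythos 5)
Your proof is correct, and it takes a genuinely different route from the paper's. The paper expands $y=A_B^{-1}A_{j_t}$ explicitly: it uses Lemma~\ref{lem:Base-Inverse} to identify $A_B^{-1}$ with the network matrix of $\mathcal{A}_0$ with respect to $\mathcal{T}_B$, decomposes $a_{j_t}$ into the $\mathcal{T}$-path $P_0(v,v')=(f_{i_1},\dots,f_{i_q})$, replaces each $f_{i_s}$ by its $\mathcal{T}_B$-path $P_B^s$, concatenates these into a walk $W=\bar P_B^1\oplus\cdots\oplus\bar P_B^q$ from $v$ to $v'$ in $\mathcal{T}_B$, reads $y_a$ as the net signed multiplicity of $a$ on $W$, and finally argues that cancelling cycles from $W$ produces exactly $P_B(v,v')$. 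You instead \emph{verify} rather than compute: you take the candidate $y^\ast$ from~\eqref{eq:Forward-Backward-Formula}, establish $X_B y^\ast=Y_{j_t}$ by a one-step telescoping (flow conservation along $P_B(v,v')$), push everything through $X_\mathcal{T}$ using the two identities $X_\mathcal{T}A_B=X_B$ and $X_\mathcal{T}A_{j_t}=Y_{j_t}$ (both are instances of Claim~\ref{cl:invert}), and then use that $X_\mathcal{T}$ has trivial kernel to conclude $A_B y^\ast=A_{j_t}$, hence $y=y^\ast$. Your verification route sidesteps the walk-concatenation and cycle-cancellation argument entirely, which is the most delicate part of the paper's proof; the tradeoff is that you must already guess the formula, whereas the paper's computation could in principle be used to derive it. Both proofs rest on the same underlying fact (the incidence-matrix bridge of Claim~\ref{cl:invert}), but your deployment of it is cleaner.
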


\begin{proof}
    For simplicity, we abuse the notation by denoting $A_B^{-1}=(\beta_{ij})_{i,j\in[n]}$ and $A_{j_t}=(\gamma_j)_{j\in[n]}$. Also, to distinguish the arcs in $\mathcal{A}_0$ and $\mathcal{A}_B$ we let $f_i,a_j$ be the representative arcs in $\mathcal{A}_0,\mathcal{A}_B$, respectively. 

    Since $\mathcal{T}=(U,\mathcal{A}_0)$ is the principal tree and $B$ is a basis, by Lemma~\ref{lem:Base-Inverse}, the network matrix $A_2$ defined by $\mathcal{D}_2=(U,\mathcal{A}_0,\mathcal{T}_B)$ is exactly $A_B^{-1}$. Therefore we have $y=A_2\cdot A_{j_t}$.

     First, suppose that the arc $a_{j_t}=(f_{i_1},f_{i_2},\dots,f_{i_q})$ is composed by the unique $\mathcal{T}$-path $P_0(v,v')=(f_{i_1},f_{i_2},\dots,f_{i_q})$ from $v$ to $v'$. We abuse the notation by writing $f_k\in P_0(v,v')$ if $f_k\in\{f_{i_1},f_{i_2},\dots,f_{i_q}\}$. Notice that $\gamma_k\neq 0$ if and only if $f_k\in P_0(v,v')$ by~\eqref{eq:Network-Matrix}. Thus, we have 
    \begin{align}\label{eq:row-summation}
        y_i=\sum_{f_k\in P_0(v,v')}\beta_{ik}\gamma_k&=\sum_{f_k\in P_0(v,v'),\gamma_k=1}\beta_{ik}-\sum_{f_k\in P_0(v,v'),\gamma_k=-1}\beta_{ik}\nonumber\\
        &=\sum_{f_k\in P_0(v,v')\textrm{, $f_k$ forward}}\beta_{ik}-\sum_{f_k\in P_0(v,v')\textrm{, $f_k$ backward}}\beta_{ik}.
    \end{align}
    For every $s\in[q]$, we define $P_B^s$ as the unique $\mathcal{T}_B$-path generated by $f_{i_s}$,~i.e.~one can define $P_B^s$ as the sequence $P_B^s=(a_{k_1},a_{k_2},\dots,a_{k_\ell})$, where $P_B^s$ starts from the tail of $f_{i_s}$ and ends at the head of $f_{i_s}$. By definition, since $A_2=A_B^{-1}$ is a network matrix corresponding to $(\mathcal{T},\mathcal{T}_B)$, $\beta_{ik}$ indicates the direction of the arc $a_i\in\mathcal{A}_B$ on $P_B^s$. Therefore, $\beta_{ik}=1$ (resp.~$-1$) if there exists $s\in[q]$ such that $k=i_s$, and $a_i$ is a forward (resp.~backward) arc on $P_B^s$. From~\eqref{eq:row-summation}, we obtain

    \begin{align}\label{eq:row-summation-decompose}
        y_i=&\sum_{s\in[q],\textrm{ $f_{i_s}$ forward}}\left(\mathbf{1}(a_i\in P_B^s,\textrm{$a_i$ forward})-\mathbf{1}(a_i\in P_B^s,\textrm{$a_i$ backward})\right)\nonumber\\
        &-\sum_{s\in[q],\textrm{ $f_{i_s}$ backward}}\left(\mathbf{1}(a_i\in P_B^s,\textrm{$a_i$ forward})-\mathbf{1}(a_i\in P_B^s,\textrm{$a_i$ backward})\right). 
    \end{align}

    Let $\bar{P}_B^s=P_B^s$ if $f_{i_s}$ is forward on $a_{j_t}$, and $\bar{P}_B^s$ is the reverse (reversing the order of the arcs in the sequence, for example, if $P_B^s=(a_{k_1},a_{k_2},\dots,a_{k_\ell})$, then $\bar{P}_B^s=(a_{k_\ell},\dots,a_{k_2},a_{k_1})$) of $P_B^s$ if $f_{i_s}$ is backward on $a_{j_t}$. Then~\eqref{eq:row-summation-decompose} becomes

    \begin{equation}\label{eq:row-summation-counting}
        y_i=\sum_{s\in[q]}(\mathbf{1}(a_i\in \bar{P}_B^s,\textrm{$a_i$ forward})-\mathbf{1}(a_i\in \bar{P}_B^s,\textrm{$a_i$ backward})).
    \end{equation}

    Now consider a walk $W=\bar{P}_B^1\oplus\cdots\oplus \bar{P}_B^q$ (see Definition~\ref{def:walk}). 
    Then according to~\eqref{eq:row-summation-counting}, $y_i$ counts exactly the difference between the number of times $a_i$ appears forward on $W$ and that of times $a_i$ appears backward on $W$. Notice that our goal is to relate $y$ with the path $P_0(v,v')$, where such path $P_0(v,v')$ is a reduced version of $W$ without redundant cycles. Notice that $P_0(v,v')$ can be obtained by deleting from $W$ the arcs that appear multiple times, and when deleting them, we cancel the forward and backward arcs pair by pair, which results in that the difference between the number of occurrences maintain the same. At the end, each arc $a_i$ either goes forward or backward, thus we have exactly the meaning shown in~\eqref{eq:Forward-Backward-Formula}.
\end{proof}

As a consequence of Lemma~\ref{lem:Forward-Backward-Formula}, we show that the leaving column of a cardinal pivot should necessarily be a forward arc in the unique path from the tail of the entering arc to the head of the entering arc. 
\begin{corollary}\label{lem:Leaving-Forward}
Let $\mathcal{D}=(U,\mathcal{A})$ be a directed graph with $|U|=n$, $|\mathcal{A}|=m$ and $\mathcal{T}=(U,\mathcal{A}_0)$ be a directed tree. Let $A$ be the $\mathcal{A}\times \mathcal{A}_0$ network matrix corresponding to $(\mathcal{D},\mathcal{T})$, $b=1^V$, and $P:=\{x\in \R^{\mathcal{A}_0}_{\ge 0}: Ax=b \}$. 
Let $B$ be a feasible cardinal basis and let $\mathcal{T}_B=(U,\mathcal{A}_B)$. 
     Consider the cardinal pivot from $B$ with entering column $j_t$. Let $a_{j_t}=(v,v')$ be the arc corresponding to column $j_t$. Let $P_B(v,v')$ be the unique $\mathcal{T}_B$-path from $v$ to $v'$ (unique by Theorem \ref{lem:Base-SpanningTree}). Then, $j_\ell$ is a leaving candidate of the cardinal pivot only if the arc $a_{j_\ell}$ is a forward arc on $P_B(v,v')$.
\end{corollary}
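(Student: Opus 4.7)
The plan is to invoke Lemma~\ref{lem:Forward-Backward-Formula} directly, since the corollary is essentially a reformulation of the sign pattern of the vector $y = A_B^{-1} A_{j_t}$ in the language of leaving candidates.

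First, I would recall what it means to be a leaving candidate. By Definition~\ref{def:cpdef}, a column $j_\ell \in B$ is a leaving candidate of the cardinal pivot from $B$ with entering column $j_t$ only if $y_{j_\ell} > 0$, where $y := A_B^{-1} A_{j_t} \in \mathbb{R}^B$. So it suffices to show that $y_{j_\ell} > 0$ implies that the arc $a_{j_\ell}$ is a forward arc on the unique $\mathcal{T}_B$-path $P_B(v,v')$ from $v$ to $v'$, with $a_{j_t} = (v,v')$.

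Next, I would apply Lemma~\ref{lem:Forward-Backward-Formula}, which gives an explicit combinatorial description of $y$: the entry $y_a$ equals $1$ if $a \in \mathcal{A}_B$ is forward on $P_B(v,v')$, $-1$ if $a$ is backward on $P_B(v,v')$, and $0$ otherwise. In particular $y_a \in \{-1, 0, 1\}$, and strict positivity $y_a > 0$ is equivalent to $a$ being a forward arc on $P_B(v,v')$. Specializing to $a = a_{j_\ell}$ for any leaving candidate $j_\ell$ immediately yields the claim.

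Since the whole argument reduces to combining the definition of leaving candidate with the sign characterization in Lemma~\ref{lem:Forward-Backward-Formula}, there is no real obstacle. The only thing to double-check is that Lemma~\ref{lem:Forward-Backward-Formula} applies under the same hypotheses as the corollary, which it does: both statements assume a network matrix $A$ associated with a principal directed tree $\mathcal{T}$, and the feasibility of $B$ is not needed for the combinatorial description of $y$ (it only guarantees existence of at least one leaving candidate through Lemma~\ref{lem:scarf-cp}, which is not required here).
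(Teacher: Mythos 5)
Your proof is correct and matches the paper's own argument: both invoke Lemma~\ref{lem:Forward-Backward-Formula} to characterize $y_a > 0$ as ``$a$ is forward on $P_B(v,v')$'' and then combine this with the requirement $y_{j_\ell}>0$ from Definition~\ref{def:cpdef}. No gaps or differences in approach.
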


\begin{proof}
By Lemma~\ref{lem:Forward-Backward-Formula}, $y_a>0$ if and only if $a$ is a forward arc on $P_B(v,v')$. Thus, by Definition~\ref{def:cpdef}, if $j_\ell$ is a leaving candidate, it has to satisfy $y_{a_{j_\ell}}>0$, which implies that $a_{j_\ell}$ is an arc that is forward on $P_B(v,v')$.
\end{proof}

In order to give a combinatorial interpretation of non-degenerate cardinal pivot for a network matrix, we will use the notion of augmenting and descending paths. 

\begin{definition}[Augmenting and Descending Paths]\label{def:x-alternating}
    Let $\mathcal{D}=(U,\mathcal{A})$, $x\in\mathbb{R}^\mathcal{A}$, and $P=(a_{j_1},\dots,a_{j_p})$ be a $\mathcal{D}$-path. 
    \begin{enumerate}
        \item The path $P$ is \emph{$x$-augmenting} if every forward arc $a_{fwd}$ on $P$ has $x_{a_{fwd}}=1$ and every backward arc $a_{bwd}$ on $P$ has $x_{a_{bwd}}=0$.
        \item The path $P$ is \emph{$x$-descending} if every forward arc $a_{fwd}$ on $P$ has $x_{a_{fwd}}=0$ and every backward arc $a_{bwd}$ on $P$ has $x_{a_{bwd}}=1$.
    \end{enumerate}
    
\end{definition}
We show that a non-degenerate cardinal pivot is characterized by an $x$-augmenting path.


\begin{theorem}\label{lem:Nonde-Pivot}
Let $\mathcal{D}=(U,\mathcal{A})$ be a directed graph with $|U|=n$, $|\mathcal{A}|=m$ and $\mathcal{T}=(U,\mathcal{A}_0)$ be a directed tree. Let $A$ be the $\mathcal{A}\times \mathcal{A}_0$ network matrix corresponding to $(\mathcal{D},\mathcal{T})$, $b=1^V$, and $P:=\{x\in \R^{\mathcal{A}_0}_{\ge 0}: Ax=b \}$. 
Let $B$ be a feasible cardinal basis and let $\mathcal{T}_B=(U,\mathcal{A}_B)$. 
Consider a cardinal pivot from $B$ to $B'$ with $j_t$ as the entering column where $a_{j_t}=(v,v')\in\mathcal{A}$. Let $x,x'$ be the extreme points corresponding to $B,B'$, respectively. Let $P_B(v,v')$ be the unique $\mathcal{T}_B$-path from $v$ to $v'$ (unique by Theorem \ref{lem:Base-SpanningTree}). The following statements are equivalent:

    \begin{enumerate}
        \item[(i)] The cardinal pivot is non-degenerate.
        \item[(ii)] $x_{j_t}=0$ and $x'_{j_t}=1$.
        \item[(iii)] $P_B(v,v')$ is $x$-augmenting.
        
        \item[(iv)] $P_B(v,v')$ is $x'$-descending.
    \end{enumerate}

\end{theorem}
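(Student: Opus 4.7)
My plan is to establish the four-way equivalence via the implications (i)~$\Leftrightarrow$~(ii)~$\Leftrightarrow$~(iii) and (iii)~$\Leftrightarrow$~(iv). The core tools are Lemma~\ref{lem:Forward-Backward-Formula}, which identifies $y := A_B^{-1} A_{j_t}$ as $+1$ on forward arcs of $P_B(v,v')$, $-1$ on backward arcs, and $0$ off the path; the cardinal-pivot formulas $x'_{j_t} = x_{j_\ell}/y_{j_\ell}$ and $x'_j = x_j - x'_{j_t}\, y_j$ for $j \in B$; and the integrality of extreme points from Theorem~\ref{thm:Network-Matrix-TU}, which in the relevant hypergraphic setting places $x, x' \in \{0,1\}^m$.

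For (i)~$\Leftrightarrow$~(ii), note that $x_{j_t} = 0$ automatically since $j_t \notin B$, so (ii) reduces to $x'_{j_t} = 1$. By Corollary~\ref{lem:Leaving-Forward}, any leaving candidate $j_\ell$ corresponds to a forward arc on $P_B(v,v')$, so $y_{j_\ell} = 1$ and hence $x'_{j_t} = x_{j_\ell} \in \{0,1\}$. Non-degeneracy is exactly $x'_{j_t} \neq 0$, i.e.~$x'_{j_t} = 1$. For (ii)~$\Rightarrow$~(iii), assume (ii). Each forward arc $a$ has $y_a = 1 > 0$ and thus lies in the leaving-candidate set, whose minimum ratio $x_a/y_a = x_a$ equals $x'_{j_t} = 1$; integrality forces $x_a = 1$. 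For a backward arc $a$, $y_a = -1$, and the update $x'_a = x_a + 1 \in \{0,1\}$ together with $x_a \geq 0$ forces $x_a = 0$. Conversely, (iii)~$\Rightarrow$~(ii) is immediate: any leaving candidate is forward with $x_{j_\ell} = 1$ by (iii), so $x'_{j_t} = 1$.

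For (iii)~$\Rightarrow$~(iv), since (iii) has already been shown equivalent to $x'_{j_t} = 1$, the update formula gives $x'_a = 1 - 1 = 0$ for forward and $x'_a = 0 - (-1) = 1$ for backward arcs on $P_B(v,v')$, which is exactly (iv). I expect the main obstacle to be the reverse implication (iv)~$\Rightarrow$~(iii). Rearranging $x_a = x'_a + x'_{j_t}\, y_a$, forward arcs give $x_a = x'_{j_t}$ and backward arcs give $x_a = 1 - x'_{j_t}$; integrality yields $x'_{j_t} \in \{0,1\}$, and $x'_{j_t} = 1$ immediately produces (iii). The difficulty is ruling out the degenerate case $x'_{j_t} = 0$, in which $x = x'$ and $P_B(v,v')$ would be simultaneously $x$-descending. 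My plan is to exclude this scenario by a symmetric analysis of the reverse pivot from $B'$ with entering column $j_\ell$: the analogous path $P_{B'}(u_\ell, v_\ell)$ passes through $a_{j_t}$ with its forward/backward orientations suitably flipped, and applying Lemma~\ref{lem:Forward-Backward-Formula} to this reverse pivot together with the feasibility and integrality of both extreme points should force $x'_{j_t} = 1$, closing the equivalence.
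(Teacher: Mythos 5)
Your treatment of (i)~$\Leftrightarrow$~(ii), (ii)~$\Rightarrow$~(iii), (iii)~$\Rightarrow$~(ii), and (iii)~$\Rightarrow$~(iv) tracks the paper's argument closely: both use Lemma~\ref{lem:Forward-Backward-Formula} together with the pivot update $x'_B = x_B - x'_{j_t}\,y$ and integrality. (A small phrasing slip: a forward arc need not itself be a leaving \emph{candidate}, only an entrant in the ratio test; your inequality $x_a \ge x'_{j_t}$ is what you actually use and it is correct.) Your (i)~$\Leftrightarrow$~(ii) goes through the pivot formula $x'_{j_t}=x_{j_\ell}/y_{j_\ell}$ and Corollary~\ref{lem:Leaving-Forward}, whereas the paper argues more directly from full rank of $A_B$, but both are fine.

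Your instinct to flag (iv)~$\Rightarrow$~(iii) as the trouble spot is sharper than the paper's own treatment: the paper asserts that ``by Lemma~\ref{lem:Forward-Backward-Formula} and $A_Bx'_B+A_{j_t}x'_{j_t}=b$ we can deduce (ii)'' without addressing the degenerate case $x'_{j_t}=0$, which is exactly the case you isolate. Unfortunately your proposed fix by a symmetric analysis of the reverse pivot from $B'$ with entering column $j_\ell$ does not close the gap, and in fact the implication (iv)~$\Rightarrow$~(ii) is false as stated. Consider a directed path $\mathcal{T}$ with $\mathcal{A}_0=\{f_1=(v_4,v_3),f_2=(v_3,v_2),f_3=(v_2,v_1)\}$, basis arcs $\mathcal{A}_B=\{a_1=(v_4,v_1),a_2=(v_2,v_1),a_3=(v_3,v_2)\}$ (so $x_{a_1}=1$, $x_{a_2}=x_{a_3}=0$), and entering arc $a_{j_t}=(v_3,v_1)$. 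Then $P_B(v_3,v_1)=(a_3,a_2)$ consists of two forward arcs, both with $x$-value $0$, so the min-ratio is $0$, the pivot is degenerate, $x'=x$, and yet $P_B(v,v')$ is $x'$-descending. So (iv) holds while (i), (ii), (iii) all fail. The reverse pivot from $B'$ with entering column $a_2$ is again degenerate, so it yields no contradiction. The equivalence only holds in the directions (i)~$\Leftrightarrow$~(ii)~$\Leftrightarrow$~(iii) and (ii)~$\Rightarrow$~(iv); in the paper's sole application (Lemma~\ref{lem:Out-Deg-1}), $x\neq x'$ is established first and the theorem is invoked only in the sound directions, so the flaw does not propagate — but you should not try to prove (iv)~$\Rightarrow$~(iii) unconditionally.
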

\begin{proof}
If the cardinal pivot is non-degenerate, then we have that $x\neq x'$. We claim that $x'_{j_t}\neq 0$. Indeed, by the equation $A_Bx_B+A_{j_t}x_{j_t}=A_Bx'_B+A_{j_t}x'_{j_t}=b$, if $x'_{j_t}=0$, then since $A_B$ has full rank, we must have $x_B=x'_B$, which results in $x=x'$, a contradiction. Since $x'$ is feasible, we have $x'_{j_t}>0$. By the fact that $A$ is totally unimodular, we have $x'_{j_t}=1$. Thus, (i) implies (ii). (ii) obviously implies (i) since by definition $x\neq x'$, thus the cardinal pivot is non-degenerate. Therefore, we obtain (i)$\Leftrightarrow$(ii).

    
By definition of pivoting (c.f.~Section~\ref{sec:pre-cp}), we have $A_Bx_B=1$ and $A_Bx_B+A_{j_t}x_{j_t}=1$. Thus we have $A_Bx'_B+A_{j_t}x'_{j_t}=1$, which gives $x'_B=x_B-A_B^{-1}A_{j_t}x'_{j_t}=x_B-x'_{j_t}y$ where $y$ is given by~\eqref{eq:Forward-Backward-Formula}. 
    
Now, assume (ii) holds. Observe that $x'_{j_t}=1$ and $x'_B\ge 0$ means for all $a\in\mathcal{A}_B$, $y_a=1$ implies $x_a=1$. By Lemma~\ref{lem:Forward-Backward-Formula}, $y_a=1$ if and only if $a$ is a forward arc on $P_B(v,v')$, thus every forward arc $a_{fwd}$ has $x_{a_{fwd}}=1$ and $x'_{a_{fwd}}=0$. On the other hand, consider any backward arc $a_{bwd}$ on $P(v,v')$, if $x_{a_{bwd}}=1$, then $x'_{a_{bwd}}=1-1\times(-1)=2$, a contradiction with feasibility of $x'$. Therefore, $x_{a_{bwd}}=0$ and $x'_{a_{bwd}}=1$. Thus, we obtain (ii)$\Rightarrow$(iii) and (ii)$\Rightarrow$(iv).

On the other hand, if (iii) holds, then by Lemma~\ref{lem:Forward-Backward-Formula} and $A_Bx_B+A_{j_t}x_{j_t}=1$, we obtain (ii). If (iv) holds, then again by Lemma~\ref{lem:Forward-Backward-Formula} and $A_Bx'_B+A_{j_t}x'_{j_t}=1$, we can deduce (ii). Therefore, we obtain (iii)$\Rightarrow$(ii) and (iv)$\Rightarrow$(ii).
\end{proof}


\subsection{Polynomiality of Scarf's algorithm}
In this section, we describe a pivoting rule and establish polynomiality of Scarf's algorithm using this pivoting rule for every arborescence hypergraphic preference system. 
\subsubsection{Depth-first Order}\label{sec:DFO}


In order to guarantee some nice properties on the cardinal bases (or equivalently, directed trees) visited by the algorithm, we permute the vertices and arcs of the principal tree according to a depth-first order. This order is a common topological order on trees. Details can be found in~\cite{cormen2022introduction}. 
\begin{definition}\label{def:depth-first}
    Let $\mathcal{T}=(U,\mathcal{A}_0)$ be an arborescence with root $r$. We define a partial order $\ge_\mathcal{T}$ over $U$ such that $v\ge_\mathcal{T} v'$ if the $\mathcal{T}$-directed path from $r$ to $v'$ passes through $v$. We say that $\mathcal{T}$ is \emph{depth-first} if
   \begin{enumerate}
    \item $U=\{v_1,\dots,v_{n+1}\}$, and for $i,j\in [n+1]$, $v_i\ge_\mathcal{T} v_j$ implies $i\ge j$. In particular, $v_{n+1}$ is the unique root of $\mathcal{T}$.
    \item $\mathcal{A}_0=\{f_1,\dots,f_n\}$, and for every $i\in [n]$, the head of $f_i$ is $v_i$.
   \end{enumerate}
\end{definition}
For example, the arborescence in Figure~\ref{fig:clean_order} is depth-first. For every given arborescence $\mathcal{T}=(U,\mathcal{A}_0)$ with $|U|=n+1$, a permutation of $U$ and $\mathcal{A}_0$ such that $\mathcal{T}$ is depth-first ordered can be found in time $O(n)$. 
\begin{figure}[h!]
    \centering
    \includegraphics[scale=1.2]{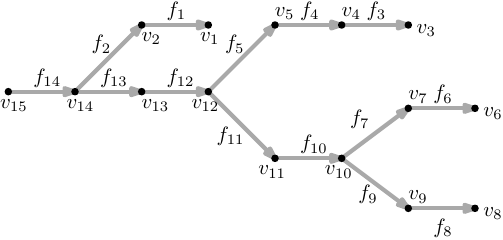}
    \caption{A depth-first arborescence $\mathcal{T}=(U,\mathcal{A}_0)$. $U=\{v_1,\dots,v_{15}\}$ and $r=v_{15}$ is the root. $\mathcal{A}_0=\{f_1,\dots,f_{14}\}$. Notice that for every $i\in[14]$, the head of $f_i$ is $v_i$, yet the tail of $f_i$ may not be $v_{i+1}$ (for example, $f_2=(v_{14},v_2)$).}
    \label{fig:clean_order}
\end{figure}

\subsubsection{The Pivoting Rule}\label{sec:FFL}

We have seen the interpretation of cardinal pivot for network matrices in Section~\ref{sec:Pivot-NM}. When applied to the incidence matrix of an arborescence hypergraph corresponding to $(\mathcal{D}=(U,\mathcal{A}),\mathcal{T}=(U,\mathcal{A}_0))$, a variable $x_a$ represents a specific arc $a\in \mathcal{A}$. From this perspective, by saying that the entering arc (resp.~leaving arc) is $a_{j_t}$ (resp.~$a_{j_\ell}$), we mean the entering variable/column (resp.~leaving variable/column) is $x_{j_t}$/$j_t$ (resp.~$x_{j_\ell}$/$j_\ell$). 


Let $B$ be a feasible cardinal basis and consider a cardinal pivot from $B$. Suppose that we have $a_{j_t}$ as an entering arc and $a_{j_t}=(v,v')$. Then, there is a $\mathcal{T}_B$-path from $v$ to $v'$, denoted by $P_B(v,v')$. By Corollary~\ref{lem:Leaving-Forward}, the leaving arc $a_{j_\ell}$ has to be a forward arc on $P_B(v,v')$. We will use the following \emph{first forward arc leaving (FFL) rule} as the pivoting rule. 

\begin{definition}[First forward arc leaving rule]\label{def:First-Forward-Leaving}
Let $B$ be a feasible cardinal basis. Let $a_{j_t}=(v,v')$ be the entering arc and $P_B(v,v')=(\bar{a}_1,\cdots,\bar{a}_p)$ be the unique $\mathcal{T}_B$-path. If there exists a forward arc $\bar{a}_k$ in $P_B(v,v')$ with $x_{\bar{a}_k}=0$, choose the one with smallest subscript $k$ as the leaving arc. If all forward arcs $\bar{a}_{fwd}$ in $P_B(v,v')$ have $x_{\bar{a}_{fwd}}=1$, then let $k$ be the smallest index such that $\bar{a}_k$ is forward on $P_B(v,v')$, and let $\bar{a}_k$ be the leaving arc. 
\end{definition}

We first justify that FFL rule is valid, meaning that for every entering variable $j_t\notin B$, (i) the leaving arc defined above exists and is unique, and (ii) suppose that $j_\ell$ is the leaving variable defined by the FFL rule, then $j_\ell$ is indeed a leaving candidate (c.f.~Definition~\ref{def:degenerate}).

\begin{lemma}\label{lem:FFL-well-defined}
    The FFL rule is a valid pivoting rule.
\end{lemma}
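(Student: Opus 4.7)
To show the FFL rule is a valid pivoting rule, the plan is to verify three things: (i)~that $P_B(v,v')$ always contains at least one forward arc so the rule is non-vacuous, (ii)~that the arc selected by the rule is a leaving candidate in the sense of Definition~\ref{def:cpdef}, and (iii)~that the selection is unique.

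The first step I would carry out is to use boundedness of $P$ together with Lemma~\ref{lem:scarf-cp} to conclude that at least one column $j \in B$ satisfies $y_j > 0$, where $y = A_B^{-1} A_{j_t}$. Lemma~\ref{lem:Forward-Backward-Formula} then identifies such columns precisely as forward arcs on $P_B(v,v')$, so a forward arc must exist on this path and both cases of the FFL rule are well-posed.

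For (ii), I would analyze $\min_{j \in B:\, y_j > 0} x_j/y_j$ directly. By Lemma~\ref{lem:Forward-Backward-Formula}, $y_{a_j} = 1$ on forward arcs of $P_B(v,v')$ and $y_{a_j} \le 0$ elsewhere, so this minimum equals $\min x_{a_j}$ taken over the forward arcs of $P_B(v,v')$. Since $A$ is the $0/1$ incidence matrix of an arborescence hypergraph with no all-zero column (arcs of $\mathcal{A}_0$ give identity columns in standard form, and arcs of $\mathcal{A} \setminus \mathcal{A}_0$ correspond to nontrivial directed $\mathcal{T}$-paths), the feasibility constraints $Ax = \mathbf{1}$ with $x \ge 0$ combined with Theorem~\ref{thm:Network-Matrix-TU} force $x \in \{0,1\}^m$. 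Hence the above minimum is $0$ exactly when some forward arc on $P_B(v,v')$ has $x = 0$, and is $1$ otherwise---which matches the two cases of the FFL rule verbatim. In each case the chosen arc attains the minimum with $y = 1 > 0$, so it is a leaving candidate. Uniqueness in (iii) is immediate because each case selects the smallest subscript along $P_B(v,v')$ among the tied candidates.

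I do not expect a serious obstacle here; the only mild subtlety is verifying that $x \in \{0,1\}^m$, which is a short consequence of total unimodularity of network matrices together with the standard-form assumption $\mathcal{A}_0 \subseteq \mathcal{A}$ ruling out identically zero columns. Once that is in hand, matching the FFL's case split to the two possible values of $\min x_{a_j}$ over forward arcs is essentially tautological.
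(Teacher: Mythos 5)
Your proof is correct and its overall structure matches the paper's: verify that a forward arc exists on $P_B(v,v')$, then verify that the FFL selection is a leaving candidate, then note uniqueness. The genuine difference is in the first step. The paper establishes the existence of a forward arc on $P_B(v,v')$ by a structural argument specific to arborescences: since every arc $(u,u')\in\mathcal{A}$ corresponds to a $\mathcal{T}$-directed path, one has $u\ge_{\mathcal{T}}u'$ in the depth-first partial order, and a path consisting only of backward arcs would force $v'\ge_\mathcal{T}\cdots\ge_\mathcal{T} v$, contradicting $v\ge_\mathcal{T} v'$. You instead invoke boundedness of $P$ and Lemma~\ref{lem:scarf-cp} to obtain a leaving candidate (hence some $y_j>0$), then translate via Lemma~\ref{lem:Forward-Backward-Formula} to conclude that a forward arc must exist. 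Your route is a bit slicker and more general in spirit (it would work for any network hypergraph, not only arborescences), while the paper's is more self-contained and elementary since it does not depend on the LP existence lemma. For step (ii), both arguments are essentially identical; you usefully make explicit the integrality $x\in\{0,1\}^m$ (via total unimodularity and the no-zero-column observation) that the paper's proof tacitly assumes when it writes ``either $x_{j_\ell}=0$, or $x_{j_\ell}=1$.'' Uniqueness is handled identically in both.
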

\begin{proof}
We begin by observing that for every $a=(u,u')\in\mathcal{A}$, we have that $u\ge_{\mathcal{T}}u'$. Let $a=(u, u')\in \mathcal{A}$. By definition, there exists a $\mathcal{T}$-directed path $P_0(u,u')$ from $u$ to $u'$. Let $r$ be the root of the arborescence $\mathcal{T}$, then the $\mathcal{T}$-path $P_0(r,u)$ from $r$ to $u$ has only forward arcs. Therefore, $P_0(r,u)\oplus P_0(u,u')$ also has only forward arcs, which implies that $P_0(r,u)\oplus P_0(u,u')$ is a $\mathcal{T}$-directed path from $r$ to $u'$. Thus, $u\ge_{\mathcal{T}} u'$. 

 Let $a_{j_t}=(v,v')\in\mathcal{A}$ be the entering arc. 
In order to show validity of the pivoting rule, we first show that there exists at least one forward arc in $P_B(v,v')$. By the observation in the previous paragraph, we have that $v\ge_\mathcal{T}v'$. For the sake of contradiction, suppose that $P_B(v,v')=(\bar{a}_1,\dots,\bar{a}_p)$ only contains backward arcs. Let $\bar{a}_i=(\bar{v}_i,\bar{u}_i)$ for every $i\in[p]$. 
Since $\bar{a}_i\in\mathcal{A}$, we have that $\bar{v}_i\ge_{\mathcal{T}}\bar{u}_i$ for every $i\in [p]$. Thus, we have $v'=\bar{v}_p\ge_{\mathcal{T}}\bar{u}_p=\bar{v}_{p-1}\ge_{\mathcal{T}}\dots\ge_{\mathcal{T}}\bar{u}_1=v$. This contradicts with $v\ge_{\mathcal{T}} v'$. Consequently, there exists at least one forward arc in $P_B(v,v')$. By the definition of the FFL rule, the choice of $\bar{a}_k$ is unique. 

Now, in order to show validity of the pivoting rule, it suffices to show that if $a_{j_\ell}=\bar{a}_k$, then $j_\ell$ is a leaving candidate. In order to show this, we need to show that $y_{\ell}>0$ and $\ell\in\arg\min\{x_j/y_j:j\in B,y_j>0\}$. By Lemma~\ref{lem:Forward-Backward-Formula}, we have $y_{\ell}>0$ (and indeed $y_a>0$ if and only if $a$ is a forward arc on $P_B(v,v')$). Thus, $j_\ell$ is always the minimizer as defined above, since either $x_{j_\ell}=0$, or $x_{j_\ell}=1$ and every $j\in B$ such that $y_j>0$ satisfies $x_j/y_j=1$ according to the FFL rule.
\end{proof}
We rewrite the cardinal pivot operation with FFL rule in Algorithm~\ref{alg:cp}.
\begin{algorithm}
\caption{Cardinal Pivot with FFL Rule}\label{alg:cp}
\begin{algorithmic}
\State{Let $B$ be the current feasible cardinal basis, associated with basic feasible solution $x$. Suppose that $j_t\notin B$ is the entering column.}
\State{Find $a_{j_t}=(v,v')$ and the $\mathcal{T}_B$-path $P_B(v,v')=(\bar{a}_1,\dots,\bar{a}_p)$}
\State{$I\gets \{i\in[p],\bar{a}_i\textrm{ is forward and }x_{\bar{a}_i}=0\}$}
\State{$J\gets \{i\in[p],\bar{a}_i\textrm{ is forward}\}$}
\If {$I\neq \emptyset$}
\State{$k\gets\min \{i:i\in I\}$}
\Else 
\State{$k\gets\min \{i: i\in J\}$}
\EndIf
\State{Let the leaving column $j_\ell$ be such that $a_{j_\ell}=\bar{a}_k$.}
\State{$B\gets B\cup\{j_t\}\setminus\{j_\ell\}$}
\end{algorithmic}
\end{algorithm}

\subsubsection{Proof of Theorem~\ref{thm:main-arb-poly}}\label{sec:run-time}



From now on, we assume the following: the arborescence hypergraph $H=(V,E)$ with principal tree $\mathcal{T}=(U,\mathcal{A}_0)$ where $U=\{v_1,\dots,v_{n+1}\}$ and $\mathcal{A}_0=\{f_1,\dots,f_n\}$ are depth-first (c.f.~Definition~\ref{def:depth-first}). The matrices $A,C$ as the input of Scarf's algorithm are block-partitioned (c.f.~Definition \ref{def:block}) with additional $0$-th row and column as in Lemma~\ref{lem:0-Row-Column}. Every cardinal pivot follows the FFL rule (c.f.~Definition~\ref{def:First-Forward-Leaving}). We will analyze Scarf's algorithm and show Theorem~\ref{thm:main-arb-poly} under these assumptions.

Let $(B_0,O_0)$ be the initial Scarf pair of the algorithm where $B_0=\{0,1,\dots,n\}$ and $O_0=\{1,2,\dots,n,n+1\}$. The algorithm executes alternatively a cardinal pivot with the FFL rule and an ordinal pivot. Then, we obtain a unique sequence~\eqref{eq:iterations} and each $(B_k,O_k)$ in that sequence is a Scarf pair. We will maintain $|B_k\cap O_k|\ge n$, and if $B_k\neq O_k$, then $0=B_k-O_k$. We now define a notion of a well-structured basis and will subsequently show that all cardinal bases visited by the algorithm 
are well-structured. 
This structural property of the basis will be helpful in bounding the number of iterations of the algorithm. 


\begin{definition}\label{def:nice-basis}
Let $B$ be a feasible cardinal basis and $\mathcal{T}_B=(U,\mathcal{A}_B)$ be the directed tree corresponding to $B$. Let $x$ be the extreme point associated with $B$. Let $i\in[n]$. We say that $B$ is an \emph{$i$-nice basis} if the following properties hold:
    \begin{enumerate}
        \item[(i)] Let $v\in U\setminus\{v_{n+1}\}$, and $P_B(v_{n+1},v)$ be the $\mathcal{T}_B$-path from $v_{n+1}$ to $v$. Then, $P_B(v_{n+1},v)$ is $x$-augmenting. 
        \item[(ii)] $\{f_i,f_{i+1},\dots,f_n\}\subset\mathcal{A}_B$. 
        \item[(iii)] Let $f\in\{f_i,f_{i+1},\dots,f_n\}$. Let $\mathcal{T}^{-f}=(U,\mathcal{A}_0-\{f\})$ be the subgraph of $\mathcal{T}$ with exactly two connected components (in the undirected sense). Denote by $U=R\cup W$ the partition of vertices of the two components such that $r\in R$ and $r\notin W$. If an arc $a\in\mathcal{A}_B$ has its end vertices in $R$ and $W$, then $a=f$. In other words, the removal of each arc $f\in \{f_i, f_{i+1}, \ldots, f_n\}$ from $\mathcal{T}_B$ partitions $U$ into the same sets as the removal of the same arc from $\mathcal{T}$.
        
    \end{enumerate}
\end{definition}
See Figure \ref{fig:Pivoting-Ex-B} for an example of an $i$-nice basis. 
The following lemma is the main result of the section. It will allow us to bound the number of iterations in the algorithm. 
\begin{lemma}\label{lem:Out-Deg-1}
Let $(B,O)$ be a Scarf pair and we consider the iteration (a cardinal pivot and an ordinal pivot) $(B,O)\to(B',O)\to(B',O')$. Suppose $j_t=O-B$ is the entering column of the cardinal pivot, $j_\ell=B'-B$ is the leaving column of the cardinal pivot, and $j^*=O'-O$ is the entering column of the ordinal pivot. Let $i\in[n]$ be the separator of $(B,O)$. Suppose $B$ is an $i$-nice basis and $j^*\neq 0$. Then, we have the following: 
\begin{enumerate}
    \item\label{it:key-1} Let $a_{j_t}=(v_j,v_k)$. Then, $f_i=(v_j,v_i)$ (recall that $f_i$ is the unique arc entering $v_i$ in $\mathcal{T}$). In other words, $a_{j_t}$ and $f_i$ share the same tail. In addition, $f_i$ is the first arc in $P_0(v_j,v_k)$, where $P_0(v_j,v_k)$ be the $\mathcal{T}$-path from $v_j$ to $v_k$.
    \item\label{it:key-2} $a_{j_\ell}=f_i$.
    \item\label{it:key-3} The separator of $(B',O')$ is $i'$ for some $i'>i$.
    \item\label{it:key-4} $B'$ is an $i'$-nice basis.
\end{enumerate}
\end{lemma}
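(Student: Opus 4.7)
The plan is to establish the four conclusions sequentially. For part~(1), first observe that $j_t=O-B$ is the $0$-disliked column of $O$ (it was introduced as such by the preceding ordinal pivot; cf.\ Lemma~\ref{lem:Separator-Change}(ii)), so by Definition~\ref{def:Separator} it lies in block $S_i$. Since $\mathcal{T}$ is depth-first, indices strictly decrease along any $\mathcal{T}$-directed path $v_{j_0}\to v_{j_1}\to\cdots\to v_{j_q}$, so the maximum-indexed arc of $P_0(v_j,v_k)$ is its very first arc; identifying this arc with $f_i$ forces $f_i=(v_j,v_i)$. For part~(2), property~(ii) of $i$-niceness places $f_i$ in $\mathcal{A}_B$, while property~(iii) applied with $f=f_i$ makes $f_i$ the unique crossing arc between the components $R\ni v_{n+1},v_j$ and $W\ni v_i,v_k$ of $\mathcal{T}_B-f_i$. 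Hence the simple path $P_B(v_j,v_k)$ must traverse $f_i$; starting at $v_j$ and never revisiting it, $f_i$ is in fact its first arc, traversed forward. Moreover $P_B(v_{n+1},v_k)$ also crosses $f_i$ at $v_j$, so $P_B(v_{n+1},v_k)=P_B(v_{n+1},v_j)\oplus P_B(v_j,v_k)$, and property~(i) then forces $P_B(v_j,v_k)$ to be $x$-augmenting. In particular every forward arc on it has $x$-value $1$, so the FFL rule of Definition~\ref{def:First-Forward-Leaving} picks the first forward arc, namely $f_i$. For part~(3), Lemma~\ref{lem:Separator-singleton} says the singleton column $\{f_i\}$ is $i$-disliked in $O$, and by part~(2) this column equals $j_\ell$; since $j_\ell\in B\cap O$, it is also the leaving column of the subsequent ordinal pivot, so Lemma~\ref{lem:Separator-Change}(iii) combined with $j^*\neq 0$ yields $i'>i$.

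Part~(4) requires checking the three defining properties of an $i'$-nice basis for $B'=(B\setminus\{f_i\})\cup\{a_{j_t}\}$. Property~(ii') is immediate, since $\mathcal{A}_{B'}$ still contains $\{f_{i+1},\ldots,f_n\}\supseteq\{f_{i'},\ldots,f_n\}$ by property~(ii) of $B$. For property~(iii'), fix an arc $f_\ell$ with $\ell\ge i'>i$ and verify, via a case split on whether $v_\ell$ is a proper ancestor of $v_j$ (noting that every descendant of $v_i$ in $\mathcal{T}$ has index at most $i$, so $v_\ell$ cannot be a descendant of $v_i$), that both $f_i$ and $a_{j_t}=(v_j,v_k)$ sit in the same component of $\mathcal{T}_B-f_\ell$. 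Since $\mathcal{T}_{B'}-f_\ell=(\mathcal{T}_B-f_\ell)-f_i+a_{j_t}$ must have exactly two components (being a spanning tree minus one arc), the addition of $a_{j_t}$ necessarily reconnects the two pieces that removing $f_i$ just split, recovering exactly the bipartition $(R_{f_\ell},W_{f_\ell})$ produced by removing $f_\ell$ from $\mathcal{T}$.

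Property~(i') is the heart of the argument and the main obstacle. First, since $P_B(v_{n+1},v_i)$ ends at the forward arc $f_i$ and is $x$-augmenting by property~(i) of $B$, one has $x_{f_i}=1>0$; consequently Theorem~\ref{lem:Nonde-Pivot} makes the cardinal pivot non-degenerate, $P_B(v_j,v_k)$ is both $x$-augmenting and $x'$-descending, and every arc off $P_B(v_j,v_k)$ satisfies $x'=x$. Observing $\mathcal{T}_{B'}-a_{j_t}=\mathcal{T}_B-f_i$, the partition $(R,W)$ also separates $\mathcal{T}_{B'}$, now with $a_{j_t}$ as its unique crossing arc. For $v\in R$, $P_{B'}(v_{n+1},v)$ coincides with $P_B(v_{n+1},v)$ and uses only arcs in $R$, none of which lie on $P_B(v_j,v_k)\subseteq\{f_i\}\cup W$; hence $x'=x$ on this path and the $x$-augmenting property transfers directly to $x'$. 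For $v\in W$, I decompose $P_{B'}(v_{n+1},v)=P_B(v_{n+1},v_j)\oplus a_{j_t}\oplus P_W(v_k,v)$, where $P_W$ denotes paths inside the $W$-subtree of $\mathcal{T}_B$: the first factor is untouched and augmenting, and $a_{j_t}$ is forward with $x'_{a_{j_t}}=1$. For the third factor I let $u$ be the LCA of $v_k$ and $v$ inside the $W$-subtree and write $P_W(v_k,v)=\mathrm{reverse}(P_W(u,v_k))\oplus P_W(u,v)$; on $P_W(u,v_k)\subseteq P_B(v_i,v_k)\subseteq P_B(v_j,v_k)$ every $x'$-value is flipped, which is precisely what turns the reverse of an $x$-augmenting sub-path into an $x'$-augmenting one, while on $P_W(u,v)$, which is disjoint from $P_B(v_j,v_k)$, we still have $x'=x$ and the $x$-augmenting status is inherited from $P_B(v_{n+1},v)$. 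Carefully tracking these direction flips on the reversed segment is the subtle bookkeeping step I expect to be the hardest part to write out rigorously.
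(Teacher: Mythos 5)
Your proposal follows essentially the same route as the paper's proof: it uses the depth-first order to show $f_i$ is the initial arc of $P_0(v_j,v_k)$, uses $i$-nice property~(iii) to force the $\mathcal{T}_B$-path from $v_j$ to $v_k$ to begin with $f_i$, invokes property~(i) and the FFL rule to conclude $a_{j_\ell}=f_i$, passes through Lemma~\ref{lem:Separator-Change} for part~(3), and verifies properties (i)--(iii) of $i'$-niceness via the same $R/W$ decomposition and the same non-degeneracy argument via Theorem~\ref{lem:Nonde-Pivot}. The decompositions in your property~(i') argument (with $u$ the LCA) are identical to the paper's (where $u$ corresponds to the ``last vertex $v_q$ incident to $F$'').

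There is one genuine gap. At the very start of part~(1), you assert that $j_t=O-B$ is the $0$-disliked column of $O$ ``because it was introduced as such by the preceding ordinal pivot.'' This appeals to an invariant from the previous iteration, which is not among the lemma's stated hypotheses (only the separator and $i$-niceness of $B$ are assumed), and invoking Lemma~\ref{lem:Separator-Change}(ii) for the \emph{previous} ordinal pivot in turn requires the conclusions of this very lemma for the previous iteration, making the justification circular as written. The paper closes this gap with a short self-contained argument: since the $0$-disliked column $j^\rightarrow$ of $O$ lies in block $S_i$, the arc $a_{j^\rightarrow}$ crosses the $(R_i,W_i)$ partition; by $i$-nice property~(iii) the only such arc in $\mathcal{A}_B$ is $f_i$, and $a_{j^\rightarrow}\neq f_i$, so $j^\rightarrow\notin B$ and hence $j^\rightarrow=O-B=j_t$. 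You should insert this argument (or restructure the lemma to carry the $0$-disliked invariant explicitly) before the rest of part~(1). A minor additional note: in property~(iii') your case split on whether $v_\ell$ is a proper ancestor of $v_j$ is vacuous, since $\ell>i$ and the depth-first order already force $v_j, v_i, v_k$ all to lie in the root-side component of $\mathcal{T}-f_\ell$.
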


\begin{figure}[h!]
    \centering
    \includegraphics[scale=1.2]{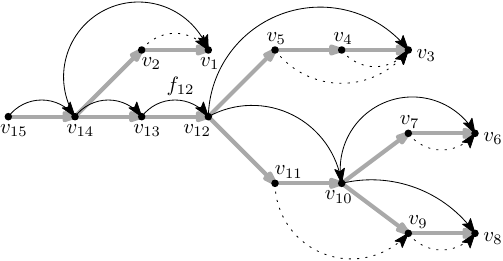}
    \caption{An example of of a basis $B$ that is a $12$-nice basis. The grey arcs are the arcs in  $\mathcal{T}=(U,\mathcal{A}_0)$ while the black arcs are the arcs in $\mathcal{T}_B=(U,\mathcal{A}_B)$. We note that $\mathcal{T}$ is the depth-first arborescence from Figure~\ref{fig:clean_order} and $\mathcal{T}_B$ is associated with a feasible cardinal basis $B$. 
    The solid circular arcs are associated to variables among $B$ with $x$-value $1$ and dotted circular arcs to variables among $B$ with $x$-value $0$. 
    }
    \label{fig:Pivoting-Ex-B}
\end{figure}

\begin{figure}[h!]
    \centering
    \includegraphics[scale=1.2]{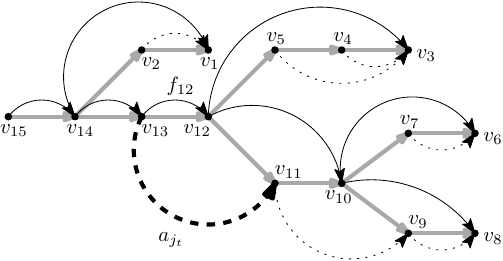}
    \caption{An example of cardinal pivot from the basis $B$ given in Figure~\ref{fig:Pivoting-Ex-B} with entering arc $a_{j_t}=(v_{13},v_{11})$ present by a dashed circular arc. This graph is $(U,\mathcal{A}_B\cup\{a_{j_t}\})$. The $\mathcal{T}_B$-path from $v_{13}$ to $v_{11}$ is $P_B(v_{13},v_{11})=((v_{13},v_{12}),(v_{12},v_{10}),(v_{10},v_{8}),(v_{9},v_{8}),(v_{11},v_9))$, where the first three arcs are forward and the last two arcs are backward.}
    \label{fig:Pivoting-Ex-jt}
\end{figure}

\begin{figure}[h!]
    \centering
    \includegraphics[scale=1.2]{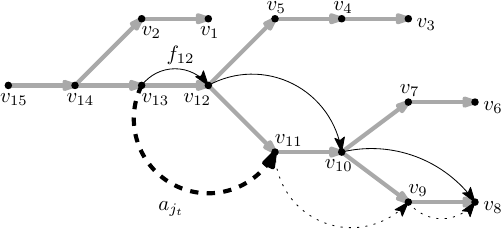}
    \caption{The cycle $F$ created by $a_{j_t}=(v_{13},v_{11})$ and $P_B(v_{13},v_{11})$ from Figure \ref{fig:Pivoting-Ex-jt}. We note that $f_{12}$ is the first arc in $P_B(v_{13},v_{11})$. By the FFL rule, the arc $f_{12}$ will leave $\mathcal{A}_B$.}
    \label{fig:Pivoting-Ex-cycle}
\end{figure}

\begin{figure}[h!]
    \centering
    \includegraphics[scale=1.2]{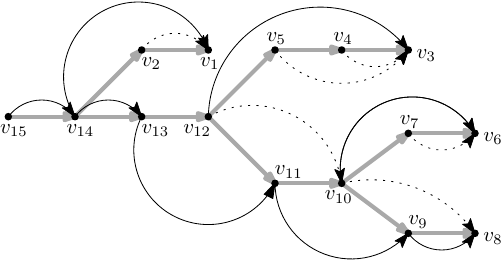}
    \caption{The feasible cardinal basis $B'$ after the cardinal pivot $B\to B'$. We note that the leaving arc $a_{j_\ell}=f_{12}$ is removed from the basis.}
    \label{fig:Pivoting-Ex-B'}
\end{figure}
\begin{proof}
We use the notation in Definition~\ref{def:nice-basis}.
\begin{enumerate}
    \item By Definition~\ref{def:Separator}, let $j^\rightarrow$ be the $0$-th disliked column w.r.t.~$O$. Then, $j^\rightarrow$ is in $S_i$, namely the $i$-th block of $C$. By the block structure, the hyperedge $e_{j^\rightarrow}\in E$ has $i\in e\subseteq\{1,\dots,i\}$. Suppose $a_{j^\rightarrow}=(w,w')$. Then, the $\mathcal{T}$-directed path $P_0(w,w')$ from $w$ to $w'$ contains $f_i$ and only contains arcs among $\{f_1,\dots,f_i\}$. Since $\mathcal{T}$ is depth-first ordered, $P_0(w,w')$ visits a sequence of arcs with subscripts in a descending order, thus the first arc is $f_i$. Therefore, $w=v_j$ where $(v_j,v_i)=f_i$. Also, $w'\neq v_i$ since $e_{j^\rightarrow}\neq e_i$ (notice that $j^\rightarrow$ is in the $i$-th block of $C$, but $i$ is among the first $n+1$ columns of $C$). We claim that $j^\rightarrow=O-B$ and thus $j_t=j^\rightarrow$. In fact, denote by $U=R_i\cup W_i$ the partition of vertices of the two components in $\mathcal{T}^{-f_i}=(U,\mathcal{A}_0-\{f_i\})$ as defined in Definition~\ref{def:nice-basis}(iii). Then, we have $w=v_j\in R_i$ and $w'\in W_i$ since $P_0(w,w')$ contains $f_i$. Therefore, $a_{j^\rightarrow}$ has its end vertices in $R_i$ and $W_i$. If $j^\rightarrow\in B$, then $a_{j^\rightarrow}=f_i$ by the $i$-nice property (iii) of $B$, which contradicts with $e_{j^\rightarrow}\neq e_i$. Hence, $j^\rightarrow=j_t$ since $j^\rightarrow\in O$ and $j^\rightarrow\notin B$. Notice that $a_{j^\rightarrow}$ satisfies the the properties mentioned as conclusions of the lemma, and hence, $a_{j_t}$ also satisfies these properties. 
    \item Let $P_B(v_j,v_k)$ be the $\mathcal{T}_B$-path from $v_j$ to $v_k$ (notice that $a_{j_t}=(v_j,v_k)$). We first claim that
    \begin{equation}\label{eq:vj-vk}
        P_B(v_j,v_k)=f_i\oplus P_B(v_i,v_k),\textrm{ and}
    \end{equation}
    \begin{equation}\label{eq:r-vk}
        P_B(v_{n+1},v_k)=P_B(v_{n+1},v_j)\oplus P_B(v_j,v_k).
    \end{equation}
     In fact, as claimed before, since $v_j\in R_i$ and $v_k\in W_i$, by $i$-nice property (iii) of $B$, $P_B(v_j,v_k)$ contains $f_i$. Also, if $f_i$ is not the first arc in $P_B(v_j,v_k)$, we visit $v_j$ twice through $P_B(v_j,v_k)$, a contradiction. Therefore, equation \eqref{eq:vj-vk} holds. Again, since $v_{n+1}\in R_i$ and $v_k\in W_i$, by $i$-nice property (iii) of $B$ we deduce that the path $P_B(v_{n+1},v_k)$ passes through $f_i=(v_j,v_i)$. Thus, $P_B(v_{n+1},v_k)=P_B(v_{n+1},v_j)\oplus f_i\oplus P_B(v_i,v_k)=P_B(v_{n+1},v_j)\oplus P_B(v_j,v_k)$, which gives~\eqref{eq:r-vk}. Now, by $i$-nice property (i) of $B$, since $P_B(v_{n+1},v_k)$ is $x$-augmenting, the subpath $P_B(v_j,v_k)$ is also $x$-augmenting. By~\eqref{eq:vj-vk}, we know that $f_i$ is the first arc on the $x$-augmenting path $P_B(v_j,v_k)$. By the FFL rule, $f_i$ is the leaving arc. Therefore, $a_{j_\ell}=f_i$.
    \item By the conclusion in part \ref{it:key-2} of the lemma, the leaving column $j_\ell$ is $i$, which corresponds to the singleton $e_i$. Notice that $i\in B$ and $0\in B-O$, we have $i\in O$. Thus, $0\le u^O_i\le c_{i,i}=0$, which means $u^O_i=c_{i,i}$ and $j_\ell$ is $i$-disliked w.r.t.~$O$. By Lemma~\ref{lem:Separator-Change}, the separator of $(B',O')$ is $i'$ with some $i'>i.$
    \item We verify the $i'$-nice properties (i)-(iii) of $B'$ in order. By definition, we have $B'=B\cup\{j_t\}\setminus \{j_\ell\}$. 

    (i) Let $v\in U\setminus\{v_{n+1}\}$. Let $P_B(v_{n+1},v)$ and $P_{B'}(v_{n+1},v)$ be the $\mathcal{T}_B$-path and $\mathcal{T}_{B'}$-path from $v_{n+1}$ to $v$, respectively. Let $F\subset\mathcal{A}$ be the arcs of the cycle formed by $P_B(v_j,v_k)$ and $a_{j_t}$,~i.e.~$a\in F$ iff either $a=a_{j_t}$ or $a$ is in $P_B(v_j,v_k)$. We claim that:
    \begin{claim}\label{cl:bad-vertex}
        If $v_p$ is incident to $F$,~i.e.~$v_p$ is an endpoint of some arc in $F$, then either $v_p=v_j$ or $v_p\in W_i$. Also, $v_j\ge_{\mathcal{T}}v_p$.
    \end{claim}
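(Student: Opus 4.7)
The plan hinges on combining the $i$-nice property (iii) of $B$ with the decomposition $P_B(v_j,v_k) = f_i \oplus P_B(v_i, v_k)$ established in equation~\eqref{eq:vj-vk}. The role of property (iii) is essentially to say that, inside $\mathcal{T}_B$, the arc $f_i$ is the only ``bridge'' between the two sides of the partition $U = R_i \cup W_i$ induced by deleting $f_i$ from $\mathcal{T}$, so any simple $\mathcal{T}_B$-path between two vertices on the same side must stay on that side.

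First, I would enumerate the vertices incident to $F$. By construction, $F$ is the arc set of the cycle formed by $a_{j_t}=(v_j,v_k)$ and $P_B(v_j,v_k)$. Using~\eqref{eq:vj-vk}, the latter path starts with $f_i=(v_j,v_i)$ and continues along $P_B(v_i,v_k)$. Hence every vertex incident to $F$ is either $v_j$, or one of $v_i,v_k$, or an internal vertex of $P_B(v_i,v_k)$.

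Second, I would argue that $v_i$, $v_k$, and all internal vertices of $P_B(v_i,v_k)$ lie in $W_i$. By definition $v_i \in W_i$, and $v_k \in W_i$ was already derived while proving part~\ref{it:key-1}. For an internal vertex $v_p$, note that $\mathcal{T}_B$ is a tree (Theorem~\ref{lem:Base-SpanningTree}), so $P_B(v_i,v_k)$ is simple. Since $f_i$ is the first arc of $P_B(v_j,v_k)$, the subpath $P_B(v_i,v_k)$ does not contain $f_i$ (otherwise $P_B(v_j,v_k)$ would revisit $v_j$). By the $i$-nice property (iii), $f_i$ is the unique arc of $\mathcal{A}_B$ with one endpoint in $R_i$ and the other in $W_i$; therefore, a path in $\mathcal{T}_B$ that does not use $f_i$ and starts in $W_i$ cannot exit $W_i$, so every vertex of $P_B(v_i,v_k)$ lies in $W_i$. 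Combining with the previous paragraph, every vertex incident to $F$ is either $v_j$ or belongs to $W_i$.

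Third, for the $\ge_{\mathcal{T}}$ comparison, recall that because $\mathcal{T}$ is an arborescence rooted at $v_{n+1}$, removing $f_i=(v_j,v_i)$ disconnects the subtree rooted at $v_i$ from the rest, so $W_i = \{v_p : v_i \ge_{\mathcal{T}} v_p\}$. The directed $\mathcal{T}$-path from the root to $v_i$ ends with $f_i$ and therefore visits $v_j$, giving $v_j \ge_{\mathcal{T}} v_i$. Transitivity then yields $v_j \ge_{\mathcal{T}} v_p$ for every $v_p \in W_i$, and $v_j \ge_{\mathcal{T}} v_j$ is trivial. The main obstacle, as I see it, is only bookkeeping: one must keep straight that property (iii) is what prevents $P_B(v_i,v_k)$ from leaking out of $W_i$, and that simplicity of paths in the tree $\mathcal{T}_B$ is what forbids a second use of $f_i$.
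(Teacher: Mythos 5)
Your proof is correct and follows essentially the same approach as the paper: invoke the decomposition $P_B(v_j,v_k)=f_i\oplus P_B(v_i,v_k)$ from~\eqref{eq:vj-vk} and use $i$-nice property (iii), which makes $f_i$ the unique $\mathcal{A}_B$-bridge between $R_i$ and $W_i$, to conclude that the path cannot return to $R_i$ once it has crossed $f_i$, and then observe that every $\mathcal{T}$-directed root-to-$v_p$ path with $v_p\in W_i$ must pass through $v_j$. Your version is a bit more explicit (enumerating endpoints and stating $W_i=\{v_p:v_i\ge_{\mathcal{T}}v_p\}$ before applying transitivity), but the underlying argument is identical.
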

    \noindent \emph{\underline{Proof of Claim~\ref{cl:bad-vertex}.}} If $v_p$ is incident to $F$, then $v_p$ is also visited by $P_B(v_j,v_k)$. By~\eqref{eq:vj-vk} and the $i$-nice property (iii) of $B$, $P_B(v_j,v_k)$ traverses from $R_i$ to $W_i$ through $f_i$, and never go back to $R_i$ as the unique bridge $f_i$ between $R_i$ and $W_i$ can be only used once. Thus, either $v_p=v_j$ or $v_p\in W_i$. If the latter happens, notice that $f_i$ is also the bridge between $R_i$ and $W_i$ on $\mathcal{T}$, the $\mathcal{T}$-directed path $P_0(v_{n+1},v_p)$ visits $v_j$, which implies $v_j\ge_{\mathcal{T}}v_p$.\hfill $ \blacksquare$    

    By the $i$-nice property (i) of $B$ and~\eqref{eq:r-vk}, $P_B(v_j,v_i)$ is $x$-augmenting and by~\eqref{eq:vj-vk}, it takes $f_i$ as a forward arc. Thus, $x_{f_i}=1$. Since $a_{j_\ell}=f_i$ and $f_i\notin \mathcal{A}_{B'}$, we have $x'_{f_i}=0$. Therefore, by Theorem~\ref{lem:Nonde-Pivot}, the cardinal pivot $B\to B'$ is non-degenerate and $P_B(v_j,v_k)$ is $x$-augmenting and $x'$-descending. Also, by the equation $x'_B-x_B=A_B^{-1}A_{j_t}(x_{j_t}-x'_{j_t})=y(-1)$, we have $x'_B=x_B-y$. By Lemma~\ref{lem:Forward-Backward-Formula}, for $s\in[m]$, $x_s\neq x'_s$ if and only if $a_s\in F$ (notice that for the non-basic column $s\notin B$, $x_s=x'_s=0$).

    
    Consider the directed graph $\mathcal{T}_{B'}=(U,\mathcal{A}_B\cup\{a_{j_t}\}\setminus\{a_{j_\ell}\})$, and let $P_{B'}(v_{n+1},v)$ be a $\mathcal{T}_{B'}$-path from $v_{n+1}$ to $v$. If $v\in R_i$, then $P_{B'}(v_{n+1},v)=P_B(v_{n+1},v)$, and they do not contain the arcs in $F$. Indeed, let $v\in R_i$ and by the $i$-nice property (iii) of $B$, if $P_B(v_{n+1},v)$ visits $W_i$, then it takes $f_i$ forwardly. However, since the destination $v\in R_i$, $P_B(v_{n+1},v)$ needs to travel back through $f_i$ backwardly, which implies that $P_B(v_{n+1},v)$ takes $f_i$ at least twice, which cannot be a path. Therefore, $P_B(v_{n+1},v)$ never visits $W_i$ and $f_i$ is not part of the path. Thus, $P_B(v_{n+1},v)$ is an $\mathcal{T}_{B'}$-path from $v_{n+1}$ to $v$. It is unique since $\mathcal{T}_{B'}$ is a tree. Thus, $P_{B'}(v_{n+1},v)=P_B(v_{n+1},v)$. By Claim~\ref{cl:bad-vertex}, $P_B(v_{n+1},v)$ does not intersect with $F$, otherwise it visits $W_i$, a contradiction. Thus, $P_{B'}(v_{n+1},v)$ also does not intersect with $F$. Notice that $P_B(v_{n+1},v)$ is $x$-augmenting, thus $P_{B'}(v_{n+1},v)$ is $x$-augmenting, which is also $x'$-augmenting since $x'_s=x_s$ for $a_s\notin F$. If $v\in W_i$, then by $i$-nice property (iii) of $B$, we have $P_B(v_{n+1},v)=P_B(v_{n+1},v_j)\oplus f_i\oplus P_B(v_i,v)$. Since $f_i\in F$, $P_B(v_{n+1},v)$ visits the arc(s) in $F$ and begins with $f_i$, as $P_B(v_{n+1},v_j)$ is fully contained in $R_i$ as we argued above (if we let $v=v_j$). Let $v_q$ be the last vertex incident to $F$ that $P_B(v_{n+1},v)$ visits, then we have 
    $$P_B(v_{n+1},v)=P_B(v_{n+1},v_j)\oplus f_i\oplus P_B(v_j,v_q)\oplus P_B(v_q,v),$$
    $$P_{B'}(v_{n+1},v)=P_{B'}(v_{n+1},v_j)\oplus(v_j,v_k)\oplus P_{B'}(v_k,v_q)\oplus P_{B'}(v_q,v),$$
    where $a_{j_t}=(v_j,v_k)\in\mathcal{A}_{B'}$. $P_{B'}(v_{n+1},v_j)$ is $x'$-augmenting since $P_B(v_{n+1},v_j)$ is $x$-augmenting, they are the same path and $x_s=x'_s$ for every $a_s$ in such path. $(v_j,v_k)=a_{j_t}$ is a path with single forward arc such that $x'_{j_t}=1$, thus is $x'$-augmenting. $P_{B'}(v_k,v_q)$ is $x'$-augmenting, since $P_B(v_k,v_q)$ is $x'$-descending, and every forward (resp.~backward) arc on $P_{B'}(v_k,v_q)$ is a backward (resp.~forward) arc on $P_B(v_k,v_q)$. $P_{B'}(v_q,v)$ is also $x'$-augmenting, since the same path $P_B(v_q,v)$ is $x$-augmenting and $x_s=x'_s$ for every $a_s$ in such path. Therefore, $P_{B'}(v_{n+1},v)$ is $x'$-augmenting.

    (ii) Notice that $\{f_i,\dots,f_n\}\subset \mathcal{A}_B$ by the $i$-nice property of $B$, then $\{f_{i+1},\dots,f_n\}\subset \mathcal{A}_{B'}$ since $\mathcal{A}_B\setminus\{f_i\}\subset \mathcal{A}_{B'}$ by $a_{j_\ell}=f_i$. Therefore, $\{f_{i'},f_{i'+1},\dots,f_n\}\subset \mathcal{A}_{B'}$ as $i'\ge i+1$.

    (iii) Let $f_q\in\{f_{i'},f_{i'+1},\dots,f_n\}$. Let $U=R,\cup W$ be the partition of vertices defined in Definition~\ref{def:nice-basis}(iii). Now, if $a\in\mathcal{A}_{B'}$ crosses between $R$ and $W$, we claim that $a=f_q$. Indeed, if we also have $a\in\mathcal{A}_{B}$, then by the $i$-nice property of $B$, we know that $a=f_q$ immediately. Otherwise, if $a\notin\mathcal{A}_{B}$, then $a=B'-B=a_{j_t}$. By $a_{j_t}=(v_j,v_k)$ and the assumption that $a$ corsses between $R$ and $W$, we have that $v_j\in R$ and $v_k\in W$. Notice that $v_j$ is the tail of $f_i=(v_j,v_i)$. If it holds $v_i\in W$, then $f_i\in\mathcal{A}_B$ also crosses between $R$ and $W$, which implies $f_i=f_q$, a contradiction with $f_q\in\{f_{i'},f_{i'+1},\dots,f_n\}$. Therefore, we have $v_i\in R$. Consider the $\mathcal{T}$-path $P_0(v_j,v_k)$ from $v_j$ to $v_k$. By the conclusion in part \ref{it:key-1} of the lemma, we know that $f_i=(v_j,v_i)$ is the first arc in $P_0(v_j,v_k)$. By the fact that $v_i\in R$ and $v_k\in W$, $f_q$ is part of $P_0(v_k,v_k)$, and appears after $f_i$. Notice that since $a_{j_t}=(v_j,v_k)\in\mathcal{A}_B$, $P_0(v_j,v_k)$ is a $\mathcal{T}$-directed path, we traverse along $P_0(v_j,v_k)$ through a sequence of vertices with decreasing order with respect to $\ge_{\mathcal{T}}$, which implies that the head of $f_i$ is greater than the head of $f_q$. Since $\mathcal{T}$ is depth-first, we have $v_i\ge_{\mathcal{T}}v_q$, and thus $i\ge q$, a contradiction with $q\ge i'>i$. 

    Therefore, $B'$ is an $i'$-nice basis.
\end{enumerate}
\end{proof}

\begin{lemma}\label{lem:Scarf-Linear-Time}
    Scarf's algorithm with FFL rule terminates in at most $n$ iterations.
\end{lemma}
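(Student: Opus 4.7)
The plan is to use the separator of the current Scarf pair as a monotone integer potential and apply Lemma~\ref{lem:Out-Deg-1} inductively. Since the separator always lies in $\{1, 2, \dots, n\}$ and strictly increases at every non-terminating iteration, the algorithm must halt after at most $n$ iterations.

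\textbf{Step 1: Base case.} I would first check that the initial Scarf pair $(B_0, O_0)$, with $B_0 = \{0, 1, \ldots, n\}$ and $O_0 = \{1, 2, \ldots, n, n+1\}$, has a well-defined separator $i_0 \in \{1,\dots,n\}$ and that $B_0$ is an $i_0$-nice basis. The separator $i_0$ is the block index containing column $n+1$, the rightmost (and hence $0$-disliked) column of $O_0$. For the $i_0$-nice properties, observe that the first $n$ non-artificial columns of $B_0$ correspond precisely to the singleton hyperedges, i.e., to the arcs $f_1, \dots, f_n$ of the principal arborescence, so $\mathcal{T}_{B_0}$ coincides with $\mathcal{T}$ on these arcs. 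The extreme point $x$ associated to $B_0$ satisfies $x_j = 1$ for every $j \in B_0$, which makes every $\mathcal{T}_{B_0}$-path trivially $x$-augmenting (condition (i)); conditions (ii) and (iii) follow from $\{f_{i_0},\dots,f_n\} \subseteq \mathcal{A}_0 \subseteq \mathcal{A}_{B_0}$ and from the fact that removing any $f_j$ from $\mathcal{T}_{B_0}$ is the same cut as removing it from $\mathcal{T}$.

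\textbf{Step 2: Inductive step.} Assume that after $k$ iterations we have a Scarf pair $(B_k, O_k)$ with separator $i_k$ and $B_k$ is $i_k$-nice. Apply Lemma~\ref{lem:Out-Deg-1} to the iteration $(B_k, O_k) \to (B_{k+1}, O_k) \to (B_{k+1}, O_{k+1})$. If the entering ordinal column $j^* = 0$, then $O_{k+1} = B_{k+1}$ and the algorithm halts (by the definition of termination and by Lemma~\ref{lem:last-ordinal}). Otherwise, parts \ref{it:key-3} and \ref{it:key-4} of Lemma~\ref{lem:Out-Deg-1} guarantee that the separator of $(B_{k+1}, O_{k+1})$ is some $i_{k+1} > i_k$ and that $B_{k+1}$ is $i_{k+1}$-nice, so the inductive hypothesis is restored for the next iteration.

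\textbf{Step 3: Bounding the iteration count.} The sequence of separators $i_0 < i_1 < \cdots$ is strictly increasing and contained in $\{1, 2, \ldots, n\}$, since only blocks $S_1, \ldots, S_n$ exist. Therefore the sequence has length at most $n$, proving that Scarf's algorithm with the FFL rule terminates in at most $n$ iterations.

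\textbf{Main obstacle.} The principal difficulty has already been absorbed into Lemma~\ref{lem:Out-Deg-1}, which packages together the combinatorial analysis of how the FFL rule interacts with the depth-first-ordered arborescence to force the separator to advance. The remaining care is in the base case: one must carefully account for the auxiliary $0$-th row and column so that the initial basis truly satisfies the $i_0$-nice properties, and confirm that the definitions of the separator and of an $i$-nice basis are compatible with the presence of this artificial index.
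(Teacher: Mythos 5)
Your proof is correct and follows essentially the same approach as the paper: both establish that the initial basis $B_0$ is $i_0$-nice (the paper asserts this as a direct verification), then apply Lemma~\ref{lem:Out-Deg-1} inductively to show the separator strictly increases and hence is bounded by $n$. One small slip in your base case: not every $\mathcal{T}_{B_0}$-path is $x$-augmenting when $x\equiv 1$ on $\mathcal{A}_0$ (a path with a backward arc would fail), but the paths $P_{B_0}(v_{n+1},v)$ required by condition~(i) are directed paths from the root of the arborescence, hence all-forward, so the conclusion still holds.
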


\begin{proof}
We first verify that the initial bases $(B_0,O_0)$ satisfies the condition of Lemma~\ref{lem:Out-Deg-1},~i.e.~if the column $n+1=O_0-B_0$ is in $i$-th block, then $B_0$ is an $i$-nice basis.

Indeed, we have $\mathcal{A}_{B_0}=\mathcal{A}_0=\{f_1,\dots,f_n\}$ and $x_{f_j}=1$ for every $j\in[n]$. One can verify that for every $k\in[n]$, $B_0$ is an $k$-nice basis, thus in particular $B_0$ is an $i$-nice basis.
    
By Lemma~\ref{lem:Out-Deg-1}, every iteration moves the separator from $i$ to $i'$ with $i'>i$. Inductively applying this argument shows that the index of the separator strictly increases in each iteration. Thus, we have an upper bound of $n$ on the number of iterations.
\end{proof}

We now show that Scarf's algorithm will terminate in polynomial time by using Lemma~\ref{lem:Scarf-Linear-Time}. Remember that each iteration of Scarf's algorithm contains two pivots, cardinal pivot and ordinal pivot, where each single pivot can be achieved in polynomial time~\cite{scarf1967core}. Futhermore, we show the running time is indeed $O(nm)$ stated in Theorem~\ref{thm:main-arb-poly}.
\begin{proof}[Proof of Theorem~\ref{thm:main-arb-poly}.] 
    To embed the input into Scarf's algorithm with tuple $(A,b,C)$, we can construct the matrices $A,C$ in time $O(m)$. 

    The cardinal pivot with the implementation present in Algorithm~\ref{alg:cp} involves finding a path between two vertices in a given tree, and checking whether such path is $x$-augmenting or not. Since the length of such path is bounded by $n$, the running time is $O(n)$, and thus by Lemma~\ref{lem:Scarf-Linear-Time}, the total running time on cardinal pivots is $O(n^2)$.
    

    The ordinal pivot shown in Algorithm~\ref{alg:op} needs to find set $K$ where the ``candidate'' entering column belongs to, which needs at most $mn$ times of comparisons. However, according to Lemma~\ref{lem:Separator-Change} and Lemma~\ref{lem:Out-Deg-1}, for every iteration we change the separator of the Scarf pair, and such ordinal pivots are generally easier. In fact, we can rewrite ordinal pivot (Algorithm~\ref{alg:op}) in our problem as Algorithm~\ref{alg:op-arb}. Notice that, as $j^*$ is the maximizer of $c_{0,k}$ for $k\in K$, which is equivalent to say $j^*$ is the smallest index in $K$, since we have $c_{0,0}>c_{0,1}>\dots>c_{0,m}$. Therefore, we do not need to find the whole set $K$, instead we start from the block $S_{i+1}$ and from left to right search if there exists a column $j>j_r$ such that for all $\bar{\imath}\neq 0$, $c_{\bar{\imath},j}>u_{\bar{\imath}}^{O-i}$. Thus, in an iteration when we move the separator from $i$ to $i'$, we only search the columns between the block $S_{i+1}$ and $S_{i'}$, which means at most $m$ columns are searched among all iterations. Since for each searching round we need to compare the $n$ entries between two columns, the total running time for all ordinal pivots is $O(nm)$. 

    Therefore, the total running time of Scarf's algorithm with our implementation is bounded by $O(nm)$.
\end{proof}

\begin{algorithm}
\caption{Ordinal Pivot with Separator Change}\label{alg:op-arb}
\begin{algorithmic}
\State{Let $(O,i)$ be the current ordinal basis with separator $i$, associated with utility vector $u^O$.}
\State{$j_\ell\gets i$}\Comment{Leaving column is set to correspond to the arc $f_i$ (Lemma~\ref{lem:Out-Deg-1}(ii))}
\State{$i_\ell\gets i$}\Comment{The arc $f_i$ is $i$-disliked w.r.t.~$O$}
\State{$j_r\gets  \max O$}\Comment{Lemma~\ref{lem:Separator-Change}(i)}
\State{$i_r\gets 0$}\Comment{Lemma~\ref{lem:Separator-Change}(ii)}
\If{$\{j>j_r:c_{\bar{\imath},j}>u_{\bar{\imath}}^{O-i},\forall \bar{\imath}\neq 0\}=\emptyset$}
\State{$O'=O\cup\{0\}\setminus\{j_\ell\}$\textrm{ is a dominating basis for $(A,b,C)$.}}
\Else
\State{$j^*\gets\min\{j>j_r:c_{\bar{\imath},j}>u_{\bar{\imath}}^{O-i},\forall \bar{\imath}\neq 0\}$}\Comment{Decreasing order $c_{0,0}>c_{0,1}>\dots>c_{0,m}$}
\State{$O\gets O\cup\{j^*\}\setminus\{j_\ell\}$}
\State{Let $j^*$ belong to the block $S_{i'}$ in matrix $C$.}
\State{$i\gets i'$}
\EndIf
\end{algorithmic}
\end{algorithm}

\section{Non-integrality of the Fractional Stable Matching Polytope}\label{sec:negative}

A classical approach to stable matching problem in graphs is to describe the stable matching polytope~\cite{teo1998geometry}, that is, find an polyhedral description of the convex hull of all characteristic vectors of stable matchings. This approach has been successful for the classical marriage model~\cite{rothblum1992characterization,vate1989linear}, as well as some of its generalizations~\cite{faenza2023affinely,fleiner2003stable}. We construct a hypergraphic preference system $I=(H=(V,E),\succ)$ where $H$ is an interval hypergraph, on which the above classical approach does not succeed. This result provides evidence that the stable matching problem in arborescence hypergraphic preference system is technically challenging compared to the classical stable matching problem in bipartite graphs. 

We first define 
\begin{equation}
    P(I)=conv\left(\left\{x\in \{0,1\}^E:x\textrm{ is a stable matching} 
    \right\}\right),
\end{equation}
and
\begin{equation}
    Q(I)=\left\{x\in \mathbb{R}^E\,\vline
    \begin{array}{cc}
      x(\delta(i)) \le 1,  &  \forall i\in V, \\
      x(e^\succeq) \ge 1,   & \forall e\in E, \\
      0\le x_e \le 1, & \forall e\in E.
    \end{array}
    \right\},
\end{equation}
where $\delta(i)=\{e\in E:i\in e\}$ and $e^\succeq=\{e'\in  E:\exists i\in e, e'\succ_i e\}\cup\{e\}$. We call $P(I)$ the \emph{stable matching polytope} of $I$ and $Q(I)$ the \emph{fractional stable matching polytope} of $I$. When $H$ is a bipartite graph, it is known that $P(I)=Q(I)$~(see,~e.g.,~\cite{teo1998geometry}). In contrast, we show that this equality fails to hold for hypergraph preference systems where the hypergraph is an interval hypergraph. 
\begin{theorem}\label{thm:neg}
    There is a hypergraphic preference system $I=(H=(V,E),\succ)$ where $H$ is an interval hypergraph, such that the fractional stable matching polytope $Q(I)$ is not integral, thus $P(I)\neq Q(I)$.
\end{theorem}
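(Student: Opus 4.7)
To prove Theorem~\ref{thm:neg}, the plan is to exhibit an explicit interval hypergraphic preference system $I = (H=(V,E),\succ)$ for which the LP-relaxation $Q(I)$ has a fractional extreme point, and hence $P(I) \neq Q(I)$. First, I would fix a small vertex set, say $V=[n]$ for $n\in\{4,5\}$, and design the interval family $E$ together with the preferences $\succ$ so that (i)~no hyperedge is top-ranked by \emph{all} of its member vertices---otherwise the stability constraint $x(e^{\succeq}) \ge 1$ collapses to $x_e = 1$, which propagates integrality through the matching constraints---and (ii)~the tight stability constraints of several overlapping hyperedges interact in an ``odd-cycle''-like pattern reminiscent of the obstruction in the stable roommates problem.

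Next, I would enumerate all integer stable matchings $M_1, \dots, M_k$ of the system; Corollary~\ref{cor:Exist-SM} guarantees that $k \ge 1$. The candidate fractional vector $x^* \in [0,1]^E$ would typically be half-integral, with its support including at least one hyperedge $e^*$ that does not appear in any $M_i$. Feasibility $x^* \in Q(I)$ is verified one inequality at a time by checking $x^*(\delta(i)) \le 1$ for every $i \in V$, $x^*(e^{\succeq}) \ge 1$ for every $e \in E$, and $0 \le x^*_e \le 1$ for every $e \in E$.

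To conclude $x^* \notin P(I)$, it suffices to observe that $x^*_{e^*} > 0$ while $\chi_{M_i}(e^*) = 0$ for every $i \in [k]$: no convex combination $\sum_i \lambda_i\,\chi_{M_i}$ with $\lambda_i \ge 0$ and $\sum_i \lambda_i = 1$ can put positive weight on $e^*$. Combined with $x^* \in Q(I)$, this exhibits a point of $Q(I) \setminus P(I)$ and, a fortiori, implies that $Q(I)$ has a fractional extreme point, proving the theorem.

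The principal obstacle is engineering the preference profile. On one hand, whenever a hyperedge $e$ is top-ranked by every $i \in e$, the stability constraint forces $x_e = 1$ and then the matching constraints propagate integer values elsewhere. On the other hand, preferences that distribute tops too uniformly typically yield enough integer stable matchings for their convex hull to already exhaust $Q(I)$. The example must be delicately tuned so that the tight stability constraints of several overlapping hyperedges leave ``fractional slack'' in favor of some hyperedge $e^*$ that participates in no integer stable matching---exactly the phenomenon that drives non-integrality of the roommates LP and which is the natural source of non-integrality in the arborescence hypergraph setting.
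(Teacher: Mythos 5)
Your logical frame is sound: since $Q(I)\subseteq[0,1]^E$ is bounded and every integral point of $Q(I)$ is the characteristic vector of a stable matching, $Q(I)$ is integral if and only if $Q(I)=P(I)$, so exhibiting a point of $Q(I)\setminus P(I)$ does establish non-integrality. Your proposed certificate that $x^*\notin P(I)$---locating a hyperedge $e^*$ with $x^*_{e^*}>0$ that lies in no integer stable matching---is also a legitimate alternative to the paper's method, which instead proves $x$ is an extreme point of $Q(I)$ directly by exhibiting $|E|$ linearly independent tight constraints. Both routes would work if an example were in hand.

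The genuine gap is that no example is in hand. The theorem is an existence statement and its proof consists entirely of producing a concrete witness; your proposal replaces this with a description of the properties a witness ought to have (``design the interval family and preferences so that\dots'', ``the example must be delicately tuned so that\dots'') and stops precisely at the point you yourself identify as ``the principal obstacle.'' Without an explicit $(V,E,\succ)$, a specific $x^*$, the verification that $x^*\in Q(I)$, and the identification of $e^*$ (or of linearly independent tight constraints), nothing has been proved. Your guess that $n\in\{4,5\}$ suffices is also unsubstantiated: the paper's construction uses $|V|=9$ vertices and $9$ hyperedges, arranged in three overlapping triples that mimic an odd-cycle obstruction, and it is not clear a smaller interval instance exists (interval incidence matrices are totally unimodular, so the matching constraints alone cut out an integral polytope; the fractionality has to be forced by the stability constraints, which takes some room). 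In short, the strategy is reasonable but the proof is missing its entire content.
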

We give the following example that shows Theorem~\ref{thm:neg}.

\begin{example}\label{ex:Non-Integral-Ex}

\begin{figure}[h!]
    \centering
    \includegraphics[scale=1.4]{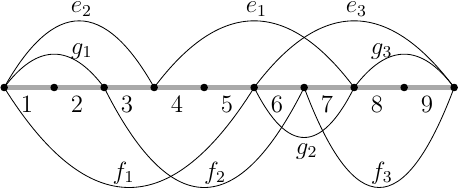}
    \caption{The underlying hypergraph in Example~\ref{ex:Non-Integral-Ex}. A node belongs to an edge iff the latter covers the former in this figure.}
    \label{fig:Non-Integral-Ex}
\end{figure}
    Let $I=(H=(V,E),\succ)$ where $V=\{1,2,\dots,9\}$, $E=\{e_1,e_2,e_3,f_1,f_2,f_3,g_1,g_2,g_3\}$ are shown in Figure~\ref{fig:Non-Integral-Ex}. The preference list is as follows:
    \begin{equation*}
        \begin{split}
            & 1: e_2 \succ_1 g_1 \succ_1 f_1 ,\\
            & 2: f_1 \succ_2 e_2 \succ_2 g_1 ,\\
            & 3: f_2 \succ_3 f_1 \succ_3 e_2 ,\\
            & 4: e_1 \succ_4 f_2 \succ_4 f_1 ,\\
            & 5: f_1 \succ_5 e_1 \succ_5 f_2 ,\\
            & 6: e_1 \succ_6 f_2 \succ_6 e_3 \succ_6 g_2 ,\\
            & 7: f_3 \succ_7 e_1 \succ_7 e_3 \succ_7 g_2 ,\\
            & 8: e_3 \succ_8 g_3 \succ_8 f_3 ,\\
            & 9: f_3 \succ_9 e_3 \succ_9 g_3 . 
        \end{split}
    \end{equation*}
    Consider the solution $x$ where $x(e_1)=x(e_2)=x(e_3)=0$, $x(f_1)=x(f_2)=x(f_3)=x(g_1)=x(g_2)=x(g_3)=\frac{1}{2}$. We show that $x$ is an extreme point of $Q(I)$.

    First, observe that $\{g_1,f_2,f_3\}$ and $\{f_1,g_2,g_3\}$ are two partitions of the interval, thus every $i\in I$ yields $x(\delta(i))=\frac{1}{2}+\frac{1}{2}=1$. One can check that the stability constraints are satisfied:

    \begin{equation*}
        \begin{split}
            & x(e_1^\succeq)=x(e_1)+x(f_1)+x(f_3)=1 ,\\
            & x(e_2^\succeq)=x(e_2)+x(f_1)+x(f_2)=1 ,\\
            & x(e_3^\succeq)=x(e_3)+x(e_1)+x(f_2)+x(f_3)=1 ,\\
            & x(f_1^\succeq)=x(f_1)+x(e_2)+x(g_1)+x(f_2)+x(e_1)=\frac{3}{2}>1 ,\\
            & x(f_2^\succeq)=x(f_2)+x(e_1)+x(f_1)+x(e_1)=1 ,\\
            & x(f_3^\succeq)=x(f_3)+x(e_3)+x(g_3)=1 ,\\
            & x(g_1^\succeq)=x(g_1)+x(e_2)+x(f_1)+x(e_2)=1 ,\\
            & x(g_2^\succeq)=x(g_2)+x(e_1)+x(f_2)+x(e_3)+x(f_3)=\frac{3}{2}>1 ,\\
            & x(g_3^\succeq)=x(g_3)+x(e_3)+x(f_3)=1 . 
        \end{split}
    \end{equation*}

    Now, pick the tight constraints corresponding to $e_1^\succeq, e_2^\succeq, e_3^\succeq$, $\delta(1),\delta(6),\delta(9)$ and restrict them to the positive iariables $x(f_i),x(g_i)$ with $i=1,2,3$, we find a linear system

\begin{displaymath}
\begin{pNiceMatrix}[first-row,first-col]
     & f_1 & f_2 & f_3 & g_1 & g_2 & g_3 & e_1 & e_2 & e_3 \\
    e_1^\succeq & 1 &   & 1 &   &   &   & 1 &   &   \\
    e_2^\succeq & 1 & 1 &   &   &   &   &   & 1 &   \\
    e_3^\succeq &   & 1 & 1 &   &   &   & 1 &   & 1 \\
    \delta(1)   & 1 &   &   & 1 &   &   &   & 1 &   \\
    \delta(6)   &   & 1 &   &   & 1 &   & 1 &   & 1 \\
    \delta(9)   &   &   & 1 &   &   & 1 &   &   & 1 \\
    \hdashline
    x(e_1)      &   &   &   &   &   &   & 1 &   &   \\
    x(e_2)      &   &   &   &   &   &   &   & 1 &   \\
    x(e_3)      &   &   &   &   &   &   &   &   & 1  
\end{pNiceMatrix}
x=
\begin{pNiceMatrix}[first-row]
     RHS \\
     1 \\
     1 \\
     1 \\
     1 \\
     1 \\
     1 \\
     0 \\
     0 \\
     0 
\end{pNiceMatrix},
\end{displaymath}
where the left-hand-side coefficient matrix has full rank. In fact, it is sufficient to check the upper-left $6\times 6$ matrix has full rank. Thus, we find $9$ linearly independent tight constraints that uniquely determine $x$.

Therefore, $x$ is indeed an extreme point of $Q(I)$, however it is not integral.
\end{example}

\section{Conclusions and Future Work}

We showed that Scarf's algorithm converges in polynomial time and returns an integral stable matching on arborescence hypergraphic preference systems. Our result is the first proof of polynomial-time convergence of Scarf's algorithm on hypergraphic stable matching problems. We note that some of our results hold for hypergraphs that are more general than arborescence hypergraphs. We mention three directions for future research: Firstly, it would be interesting to generalize our approach to show  polynomial-time convergence of Scarf's algorithm for more general classes of hypergraphs, such as network hypergraphs. Secondly, it would be insightful to find an interpretation as a purely combinatorial algorithm of our implementation of Scarf's algorithm for arborescence hypergraphs, if any such interpretation exists. Thirdly, is it possible to verify whether a given network hypergraph is an arborescence hypergraph in polynomial-time? More generally, is it possible to construct a principal tree associated with a given network hypergraph with the least number of sources? We leave the above as open questions.

\medskip

\noindent {\bf Acknowledgments.} Yuri Faenza and Chengyue He are supported by the NSF Grant 2046146 and by a Meta Research Award. Karthekeyan Chandrasekaran is supported in part by the NSF grant CCF-2402667. Part of this work was completed while the first three authors were attending the semester program on \emph{Discrete Optimization: Mathematics, Algorithms, and Computation} at ICERM, Brown University, USA.

\bibliographystyle{abbrv}
\bibliography{ref}

\end{document}